\documentclass[envcountsame,envcountsect]{svmult}

\usepackage{amsmath,amssymb,calc,color}
\usepackage{graphicx}
\usepackage{epsfig,psfrag}
\usepackage{subfig}
\usepackage{mathptmx,helvet,makeidx,multicol,footmisc}
\usepackage{dotlessj}


\let\a=\alpha
\let\b=\beta

\let\s=\sigma
\let\t=\tau

\let\S=\Sigma

\newcommand{\del}{\partial}

\newcommand{\bea}{\begin{eqnarray}}
\newcommand{\eea}{\end{eqnarray}}
\newcommand{\bean}{\begin{eqnarray*}}
\newcommand{\eean}{\end{eqnarray*}}
\newcommand{\be}{\begin{equation}}
\newcommand{\ee}{\end{equation}}
\newcommand{\ben}{\begin{equation*}}
\newcommand{\een}{\end{equation*}}
\def\ba#1\ea{\begin{align}#1\end{align}}
\def\ban#1\ean{\begin{align*}#1\end{align*}}
\newcommand{\bma}{\begin{pmatrix}}
\newcommand{\ema}{\end{pmatrix}}

\newcommand{\hlf}{\frac{1}{2}}


\newcommand{\cD}{\mathcal{D}}

\newcommand{\cH}{\mathcal{H}}

\newcommand{\cM}{\mathcal{M}}

\newcommand{\cO}{\mathcal{O}}

\newcommand{\cR}{\mathcal{R}}

\newcommand{\cU}{\mathcal{U}}


\newcommand{\IH}{\mathbb{H}}







\newcommand{\La}{\Lambda}
\newcommand{\la}{\lambda}

\newcommand{\G}{\Gamma}
\newcommand{\e}{\epsilon}
\newcommand{\ve}{\varepsilon}

\newcommand{\dd}{\delta}

\newcommand{\m}{\mu}
\newcommand{\n}{\nu}




\newcommand{\om}{\omega}
\renewcommand{\t}{\theta}

\newcommand{\C}[1]{$(\ref{#1})$}


\def\ie{{i.e.}}

\def\eg{{e.g.}}

\newlength{\bredde}
\def\slash#1{\settowidth{\bredde}{$#1$}\ifmmode\,\raisebox{.15ex}{/}
\hspace*{-\bredde} #1\else$\,\raisebox{.15ex}{/}\hspace*{-\bredde}
#1$\fi}

\def\Im{{\rm Im ~}}
\def\Re{{\rm Re ~}}
\newcommand{\bra}[1]{\langle{#1}|}
\newcommand{\ket}[1]{|{#1}\rangle}
\newcommand{\braket}[2]{\langle{#1}|{#2}\rangle}
\newcommand{\ena}{\end{eqnarray}}
\newcommand{\beqa}{\begin{eqnarray}}
\newcommand{\eeqa}{\end{eqnarray}}

\def\G{\Gamma}

\def\bes #1\ees{\begin{split}#1\end{split}}

\renewcommand{\b}{\beta}
\newcommand{\g}{\gamma}

\newcommand{\ibar}{{\bar \imath}}
\newcommand{\jbar}{\jmath\hskip-0.25mm \bar{}\hskip.2mm}




\newfont{\goth}{ygoth.tfm scaled 1200}                   

\def\a{\alpha}
\def\b{\beta}
\def\e{\epsilon}
\def\th{\theta}

\def\g{\gamma}

\def\m{\mu}
\def\n{\nu}

\def\s{\sigma}
\def\t{\tau}

\def\G{\Gamma}

\def\S{\Sigma}
 \numberwithin{equation}{section}

\def\1{{(1)}}
\def\2{{(2)}}
\def\3{{(3)}}

\def\1{{\bf 1}}
\def\a{{\alpha}}

\def\R{{\mathbb R}}
\def\Z{{\mathbb Z}}
\def\CC{{\mathbb C}}
\renewcommand{\P}{{\mathbb P}}

\def\1{{\bf 1}}
\def\3{{\bf 3}}
\def\7{{\bf 7}}
\def\2{{\bf 2}}
\def\8{{\bf 8}}




\begin{document}
\title*{Mirror Symmetry in Physics: The Basics}
\author{Callum Quigley}
\institute{Callum Quigley \at Department of Mathematical and Statistical Sciences,\\  University of Alberta, Edmonton, AB T6G 2G1, Canada, \email{cquigley@ualberta.ca}}

\maketitle

\abstract*{These notes are aimed at mathematicians working on topics related to mirror symmetry, but are unfamiliar with the physical origins of this subject.
We explain the physical concepts that enable this surprising duality to exist, using the torus as an illustrative example.
Then, we develop the basic foundations of conformal field theory so that we can explain how mirror symmetry was first discovered in that context.
Along the way we will uncover a deep connection between conformal field theories with (2,2) supersymmetry and Calabi-Yau manifolds.
(Based on lectures given during the \emph{Thematic Program on Calabi-Yau Varieties: Arithmetic, Geometry and Physics} at the Fields Institute in Toronto,
October 10--11, 2013.)}

\abstract{These notes are aimed at mathematicians working on topics related to mirror symmetry, but are unfamiliar with the physical origins of this subject.
We explain the physical concepts that enable this surprising duality to exist, using the torus as an illustrative example.
Then, we develop the basic foundations of conformal field theory so that we can explain how mirror symmetry was first discovered in that context.
Along the way we will uncover a deep connection between conformal field theories with (2,2) supersymmetry and Calabi-Yau manifolds.
(Based on lectures given during the \emph{Thematic Program on Calabi-Yau Varieties: Arithmetic, Geometry and Physics} at the Fields Institute in Toronto,
October 10--11, 2013.)}


\section{Introduction}\label{sec:intro}

String theory lies right at the interface of physics and mathematics. Researchers on both sides have consistently benefited from close interaction with one another, with advances in one field unlocking deep secrets in the other.
Perhaps no other topic exemplifies this fruitfulness like mirror symmetry does.

In physical terms, mirror symmetry is an example of a duality, meaning an exact equivalence between two seemingly different physical systems. The advantage of dual systems is that often when one is difficult to compute with, the other is simple, which makes dualities an extremely powerful tool.
For our purposes, we can think of mirror symmetry as an equivalence between the physics of string theory on two different Calabi-Yau manifolds.
If $X$ and $Y$ are Calabi-Yau manifolds related by mirror symmetry, we will call $(X,Y)$ a {\it mirror pair}. Roughly speaking, $(X,Y)$ form a mirror pair if the (complexified) K\"ahler structure of $X$ is equivalent to the complex structure of $Y$, and vice-versa.
Since Calabi-Yau manifolds typically come in large families, with many tunable parameters ({\it moduli}), what we really mean is that mirror symmetry is a equivalence of families of Calabi-Yaus and the corresponding moduli spaces match.
One immediate consequence is that if $(X,Y)$ form mirror pair of Calabi-Yau $n$-folds, then their Hodge numbers are related: $h^{1,1}(X) = h^{n-1,1}(Y)$ and $h^{n-1,1}(X) = h^{1,1}(Y)$. In particular, $X$ and $Y$ (typically) are topologically distinct.
The fact that string theory on two manifolds with different topologies can give the exact same results is both remarkable, and surprising.

Calabi-Yau manifolds first gathered widespread attention in the physics community after~\cite{Candelas:1985en}, where it was shown that string theory ``compactified" on Calabi-Yau three-folds can give rise to realistic models of our world.\footnote{The brief history of mirror symmetry that we are about to cover is far from complete, and the references we provide are far from exhaustive.}
Such string theories are described by so-called $N=(2,2)$ superconformal field theories in two dimensions, which will be discussed in great detail in these notes.
Mirror symmetry was first noticed in the study of certain $(2,2)$ models~\cite{Dixon:1987bg}, where it was noted that the underlying geometry could not be uniquely determined from the data of the field theory.
Instead, there was an ambiguity in $h^{1,1}\leftrightarrow h^{2,1}$.
Soon afterwards, this was found confirmed in a much larger class of $(2,2)$ models~\cite{Lerche:1989uy}, and it was conjectured to be a generic feature (subject to certain natural conditions).
In~\cite{Candelas:1989hd}, thousands of Calabi-Yau three-folds were constructed and analyzed, and most were found to come in pairs that differed under the interchange  $h^{1,1}\leftrightarrow h^{2,1}$,
and methods for constructing mirror pairs were developed~\cite{Greene:1990ud,Batyrev:1994hm,Batyrev:1994pg}.
This gave tremendous evidence for the mirror conjecture, though a full proof was still far off. A physics based ``proof" of mirror symmetry would not appear until~\cite{Hori:2000kt}.

Mirror symmetry entered the mathematics community when the authors of~\cite{Candelas:1990rm}\ successfully used mirror symmetry to predict the numbers of rational curves in certain Calabi-Yau manifolds, thereby solving long-standing problems in enumerative geometry.\footnote{As the story goes, these results were presented at a conference at the MSRI in Berkeley where it was pointed out that one of their numbers was in disagreement with recent computations of mathematicians (using traditional and more rigorous techniques). However, an error in the computer code of the mathematicians was soon discovered, and the predictions on mirror symmetry were verified~\cite{Shape}.}
Most mathematicians working on this topic confine themselves to a narrower, though sharper, definition of mirror symmetry, due largely to the work of Witten.
In~\cite{Witten:1989ig}, extending his earlier ideas of~\cite{Witten:1988xj}, Witten introduced the notion of topological string theory: a simplified version of the full theory.
In~\cite{Witten:1991zz}\ and~\cite{Vafa:1991uz}, it was shown that a sharper version of the mirror conjecture exists in the topological theories, and this is usually taken as the mathematical definition of mirror symmetry.
Building on work in~\cite{Kontsevich:1994na}, this refined (mathematical) mirror conjecture was proven in~\cite{Givental96,1997alg.geom..1016G}, and then in greater detail in~\cite{1997alg.geom.12011L,1999math......5006L,1999math.....12038L,2000math......7104L}.

Aside from possible passing remarks, we will not discuss the topological string formulation, open string mirror symmetry~\cite{Witten:1992fb}, homological mirror symmetry~\cite{Kontsevich:1994dn}, the SYZ conjecture~\cite{syz}, heterotic mirror symmetry~\cite{Melnikov:2012hk}, or any of the other more modern directions this subject has developed.
These notes will focus exclusively on the original (more physical) formulation of the mirror conjecture from~\cite{Lerche:1989uy}.
However, see~\cite{Clader:2014kfa}\ for a recent review of several of the more mathematical formulations of mirror symmetry.
The rest of these notes are organized as follows.
In Sect.~\ref{sec:torus}, using the simple example of a torus, we explain how string theory modifies our conventional notions of geometry so that an equivalence as absurd sounding as mirror symmetry could ever possibly be true.
This worked example is presented rather informally (\ie\ in the ``physics style"), as are other examples in subsequent sections.
In Sect.~\ref{sec:conf}, we take a detour to introduce the basic fundamentals of conformal field theory, and in an attempt to appeal to the intended mathematical audience, we have tried our best as to present the material in a familiar manner (definitions, propositions, theorems, proofs, etc.).
We hope the reader will forgive any of this simple physicist's gross misuse of these structures. Readers already familiar with these concepts may wish to skip, or skim through, this second section. Finally, in Sect.~\ref{sec:SCFT}\ we study supersymmetric conformal field theories, which is the setting where mirror symmetry first arose.
We will explain how the algebraic structures of these field theories share many properties with Calabi-Yau manifolds, and how the profound implications of mirror symmetry arise from a simple ambiguity in the physics.

\begin{acknowledgement}
I would like to thank all the organizers of this program for the invitation to give these lectures, especially Noriko Yui for her immense patience as I prepared these notes well past the deadlines.
Also, I would like to thank Vincent Bouchard, Peter Overholser, and Johannes Walcher for helpful discussions, and Thomas Creutzig for many discussions and feedback on earlier drafts.
This work is supported by a PIMS postdoctoral fellowship and an NSERC Discovery Grant.
\end{acknowledgement}

\section{Warm-up: The Torus}\label{sec:torus}
To understand how string theory on very different looking backgrounds can give rise to identical physics, it is best to start with an simple example.
Although the most interesting examples of mirror symmetry occur for Calabi-Yau manifolds of complex dimension three (and higher, though $K3$ surfaces have their own interesting features),
it turns out that the simplest possible Calabi-Yau, the complex torus or elliptic curve, already demonstrates the essential features with minimal extra complications.
In this section we will present a rather informal (by most mathematicians' standards) discussion of mirror symmetry for the torus, emphasizing some of the physical features of string theory on this background that make mirror symmetry possible.
We begin in Sects.~\ref{ss:first}\ and~\ref{ss:complex}\ with a review of the classical moduli spaces of complex tori. We will see that mirror symmetry is simply not feasible in this restricted setting.
However, by introducing novel concepts from string theory, in Sects.~\ref{ss:Kahler}\ and~\ref{ss:Tdual}, we will see how such an equivalence can arise.
We will summarize our heuristic discussion of mirror symmetry in Sect.~\ref{ss:summary}, with an aim to generalize these results to higher dimensional Calabi-Yau manifolds.
The material we are about to present is quite standard, and can be found in most modern string theory textbooks (such as~\cite{Polchinski:1998rq,Becker:2007zj}).

\subsection{A First Look at the Moduli}\label{ss:first}
Let's start with the very basics. When we first learn about tori, we are told to think of the surface of a donut.
This certainly makes the topology of $T^2$ evident, but as an object embedded in $\R^3$ it obscures the fact that $T^2$ is actually flat.
When we are a little older and wiser, we are told instead to think of the torus as a square with opposite sides identified, as in Fig.~\ref{fig:T2}.
\begin{figure}[h]
\centering
\includegraphics[trim= 0 500 0 75 ,clip,scale=0.5]{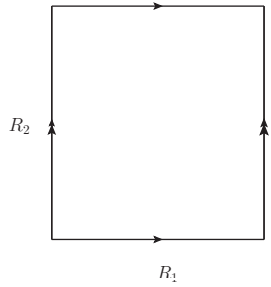}
\caption{\textit{A torus with sides of length $R_1$ and $R_2$}}
\label{fig:T2}
\end{figure}
This makes the flatness evident, as well as its product structure: $T^2\simeq S^1_{R_1}\times S^1_{R_2}$, where $S^1_{R}$ is a circle of radius $R$. A rather obvious, but very important, fact is that there is a (flat) torus for every choice of $R_{1,2}>0$.
We say that $R_1$ and $R_2$ are the {\it moduli} of the torus, meaning they are the parameters that we need to very in order to sweep out the entire family of possible tori. We call this space of possible tori the {\it moduli space} of the torus, which we will denote by $\cM(T^2)$. Based on our discussion so far, you might conclude that $\cM(T^2) = \R_+\times\R_+$, but this is not quite right as we will now explain.

Instead of $R_1$ and $R_2$, it is helpful to parameterize the torus in terms of the product and ratio of these radii. We define
$$
A = R_1R_2,\qquad \t_2 = R_2/R_1,
$$
where $A$ gives the total area of the torus, and $\t_2$ controls its shape. (The origin of the nomenclature $\t_2$ will be clear in a moment.) In Fig.~\ref{fig:deforms}\ we depict independent deformations of $A$ and $\t_2$ in two cases: $(a)~(R_1,R_2)\rightarrow(2R_1,2R_2)$ and $(b)~(R_1,R_2)\rightarrow(\hlf R_1,2R_2)$.
It is a general feature of Calabi-Yau manifolds that the moduli can be organized into two classes of these types: namely, {\it K\"ahler deformations} that control sizes, and {\it complex structure deformations} that control shapes. The total moduli space of any Calabi-Yau $X$ factorizes into the product of moduli spaces for K\"ahler and complex structure deformations:
$$
\cM(X) = \cM_K(X)\times\cM_{cs}(X).
$$
\begin{figure}[h]
\centering
\subfloat[][]{
\includegraphics[trim= 0 450 50 50 ,clip,scale=0.27]{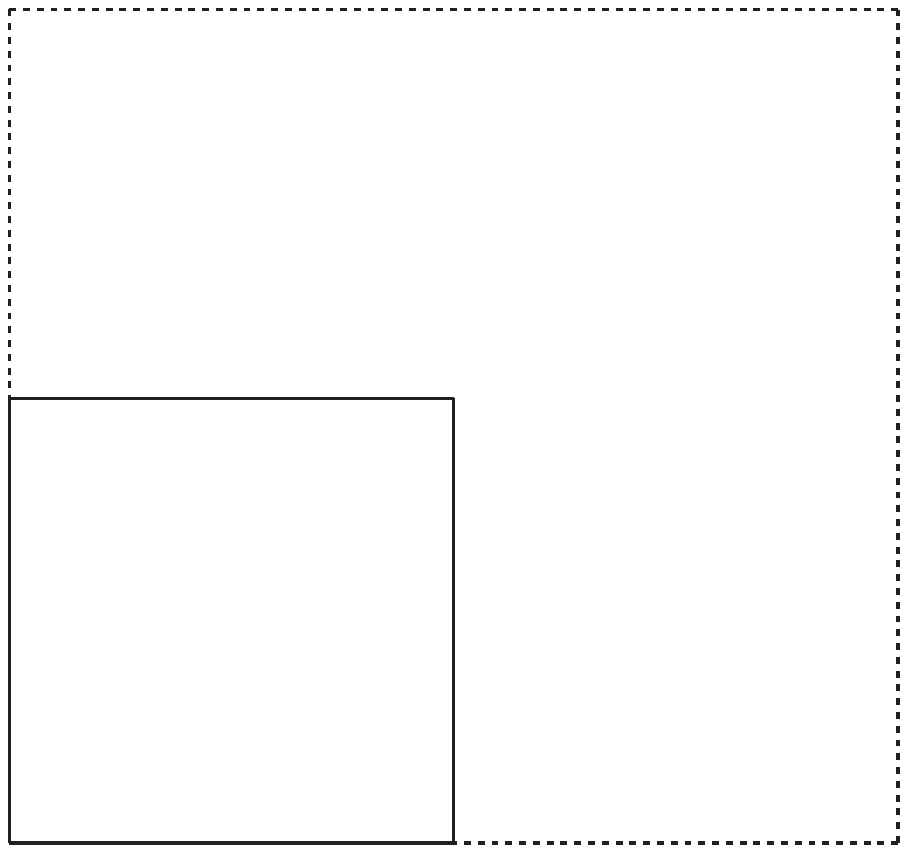}
\label{fig:Size}}
\
\subfloat[][]{
\includegraphics[trim= 50 450 50 0 ,clip,scale=0.27]{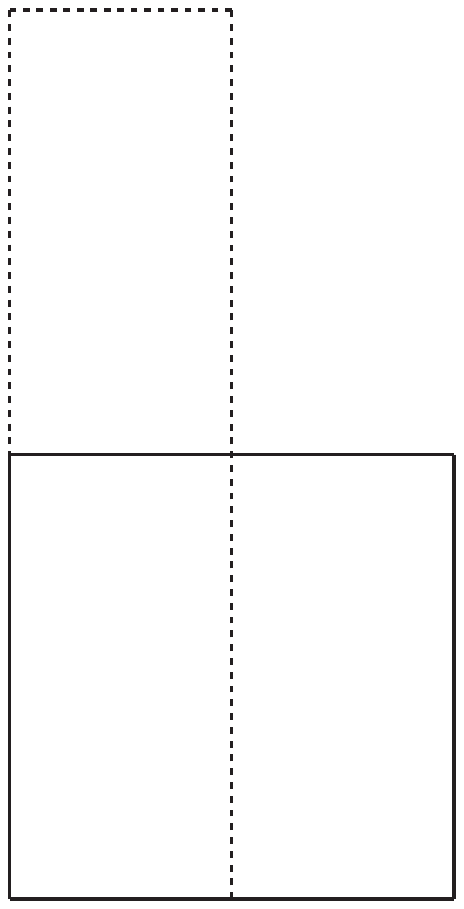}
\label{fig:shape}}
\caption{\textit{(a) K\"ahler deformations change the size of the torus, but leave its shape invariant\
(b) Complex structure deformations preserve the area, but change the overall shape}}
\label{fig:deforms}
\end{figure}
This is something we have already seen for the torus, where we argued that each factor is just $\R_+$. The claim of mirror symmetry is that these two factors are interchanged under the duality map. That is, if $(X,Y)$ form a mirror pair, then
$$
\cM_K(X) = \cM_{cs}(Y),\quad {\rm and}\quad \cM_{cs}(X)=\cM_K(Y).
$$
Note that this leaves the total moduli space invariant, \ie~ $\cM(X)=\cM(Y)$.
This certainly makes sense for the torus, since it is the unique Calabi-Yau in one (complex) dimension. Thus, if $X=T^2$ then its mirror must be another torus, $Y=\check T^2$, and the mirror map simply interchanges the factors of $\R_+$ in the moduli space. Everything works out fine! End of story, right?

\subsection{The Complex Structure Modulus}\label{ss:complex}
Of course, we have been far too glib in our discussion so far. As we all know, a torus carries a natural complex structure whose deformations are parameterized by a complex quantity, not just the real modulus $\t_2$ that we have discussed so far.
Indeed, a convenient way to realize this structure is to construct the torus as a quotient $\CC/\La$, where $\La \simeq \Z\oplus\t\Z$ is a rank two lattice and $\t=\t_1+\I\t_2$ is a complex parameter.
So more generally, instead of Fig.~\ref{fig:T2}\ we should think of tori as parallelograms with opposite sides identified, as in Fig.~\ref{fig:lattice}.
However, $\cM_{cs}(T^2)$ is not given by all $\t\in\CC$. First of all, we can restrict to the upper half plane:
$$
\mathbb{H} :=\{\t\in\CC{\big|}\Im\t>0\},
$$
since complex conjugation of $\t$ produces isomorphic tori (and if $\t$ is real, then the torus degenerates), but we can restrict $\t$ even further.
Since we have identified all points $z\in\CC$ under $z\simeq z+m +n\tau$, for all integers $m$ and $n$, then it is clear that $\t$ and $\t+1$ produce the same lattice.
\begin{figure}[h]
\centering
\includegraphics[trim= 0 500 0 50 ,clip,scale=0.5]{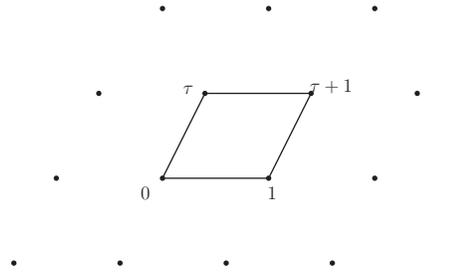}
\caption{\textit{A torus in the complex plane, constructed as $\CC/(\Z\oplus\t\Z)$}}
\label{fig:lattice}
\end{figure}
Similarly, up to an overall scaling\footnote{Note that rescaling $z$ is equivalent to scaling the total area, $A$, and therefore is  a K\"ahler (not a complex structure) deformation.} of $z$, $\t$ and ${-1/\t}$ also define equivalent tori. Together, these actions
$$
T:~\t\mapsto \t+1,\quad{\rm and}\quad S:~\t\mapsto-\frac{1}{\t},
$$
generate the {\it modular group} $\G = SL(2,\Z)$, which in general acts as
$$
\G: \t\mapsto \frac{a\t+b}{c\t+d}
$$
for all integers $a,b,c,d$ such that $ad-bc=1$. Thus, the orbit under $\G$ of any given $\t\in\IH$  consists of an infinite number of points, all corresponding to tori with identical complex structures.
To label the set of inequivalent complex structures we should consider the (right) coset $\G\backslash\IH$, which we may take to be the region
$$
F_0 = \left\{ \t\in\IH\left|~|\t|\geq1,~  -\tfrac{1}{2}<\Re\t\leq\tfrac{1}{2}  \right. \right\}.
$$
The space $F_0$ is often called the {\it fundamental domain of $\G$}, and it is but one representative of the coset $\G\backslash\IH$. Any $\t\in\IH$ can be mapped (via $\G$) to a unique $\t_0\in F_0$, and similarly $\IH$ can be tiled by the (infinite number of) images of $F_0$ under $\G$.
Thus we learn that the full moduli space of complex structures for the torus is actually
$$
\cM_{cs}(T^2) = \G\backslash\IH \simeq F_0.
$$
However, now we see a problem. The moduli space of K\"ahler deformations for the torus is just $\cM_K(T^2) = \R_+$, corresponding to the total area, but this is not even the same dimension as $\cM_{cs}(T^2)$. How can these two spaces be interchanged under mirror symmetry when they are so different?

\subsection{The Complexified K\"ahler Modulus}\label{ss:Kahler}
In fact, as a statement about classical geometry the claims of mirror symmetry are blatantly false: the moduli spaces $\cM_{cs}(T^2)=\G\backslash\IH$ and $\cM_K(T^2)=\R_+$ are simply not of the same dimension, and there is no symmetry which could ever interchange them. However, our interest lies not in the standard geometry of points that has been studied for millennia, but rather in its `stringy' generalization (which is a subject only decades old).
We will now explain how a given geometry can appear very differently when probed by a string as opposed to a point particle. It is only in this generalized setting, where standard geometric notions begin to break down, that mirror symmetry even begins to makes sense.

Earlier we saw that there is a natural complexification of the modulus $\t_2=R_2/R_1$, as in Fig.~\ref{fig:T2}, which comes from including the effects of tilting the torus, as in Fig.~\ref{fig:lattice}. What we need now is an analogous complexification of the area $A$, but what could that mean? To make sense of this, it helps to first make a more slightly abstract definition of the K\"ahler moduli. We write
$$
A = \int_{T^2} \omega,
$$
where $\omega\in H^2(T^2)$ is the so-called K\"ahler form, which is related to the metric $g$ by the (almost) complex structure $J$: $g(x,y) = \omega(x,Jy)$. In local holomorphic coordinates, which physicists love to use so much but mathematicians are not fond of, we can write $\omega = \I g_{z\bar z} \D z \D \bar z$ with $g_{z\bar z}=A\in\R_+$.
So what we are looking for is a natural partner to the Hermitian two-form $\omega$, which effectively turns $g_{z\bar z}$ into a complex parameter. Fortunately, string theory provides exactly such an object, known as the $B$-field, and it naturally generalizes the point particle's gauge field.

Recall that the motion of a point-particle can be described by a set of embedding functions,
$$
x^\m(\xi):\g\hookrightarrow X,
$$
which map a curve $\g$ into the space $X$. Here $\xi$ is a coordinate that parameterizes the curve $\g$. The classical trajectory of a particle is found by extremizing the action
$$
S_0[x] = \int_\g \D\xi \sqrt{g(\del_\xi x,\del_\xi x)},
$$
whose solutions are geodesics on $X$. If we wish to couple this particle to some (abelian) vector bundle $V$ over $X$, we must introduce a connection one-form $A$, where we identify $A\sim A'= A+d\la$ since they lead to equivalent curvatures on $V$: $F=dA =dA'$. Now the motion of this charged particle is given by
$$
S[x] = S_0[x] +\int_\g \D\xi\ \del_\xi x^\mu A_\m(x) = S_0[x] +\int_\g x^*(A),
$$
where we see that the interaction term, between the particle and $A$, is nothing more than the pullback of $A$, from $X$ to $\g$, by the maps $x^\m$.

Now we generalize the previous discussion to the case of strings. A string propagating in time through $X$ sweeps out a two dimensional surface $\S$, so we must have embedding functions
$$
x^\m(\xi,\zeta): \S\hookrightarrow X.
$$
Let $S_0'[x]$ be the analog of $S_0[x]$ that extremizes the surface area of $\S$ in $X$. Instead of a one-form connection, the natural object to couple to the string is a two-form, $B$, by pulling it back to $\S$:
\be
S[x] = S'_0[x] + \int_\S x^*(B).\label{eqn:Baction}
\ee
Like the gauge connection $A$, $B$ also possess a gauge invariance, $B\sim B'= B+d\la$, where now $\la$ is a one-form, because this leads to equivalent curvatures $H:=dB = dB'$.
In fact, for our purposes we can impose that $B$ must also be closed, and therefore, just like $\omega$, $B \in H^{2}(X)$. The reason is that if $H=dB\neq0$, then $X$ could not be Calabi-Yau.\footnote{Physically, this follows from the fact that $H=dB\neq0$ would generate a finite energy density on $X$ which would preclude the possibility of a Ricci-flat solution to the Einstein equations. Mathematically, it can be shown that $H\neq0$ requires that $X$ is not K\"ahler, and so in particular will not be Calabi-Yau~\cite{strominger-torsion}.}

This gives us our natural partner for $\omega$, namely $B$, and we can form the so-called {\it complexified K\"ahler form}
$$
B+\I\omega = (b+\I A)\D z\D\bar z \in H^{1,1}(T^2,\CC),
$$
which we can integrate over the torus to obtain the complex modulus
$$
\varrho = \int_{T^2}(B+\I\omega) = b+\I A.
$$
This is good news, since now the two moduli spaces $\cM_K(T^2)$ and $\cM_{cs}(T^2)$ will at least be of the same dimension, namely complex dimension one. Furthermore, it makes perfect sense to restrict to $\varrho\in \IH$, since the area $A$ must be positive. However, in order for mirror symmetry to work, $\varrho$ should also be invariant under the modular group $\G=SL(2,\Z)$. Let's check if this is true.

An important point about the physics of the $B$-field, or more precisely its integrated value $b=\Re\varrho$, is that it need not be single-valued. The only thing that needs to be well-defined is the quantum mechanical path integral, which can formally be written as
\be\label{eqn:pathint}
Z_{T^2} = \int \left[{\cD}x\right]\ \E^{\I S[x]} = \int \left[{\cD}x\right]\ \exp(\I S'_0[x] + \I \int x^*(B)+\ldots) = \int \left[{\cD}x\right]\ \E^{2\pi \I b} \ldots\ ,
\ee
where $\left[{\cD}x\right]$ is a formal integration measure over the set of all embedding functions $x^\mu(\xi,\zeta)$, and we have used the action~\C{eqn:Baction}\ for the phase factor.
Mathematicians often cringe when the see expressions such as~\C{eqn:pathint}, since the measure $\left[{\cD}x\right]$ is not a well-defined quantity in any sensible manner.
Questions of well-posedness aside, the important point is the path integral (and therefore any relevant physical quantity) only depends on the $B$-field through the quantity $\exp(2\pi \I b)$.
In particular, integral shifts of $b$ leave the path integral (and therefore all physical quantities) invariant, and so
$$
T:\varrho \mapsto \varrho+1
$$
is a symmetry of the theory. More great news! All that remains to show now is that the $S$ transformations, which would send $\varrho$ to $-(1/\varrho)$, leaves the physics of the string invariant as well. Here we encounter a surprise: suppose we set $b=0$, then we have
\be
S:\, \varrho=\I A\, \mapsto\, -1/\varrho = \I/A.\label{eqn:S}
\ee
Invariance of $\varrho$ under $SL(2,\Z)$ requires that a string treats two tori with inversely related areas as being completely equivalent! This is in stark contrast to a point set description of the same geometries, where such an equivalence is purely nonsensical. Nevertheless, to a string those two spaces are indistinguishable, as we will now explain.

\subsection{T-duality}\label{ss:Tdual}
In fact, a variant of this inversion symmetry, $A\leftrightarrow 1/A$, already appeared (though not explicitly) back in Sect.~\ref{ss:first}. Recall that in our simplified description of the torus, we had only two real moduli: $A=R_1 R_2$ and $\t_2=R_2/R_1$, which each took values in $\R_+$.
Mirror symmetry exchanges these two moduli, with the consequence that the two tori with moduli $(A,\t_2)$ and $(\t_2,A)$ should lead to identical physics. At this point, the reader should notice the following peculiar fact:
$$
A  \leftrightarrow \t_2  \quad \Leftrightarrow \quad R_1 \leftrightarrow 1/R_1,
$$
which means that string theory on a circle of radius $R$ should be identical to string theory on a circle of radius $1/R$. This rather surprising fact has come to be known as {\it T-duality}. We do not have the time or space to fully derive T-duality as a symmetry of string theory here, in the rest of this section we will present one (fairly compelling) piece of evidence to support the notion. More details will also be presented in Sect.~\ref{ss:examples}.

Our goal will be to demonstrate that the (energy) spectrum of a (relativistic, quantum) string compactified on a circle of radius $R$ is the same as for a circle of radius $1/R$. While this fact alone does not constitute a proof, it will certainly lend credence to the claim that this is a symmetry of the full theory. Before considering the string, let us return once again to the (relativistic, quantum) point particle. The spectrum of such a particle, moving through the flat $d$ dimensional spacetime $\R^{1,d-1}$, is given by Einstein's famous relation
$$
E^2_{\rm part} = p^2 +m^2,
$$
where $p$ is the momentum of the particle, $m$ is the mass, and (like all good particle physicists) we have set the speed of light $c=1$.\footnote{Reinstating factors of $c$, this becomes $E^2 =c^2p^2 + m^2 c^4$, or at zero momentum simply $E=mc^2$.}
The mass is therefore any residual rest energy of the particle, be it an intrinsic mass or the result of its internal structure.\footnote{For example a proton is composed of three {\it quarks}, each carrying their own intrinsic mass, but together they are responsible a mere $1\%$ of a the proton's mass. The other $99\%$ arises from the internal binding energy (carried by {\it gluons}) that keeps the quarks from flying apart.}
A crucial feature of quantum mechanics is that when the particle propagates some distance, $\varDelta x$, it's wavefunction acquires a phase, $\E^{\I p\varDelta x}$, where (again, like every sensible physicist) we have set Plank's constant $\hbar=1$.
Now suppose we compactify on a circle of radius $R$. Every time the particle goes around this compact direction, it acquires the phase $\E^{2\pi \I pR}$. In order for these phases not to destructively interfere we must impose the quantization condition $p = n/R$, for $n\in\Z$. So, ignoring the (continuous) momentum in the remaining non-compact directions, we find that a particle's spectrum on $S^1_R$ is
\be
E^2_{\rm part} = \frac{n^2}{R^2} + m^2,\quad n\in\Z.\label{eqn:KK}
\ee
Notice that if $R$ were sufficiently small, so that an observer in the remaining $(d-1)$ dimensional spacetime could not see it, then the momentum $p=n/R$ would appear as a contribution to the rest mass energy, since it is not related to motion in the observed spacetime.

Now, what about the spectrum of a string? Because of its extended nature there is another form of energy carried by the string, which is not possible for a point particle, related to deforming its length.
Since the string has a finite tension $T$, which we can normalize to $T=1/2\pi$,\footnote{More precisely, in the natural units $\hbar=c=1$ the tension of the string is $T=(2\pi\a')^{-1}$, where $\a'$ has dimensions of area. This sets the fundamental length scale of a string, $\ell_s=\sqrt{\a'}$. If we keep $\a'$ explicit, as many physicists often do, then T-duality interchanges $R\leftrightarrow \a'/R$.}
then stretching the string by an amount $\varDelta L$ costs an energy $T\varDelta L = \varDelta L/2\pi$. Therefore, the spectrum of a string takes the form
$$
E_{\rm string}^2 = p^2 +\left(\frac{\varDelta L}{2\pi} \right)^2 +m^2.
$$
The strings we are interested in do not have any intrinsic mass, but they carry internal rest energy associated with their (quantized) vibrational modes.
A (massless) string can oscillate in any of the $d-2$ directions transverse to the surface $\S$ that it sweeps out in spacetime. In each direction there are an infinite number of vibrational modes (the Fourier modes), which we can label by $n=0,1,2,\ldots$, each of which contribute an energy $\sim n$. Finally, in the direction $\m=1,\ldots,d-2$, the $n$-th mode can have an arbitrary (but quantized) amplitude, $N_{\m n}$.
\begin{figure}[h]
\centering
\subfloat[][$(2,4)$]{
\includegraphics[trim= 150 475 200 50 ,clip,scale=0.40]{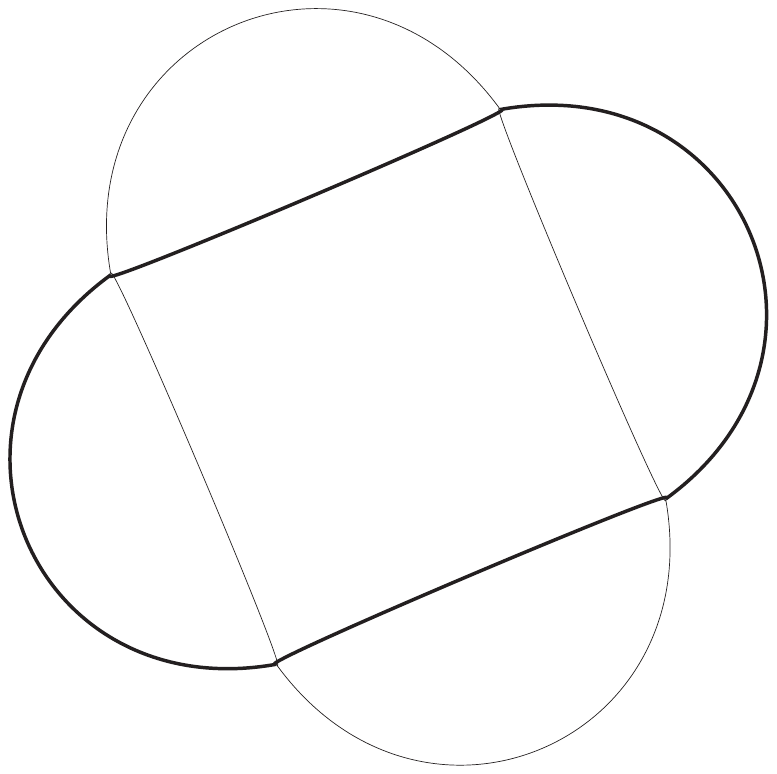}
\label{fig:4-2}}
\qquad
\subfloat[][$(4,2)$]{
\includegraphics[trim= 175 500 200 50 ,clip,scale=0.40]{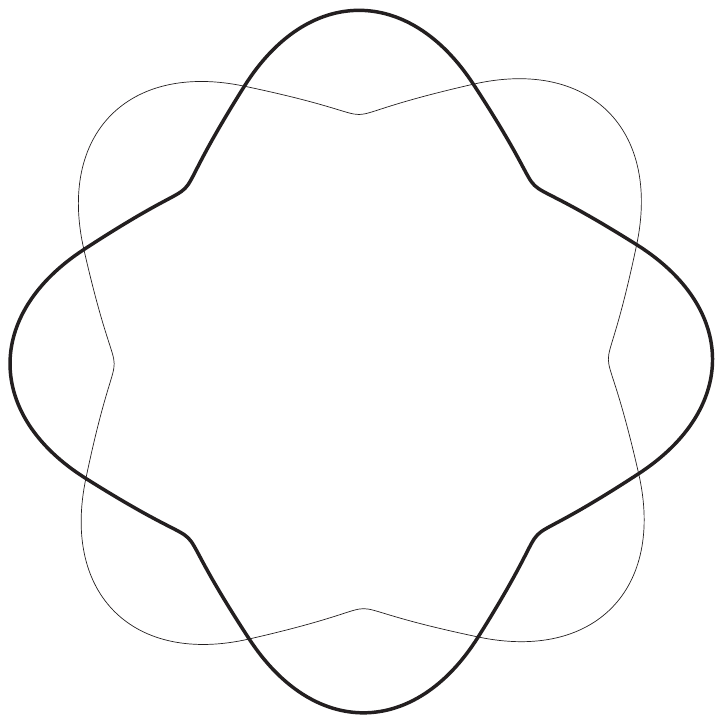}
\label{fig:2-4}}
\qquad
\subfloat[][$(8,1)$]{
\includegraphics[trim= 200 500 200 50 ,clip,scale=0.40]{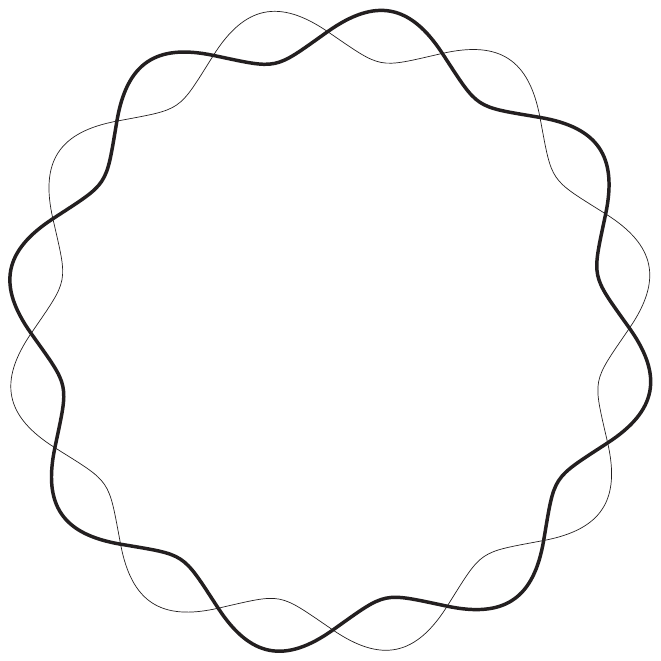}
\label{fig:1-8}}
\caption{\textit{Three vibrational patterns with numbers $(n,N_{\m n})$, all with $N=8$}}
\label{fig:vibs}
\end{figure}
In Fig.~\ref{fig:vibs}\ we have sketched three different vibrational patterns, corresponding to $(n,N_{\m n}) \in\{(2,4),(4,2),(8,1)\}$.
Altogether, the effective mass of the string is given by the total oscillation number, $N$, defined as
$$
m^2 = N := \sum_{\m=1}^{d-2} \sum_{n=1}^{\infty} n N_{\m n}.
$$
All three examples in Fig.~\ref{fig:vibs}\ have $N=8$. When we consider the string on a circle of radius $R$, once again the momentum along that direction is quantized: $p=n/R$. Since the string is extended, we can also consider configurations where the string  wraps (multiply) around the circle. The string can wrap any integral\footnote{If the wrapping number were not integral, then the string must start and end at different points, and would no longer be closed.} number of times, so $\varDelta L = 2\pi R w$, where (taking orientation into account) $w\in\Z$. Putting everything together, we see that the spectrum of a string on a $S^1_R$ is
$$
E_{\rm string}^2 = \frac{n^2}{R^2} + w^2 R^2 +N.
$$
Notice that the spectrum is invariant under $R\leftrightarrow 1/R$ if we simultaneously interchange $n\leftrightarrow w$, ie. we must swap momentum and winding numbers. This is the statement of T-duality.

The implications of T-duality for the structure of spacetime are very deep. Suppose that one day in the not-too-distant future a physics colleague runs up to you, excited with the news that the LHC has observed a tower of new particles with evenly spaced masses $m\sim n/R$.  In light of~\C{eqn:KK}, these are naturally interpreted as the set of momentum modes in some new compact dimension of space, which takes the shape of a circle of radius $R$. However, if string theory is a correct description of Nature, then this is not the only consistent interpretation of this exciting new data. Instead, these new states could correspond to the winding modes of a string on a circle of radius $1/R$. In fact, both interpretations would be equally valid and, furthermore, there would be no possible way to distinguish between them! After all, the only tool available to probe the size of this new dimension would be a string, which, as we have already seen, cannot differentiate between these two possibilities.
One way to interpret this fact is that there is no real meaning to circles with $R<1$ (ie. smaller than the size of the string) since they are always equivalent to circles with $R>1$. In this sense, at least for circular dimensions, string theory has a minimal length scale and there is no physical meaning to anything smaller.

\subsection{Summary}\label{ss:summary}
We have seen that $R\leftrightarrow1/R$ is a symmetry of circle compactifications in string theory. Returning now to the case of interest, $T^2$, it is easy to see that~\C{eqn:S}, which maps $A\leftrightarrow1/A$, is nothing more than T-duality applied to both circles of the torus. So indeed, string theory is invariant under the modular transformations
$$
\G:\varrho\mapsto \frac{a\varrho +b}{c\varrho +d},\quad \begin{pmatrix} a & b\\ c& d\end{pmatrix}\in SL(2,\Z)
$$
of the complexified K\"ahler modulus, $\varrho=b+\I A$. Note that in order to obtain this result, we had to make two departures from conventional geometry: we had to introduce the $B$-field in order to complexify the K\"ahler modulus, and we had to allow for T-duality to get the correct modular transformations. Both of these generalizations arise very naturally in string theory.

The full moduli space of the theory is
$$
\cM(T^2) = \cM_K(T^2)\times \cM_{cs}(T^2) = \left(\G\backslash \IH_\varrho\right) \times \left(\G\backslash\IH_\t\right).
$$
The mirror manifold is another torus, $\check{T}^2$, with the moduli interchanged: $\check\varrho =\t$ and $\check\t =\varrho$. The mirror torus $\check{T}^2$ can be obtained from the original torus, $T^2$, by performing T-duality along one of the circle factors. So in complex dimension 1, mirror symmetry {\it is} T-duality. This statement was conjectured in~\cite{syz}\ to hold in higher dimensions. Roughly speaking the {\it SYZ conjecture} states that if $(X,Y)$ are a mirror pair of Calabi-Yau $n$-folds, then $X$ and $Y$ admit $n$-torus fibrations over a common base $B_n$ of real dimension $n$. Furthermore, the generic fiber in $X$ is a torus $T^n$ and the generic fiber of $Y$ is the T-dual torus $\check T^n$, obtained from $T^n$ by performing T-duality along each of the $n$ circle factors. It is now understood that the SYZ conjecture holds only in certain limits, and does not capture all of the effects of mirror symmetry. See~\cite{Gross:2012syz}\ for more details.

One final comment about mirror symmetry of the torus, which has a nice generalizes to higher dimensions. Consider the Hodge diamond of the torus:
$$
\begin{array}{ccc} \ & h^{00} & \ \\ h^{10} & \ & h^{01} \\ \ & h^{11} & \ \end{array} \quad=\quad \begin{array}{ccc} \ & 1 & \ \\ 1 & \ & 1 \\ \ & 1 & \ \end{array},
$$
where $h^{pq} = {\rm dim} H^{p,q}(T^2,\CC)$ are the Hodge numbers of the torus. This diamond has a fairly large symmetry group, namely the dihedral group $D_4$, and the reflections through various axes have important geometric interpretations.
Two of these are well-known: left-right symmetry corresponds to complex conjugation, while vertical reflection follows from Poincar\'e duality.
What is perhaps less familiar is that reflection through the diagonals corresponds to mirror symmetry.
Recall that the complexified K\"ahler modulus is given by $\varrho=\int(B+\I\omega)$ where $B+\I\omega\in H^{1,1}(T^2,\CC)$.
Similarly, the complex structure modulus can be written $\t=\oint \D z$, where the integration is taken along one of the non-trivial circles.\footnote{More precisely, the integration cycle is (anything homologous to) the closed path that, when lifted to the covering space $\CC$, connects the points $z=0$ and $z=\tau$.}
Thus, mirror symmetry is effectively the interchange of $\D z \D\bar z \leftrightarrow \D z$, or equivalently $H^{1,1}(T^2,\CC) \leftrightarrow H^{1,0}(T^2,\CC)$.
In higher dimensions, a similar statement holds. For a Calabi-Yau $n$-fold $X$, the K\"ahler deformations are parameterized by elements of $H^{1,1}(X,\CC)$ while the deformations of complex structure are parameterized by $H^{1,n-1}(X,\CC)$.
For example in the case of greatest interest, a simply connected 3-fold, the Hodge diamond can be written as
$$
\begin{array}{ccccccc}
\ & \ & \ & h^{00} & \ & \ & \ \\
\ & \ & h^{10} & \ & h^{01} & \ & \ \\
\ & h^{20} & \ & h^{11} & \ & h^{02} & \ \\
h^{30} & \ & h^{21} & \ & h^{12} & \ & h^{03} \\
\ & h^{31} & \ & h^{22} & \ & h^{13} & \ \\
\ & \ & h^{32} & \ & h^{23} & \ & \ \\
\ & \ & \ & h^{33} & \ & \ & \ \\
  \end{array}
   \quad=\quad
\begin{array}{ccccccc}
\ & \ & \ & 1 & \ & \ & \ \\
\ & \ & 0 & \ & 0 & \ & \ \\
\ &0 & \ & h^{11} & \ & 0 & \ \\
1 & \ & h^{12} & \ & h^{12} & \ & 1 \\
\ & 0 & \ & h^{11} & \ & 0 & \ \\
\ & \ & 0 & \ & 0 & \ & \ \\
\ & \ & \ & 1 & \ & \ & \ \\
  \end{array} ,
$$
where we have used the existence of a unique holomorphic top-form, simply-connectedness, conjugation, and duality to reduce the diamond down to two independent numbers: $h^{11}$ and $h^{12}$. So we see that if $(X,Y)$ are a mirror pair of Calabi-Yau 3-folds, then their Hodge diamonds are related by reflecting through the diagonal.
In particular, $h^{11}(X)=h^{12}(Y)$ and $h^{12}(X)=h^{11}(Y)$.

\section{Introduction to Conformal Field Theories}\label{sec:conf}
In the previous section, we saw how string theory can treat very different geometries as equivalent. The underlying reason is that the bizarre looking geometric symmetries, when examined from the worldsheet of the string, amount to simple automorphisms of the conformal field theory (CFT) that describes its dynamics.
Thus, in order to properly understand mirror symmetry, at least in the context of string theory, we must ultimately develop some understanding of CFTs.
In Sect.~\ref{sec:SCFT}, we will specialize to the case of CFTs with $N=(2,2)$ supersymmetry, which is the context in which mirror symmetry was originally discovered.
However, in this section we will introduce some of the basic framework that underlies all CFTs.
Section~\ref{ss:QFT}\ will review a few of the basic features of quantum field theories, of which CFTs form a special class.
After that, we will study the general structure of conformal symmetries in Sect.~\ref{ss:conformal}. A slight generalization of this structure will lead us to the Virasoro algebra in Sect.~\ref{ss:Virasoro}, which underlies all CFTs in two dimensions.
In the following sections,~\ref{ss:Local}\ and~\ref{ss:reps}, we examine how the Virasoro algebra acts on local operators and its representations.
We wrap up this discussion with a several simple, yet important, examples of CFTs to illustrate the formalism in Sect.~\ref{ss:examples}.
Many wonderful references exist which cover this material in greater detail: two-dimensional CFTs were largely developed in the seminal work~\cite{Belavin:1984vu}, while~\cite{Ginsparg:1988ui}\ and Chapter 2 of~\cite{Polchinski:1998rq}\ provide excellent summaries of these results.
The textbook~\cite{DiFrancesco:1997nk}\ has become the gold standard in CFT fundamentals, and contains a wealth of information on this topic.

\subsection{Lightning Review of Quantum Field Theories}\label{ss:QFT}
Before we can properly address conformal field theories, we require some basic knowledge of quantum field theory in general.
This is an extremely vast subject, for which there is no possible way that we can even remotely do justice is just a few short pages.
However, we would like to introduce a few basic concepts to give us a starting point, and also which serve to illustrate some of the key differences between CFTs and the generic quantum field theories.
There are many standard references for this rich topic, which develop the ideas we are about to rush through in much greater detail (and probably with a much clearer presentation).
In particular, for a more mathematical approach to this subject, consult~\cite{Deligne:1999qp,Hori:2003ic}.

Quantum field theory is the successful merger of the two fundamental pillars of modern physics: quantum mechanics and special relativity.
Quantum mechanics controls physics at the smallest scales, from atoms and molecules all the way down to the point-like sub-atomic particles (electrons, quarks, photons, etc).
To a mathematician, quantum systems are rather appealing since they are formulated in purely algebraic terms.
\begin{definition}\label{def:quantum}
To any quantum mechanical system, we assign a Hilbert space $\cH$ called the {\it space of states}.
A {\it state}, which represents a possible configuration of the system, is a ray $\ket{\psi}$ in $\cH$.
An {\it observable} is represented by a Hermitian operator acting on $\cH$, while a {\it symmetry} of the system is represented by a unitary operator acting on $\cH$.
\end{definition}
Observables are measurable quantities (like position or momentum), and the outcome of such a measurement can only be one of the observable's eigenvalues.
A hallmark feature of quantum systems is that the results of these measurements can only be predicted probabilistically.
The most fundamental observable in any quantum system is the energy, whose associated Hermitian operator is called the {\it Hamiltonian}, $H$.
The Hamiltonian has such a prominent role because it controls the time-evolution of states, via Schr\"odinger's equation:
$$
\I\frac{\del}{\del t} \ket{\psi} = H\ket{\psi}.
$$
When $H$ is time-independent, then the Schr\"odinger equation can be integrated directly to determine the time-evolution of any state:
$$
\ket{\psi(t)} = \E^{-\I H (t-t_0)}\ket{\psi(t_0)},
$$
where the unitary operator $U(t,t_0) = \E^{-\I H(t-t_0)}$ is associated with the time-translational symmetry of the system.

The problem with quantum mechanics, and the reason that quantum field theory is unavoidable, is that it is not compatible with special relativity.
If at some initial time, $t_0$, a particle is located at some initial position $\vec{x}_0$, then there typically is a non-zero probability that at {\it any} later time, $t_0+\varDelta t$, the particle can be found at {\it any} other point in space, no matter how small $\varDelta t$.
This contradicts one of the basic principles of special relativity, namely that nothing can propagate faster $c$, than the speed of light.
For slow moving particles (relative to $c$), quantum mechanics serves as a suitable approximation to reality, but at extremely high velocities, namely those close to $c$, relativistic effects become important.
Einstein's great insight into the nature of space and time is that they are not independent, but instead mix under changes in an observer's velocity.
Rather than treating space and time separately, as in quantum mechanics, they should be combined together into a single object: spacetime.
Mathematically, this translates into the statement that space and time comprise a single metric space of indefinite signature, which (in the absence of gravity) should be flat.
\begin{definition}
  {\it $d$-dimensional Minkowski spacetime} is the vector space $\R^{1,d-1}\simeq(\R^{d},\eta_{\m\n})$, equipped with the flat metric $\eta_{\m\n}$ of signature $(1,d-1)$.
\end{definition}
By an appropriate choice of coordinates, $\eta_{\m\n}$ can always be put into the diagonal form $\eta_{\m\n}=diag(-1,1,1\dots,1)$.
Later we will be interested only in $d=2$, but for now we keep the dimension of spacetime arbitrary.
The basic symmetry that underlies special relativity is the {\it Lorentz group}, $SO(1,d-1)$ of (generalized) rotations in spacetime,\footnote{Sometimes these generalized rotations are split into spatial $SO(d-1)$ rotations and ``boosts" along each of the $d-1$ spatial directions.},
$$
x^\m\rightarrow M^\m{}_\n x^\n,
$$
for $M\in SO(1,d-1)$. The full isometry group of Minkowski spacetime is the {\it Poincar\'e group}, which combines the Lorentz group with the set of translations in spacetime
$$
x^\m\rightarrow x^\m+a^\m.
$$
into the semi-direct product $SO(1,d-1)\ltimes \R^{d}$. The orbits of the Lorentz group fall into three classes, depending on whether the squared-distance between a point $(t,\vec{x})\in\R^{1,d-1}$ and the origin,
$$
\D s^2 = \sum_{\m,\n}\eta_{\m\n}dx^\m dx^\n = -dt^2 + d\vec{x}\cdot d\vec{x},
$$
is positive ({\it space-like}), negative ({\it time-like}), or null.
Time-like separated points can always be connected by paths that never exceed the speed of light, while space-like separated points can only be reached by (unphysical) faster than light travel.\footnote{Null separated points can only be reached by traveling at exactly $c$.}
In particular, interactions between objects can only occur {\it locally}, \ie\ when they are at the same point in spacetime.
Time-like separated objects cannot interact directly and must communicate through an intermediary field, such as the gravitational or electro-magnetic fields.
Disturbances in these fields, caused by the local interactions with objects, propagate to future time-like separated points, where they can locally interact with distant objects.

Therefore, by combining quantum mechanics with relativity, the natural observables that emerge in quantum field theory are operator-valued (or quantum) fields.
More precisely, a quantum field is an operator-valued distribution, which can be integrated against test functions to generate an infinite number of (conventional) operators.
\begin{definition}
 In a quantum field theory, a {\it quantum field} (or a {\it local operator}) $\cO(t,\vec{x})$ is an operator-valued distribution defined at each point in spacetime, $(t,\vec{x})\in\R^{1,d-1}$.
 Furthermore, every local operator can be decomposed into an infinite number of {\it mode operators} $\cO_{\vec{n}}$ associated to its each of its (spatial) Fourier modes.
\end{definition}
Quantum fields are defined and act locally at each point in spacetime, so that two local operators at space-like separated points will always commute.
The excitations of these quantum fields, which are responsible for transmitting interactions between distant objects, are interpreted as particles.
The action of a quantum field operator on a state in the Hilbert space is to create or destroy a particle (of the associated type) at the specified point in spacetime.

There are several (equivalent) approaches to quantum field theory, each with their own strengths and weaknesses.
One particularly simple approach is the method of {\it canonical quantization}, which begins by determining the Hamiltonian of the system.
The main advantage of this approach is that it generalizes directly the well-known procedure used in quantum mechanics, but the disadvantage is that Lorentz invariance is not manifest since we must choose a preferred time-like vector in spacetime.
Having chosen a Hamiltonian, we then divide the mode operators of fields into those which raise the energy of a state and those which lower the energy.
Operators that leave a state's energy unchanged necessarily commute with $H$, and so correspond to conserved charges.
We define a lowest energy state, called {\it the vacuum}, to be one that is annihilated by all of the lowering operators.
Then, we build the states of the Hilbert space by acting on the vacuum with all possible raising operators in all possible combinations.
We assign the vacuum zero energy, and measure a state's energy relative to the this.
If follows, by the Lorentz invariance of the theory, that the vacuum be invariant under the entire Poincar\'e group.

\subsubsection*{An Example: Free Scalar Field in $d=4$}
The standard example of a quantum field theory is a free, real, scalar field $\phi(x)$ of mass $m$, and we will focus on $d=4$ for concreteness.
This theory is governed by the action
$$
S[\phi] = \hlf\int \D^4x\left((\dot\phi)^2 -(\vec{\nabla}\phi)^2 -m^2 \phi^2\right),
$$
where $\dot\phi = \del\phi/\del t$.
The field operator $\phi(x)$ can be decomposed into its Fourier modes as
$$
\phi(t,\vec{x}) =  \int\frac{\D^{3}\vec{p}}{(2\pi)^{3}}\frac{1}{\sqrt{2E_{\vec{p}}}} \left(a_{\vec{p}} \E^{-\I p\cdot x} + a^\dagger_{\vec{p}} \E^{\I p\cdot x}\right) ,
$$
where $p\cdot x $ is a Lorentz invariant product
$$
p\cdot x = \sum_\m \eta_{\m\n} p^\m x^\n = -E_{\vec{p}} t + \vec{p}\cdot\vec{x},
$$
and
$$
E_{\vec{p}}^2 = \vec{p}^2 +m^2
$$
is the relativistic energy of the scalar particle. The mode operators, $a_{\vec{p}}$ and $a^\dag_{\vec{p}}$ obey the commutation relations of an infinite set of harmonic oscillators:
\ban
[a_{\vec{p}},a^\dag_{\vec{q}}] &= (2\pi)^3\dd^{3}(\vec{p} - \vec{q}), \\
[a_{\vec{p}},a_{\vec{q}}] &=[a^\dag_{\vec{p}},a^\dag_{\vec{q}}]  =0.
\ean
The advantage of working in this basis is that the Hamiltonian is diagonal:\footnote{The second term in brackets is proportional to $\dd^3(0)$, and therefore infinite, but all is not lost.
These sorts of infinities arise in many problems in quantum field theories, and they are a signal of our ignorance about physics at extremely short distance scales.
Nevertheless, it is well-understood how to regulate and remove these infinite quantities from physically observable quantities.
In this case, since we can only measure energy differences (with respect to the vacuum) the resolution is to simply drop this infinite vacuum energy.}
$$
H = \hlf\int \frac{\D^{3}\vec{p}}{(2\pi)^3E_{\vec{p}}} \left(a^\dag_{\vec{p}} a_{\vec{p}} + \hlf[a_{\vec{p}},a^\dag_{\vec{p}}]\right).
$$
It is easy to check that
$$
[H,a^\dag_{\vec{p}}] = E_{\vec{p}} a^\dag_{\vec{p}},\quad [H,a_{\vec{p}}] = -E_{\vec{p}} a_{\vec{p}},
$$
and so we associate $a^\dag_{\vec{p}}$ with a rasing operator, which creates a particle with momentum $\vec{p}$, and $a_{\vec{p}}$ with a lower operator, which destroys a particle of momentum $\vec{p}$.
In particular, the vacuum is the state annihilated by all $a_{\vec{p}}$, and therefore does not contain any particles.

\subsection{Conformal Groups in Various Dimensions}\label{ss:conformal}
Having recalled some basic facts about general quantum field theories, let us now focus on the conformally invariant ones.
Simply put, a conformal field theory is a quantum field theory where instead of the Poincar\'e group, the underlying symmetry group in the conformal group of spacetime.
Let $(M,g_{\m\n})$ be a (pseudo-)Riemannian manifold of dimension $d$. Recall that under a general coordinate transformation, $x^\mu \rightarrow x'^\m$, the metric tensor is conjugated by the (inverse) Jacobian:
$$
g'_{\m\n}(x') =\sum_{\varrho,\s}  g_{\varrho\s}(x)\frac{\del x^\varrho}{\del x'^\m}\frac{\del x^\s}{\del x'^\n},
$$
so that the infinitesimal line element on $M$, $$\D s^2 = \sum_{\m,\n}g_{\m\n}(x) dx^\m dx^\n,$$ is preserved.
\begin{definition}
The {\it conformal group of $(M,g_{\m\n})$} is the set of all invertible coordinate transformations, $x^\m\mapsto x'^\m$, that leave the metric tensor invariant up to an overall rescaling:
$$
g'_{\m\n}(x') = \La^2(x)g_{\m\n}(x).
$$
\end{definition}
As the name suggests, conformal transformations preserve angles, but not necessarily lengths.
Clearly the set of all isometries of $(M,g_{\m\n})$, which leave the metric invariant, form a subset of the conformal group (with $\La(x)=1$).
Therefore, for Minkowski spacetimes the conformal group contains the Poincar\'e group,
$$
x^\m\rightarrow M^\m{}_\n x^\n,\qquad x^\m\rightarrow x^\m+a^\m,
$$
with $M\in SO(1,d-1)$. Another obvious set of conformal transformation come from {\it dilations}:
$$x^\m\rightarrow x'^\mu = \La^{-1}x^\m,$$
for constant scale factors $\La>0$. A final set of well known angle-preserving transformations come from inversions,
$$
x^\m \rightarrow x'^\m = \frac{x^\m}{x^2},
$$
but these transformations are discrete and we are seeking a set of continuous transformations. The solution is to follow the inversion map by a translation and then another inversion, so the net effect is
$$
x^\m\rightarrow x'^\m = \frac{x^\m+b^\m x^2}{1+2b\cdot x +b^2 x^2}.
$$
This defines the set of {\it special conformal transformations (SCTs)}, which can also be written as
$$
\frac{x'^\m}{x'^2} = \frac{x^\m}{x^2}+b^\m.
$$
It is not hard to show that, at least for $d>2$, this gives the complete list of possible conformal transformations.
\begin{proposition} \label{prop:gens}For $d>2$, the conformal group of $(\R^{d},\eta_{\m\n})$ is generated by the differential operators:
\ban
P_\m =& -\I\del_\m \qquad\qquad\qquad\quad (translations)\\
D =& -\I x^\m\del_\m \qquad\qquad\qquad (dilations)\\
J_{\m\n} =& \I(x_\m\del_\n - x_\n\del_\m) \qquad\quad\  (rotations) \\
K_\m =& -\I(2x_\m x^\n\del_\n -x^2 \del_\m)\ \ (SCTs).
\ean
\end{proposition}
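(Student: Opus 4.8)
\section*{Proof proposal}

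The plan is to linearize the problem: rather than classify the finite conformal maps directly, I would determine the Lie algebra of the conformal group by finding all infinitesimal transformations $x^\m \mapsto x^\m + \e^\m(x)$ that preserve $\eta_{\m\n}$ up to scale, and then exponentiate. Substituting $x'^\m = x^\m + \e^\m(x)$ into the transformation law for the metric and keeping terms to first order in $\e$, the rescaling condition $g'_{\m\n} = \La^2 \eta_{\m\n}$ reduces to the statement that $\del_\m \e_\n + \del_\n \e_\m$ is proportional to $\eta_{\m\n}$. Taking the trace fixes the proportionality factor and yields the \emph{conformal Killing equation}
\be
\del_\m \e_\n + \del_\n \e_\m = \frac{2}{d}(\del\cdot\e)\,\eta_{\m\n}. \label{eqn:CKE}
\ee
Every one-parameter family of conformal transformations corresponds to a solution $\e^\m(x)$ of \eqref{eqn:CKE}, so the proposition reduces to classifying these solutions.

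The crux of the argument, and the step where the hypothesis $d>2$ is essential, is showing that every solution of \eqref{eqn:CKE} is a polynomial of degree at most two in $x$. I would extract this by differentiating. Writing $f := \del\cdot\e$, I apply $\del_\r$ to \eqref{eqn:CKE} and take the cyclic combination of the three index permutations of $(\m,\n,\r)$ that isolates the third derivatives, obtaining
\be
\del_\m\del_\n \e_\r = \frac{1}{d}\big(\eta_{\n\r}\del_\m f + \eta_{\r\m}\del_\n f - \eta_{\m\n}\del_\r f\big).
\ee
Differentiating once more and using the symmetry of mixed partials forces a relation among the second derivatives of $f$; after contracting indices this collapses to
\be
(2-d)\,\del_\m\del_\n f = \eta_{\m\n}\,\del^2 f.
\ee
A further trace gives $(1-d)\,\del^2 f = 0$, hence $\del^2 f = 0$, and then for $d\neq 2$ the displayed relation forces $\del_\m\del_\n f = 0$. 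Thus $f$ is affine, and the formula for $\del_\m\del_\n\e_\r$ shows that the second derivatives of $\e$ are constant, so $\e^\m$ is at most quadratic. The main obstacle is precisely this chain of contractions: it is where $d>2$ must be invoked, and it is instructive that at $d=2$ the relation degenerates (the right-hand side vanishes identically), leaving $\del_\m\del_\n f$ completely unconstrained. This is exactly why the two-dimensional conformal algebra is infinite-dimensional, as we will see in the Virasoro algebra of Sect.~\ref{ss:Virasoro}.

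With the solution space reduced to $\e_\m = a_\m + b_{\m\n}x^\n + c_{\m\n\r}x^\n x^\r$ (with $c_{\m\n\r}$ symmetric in its last two indices), I would finish by feeding this ansatz back into \eqref{eqn:CKE} order by order in $x$. The constant piece $a_\m$ is unconstrained and generates the translations $P_\m$. The linear constraint demands that the symmetric part of $b_{\m\n}$ be pure trace: splitting $b_{\m\n}$ into its trace (a multiple of $\eta_{\m\n}$) and its antisymmetric part recovers the dilation $D$ and the rotations $J_{\m\n}$ respectively, while the symmetric traceless part is forbidden. Finally, the quadratic constraint forces $c_{\m\n\r} = \eta_{\m\n}\b_\r + \eta_{\m\r}\b_\n - \eta_{\n\r}\b_\m$ for an arbitrary vector $\b^\m$, which is exactly the infinitesimal special conformal transformation generating $K_\m$. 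Multiplying each of the four resulting vector fields $\e^\m\del_\m$ by $-\I$ to render the generators Hermitian then reproduces the list in the statement, completing the proof.
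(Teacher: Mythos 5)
Your proposal is correct and follows essentially the same route as the paper's proof, which also reduces the problem to the conformal Killing equation and argues via traces and derivatives that $\ve_\m$ is at most quadratic in $x$ (deferring the details to p.~96 of the standard CFT reference). You have simply filled in the intermediate steps that the paper omits, and your identification of where $d>2$ enters is exactly the point the paper's sketch relies on.
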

\begin{proof}
\smartqed
The proof follows by considering infinitesimal transformations $x^\mu \rightarrow x^\m+\e\ve^\m(x)$, and solving the conformal Killing equations: $\del_\m\ve_\n +\del_\n\ve_\m = f(x)\eta_{\m\n}$. Taking traces and derivatives of the Killing equation, one can show that $\ve_\m$ can be at most quadratic in $x$ which leads to (the infinitesimal forms of) the transformations listed above. Details can be found on $p.~96$ of~\cite{DiFrancesco:1997nk}.
\qed
\end{proof}
Of course our real interest is when $d=2$, precisely the one exception for which this classification does not apply.\footnote{Although $d=1$ is also excluded, the notion of a conformal transformation is meaningless since every vector is necessarily parallel.}
Nevertheless, we will see that the generators above yield an important subgroup of the full conformal group when $d=2$.
Before exploring the conformal group of $(\R^{2},\eta_{\m\n})$ in detail, let us make a few more remarks regarding the general case.
\begin{proposition}\label{prop:comm}
  For $d>2$, the conformal group of $(\R^d,\eta_{\m\n})$ is isomorphic to $SO(2,d)$.
\end{proposition}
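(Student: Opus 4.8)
The plan is to prove the isomorphism first at the level of Lie algebras, using the generators of Proposition~\ref{prop:gens}, and then to lift it to the groups by a geometric realization. First I would compute the commutators of $P_\m$, $D$, $J_{\m\n}$ and $K_\m$ directly from their expressions as differential operators. Using $P_\m=-\I\del_\m$, $D=-\I x^\n\del_\n$, $J_{\m\n}=\I(x_\m\del_\n-x_\n\del_\m)$ and $K_\m=-\I(2x_\m x^\n\del_\n-x^2\del_\m)$, one finds the standard conformal algebra: the $J_{\m\n}$ close among themselves into the Lorentz algebra of $\SO(1,d-1)$; the dilation commutes with the rotations and grades the remaining generators, $[D,P_\m]=\I P_\m$ and $[D,K_\m]=-\I K_\m$; the translations mutually commute and so do the special conformal generators; both $P_\m$ and $K_\m$ transform as vectors under $J_{\m\n}$; and the one genuinely new relation is $[K_\m,P_\n]=2\I(\eta_{\m\n}D-J_{\m\n})$. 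As a first consistency check I would count generators: $d+1+\tfrac12 d(d-1)+d=\tfrac12(d+1)(d+2)$, which is exactly $\dim\SO(2,d)$.

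The key step is to repackage these $\tfrac12(d+1)(d+2)$ operators as a single antisymmetric tensor $J_{AB}=-J_{BA}$, whose indices $A,B$ run over the original range $\m=0,\dots,d-1$ together with two extra labels $d$ and $d+1$. With the assignments $J_{\m\,d}=\tfrac12(P_\m-K_\m)$, $J_{\m\,d+1}=\tfrac12(P_\m+K_\m)$ and $J_{d\,d+1}=D$, the relations above collapse into the single family
\be
[J_{AB},J_{CD}]=\I\left(\eta_{AD}J_{BC}+\eta_{BC}J_{AD}-\eta_{AC}J_{BD}-\eta_{BD}J_{AC}\right),
\ee
provided the enlarged metric $\eta_{AB}$ is taken to extend the Minkowski metric $\eta_{\m\n}$ by one timelike direction ($\eta_{d\,d}=-1$) and one spacelike direction ($\eta_{d+1\,d+1}=+1$). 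This is precisely the defining bracket of the Lie algebra of $\SO(2,d)$, the extra timelike direction being exactly what promotes the signature from the original $(1,d-1)$ to $(2,d)$; had we started from Euclidean $\R^d$ we would instead have landed on $\SO(1,d+1)$.

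I expect the principal obstacle to be the passage from this Lie-algebra isomorphism to the claimed isomorphism of groups, since the commutator computation itself, while sign-sensitive and tedious, is routine (it is carried out in full on p.~97 of~\cite{DiFrancesco:1997nk}). The cleanest honest route is the embedding-space picture: equip $\R^{2,d}$ with the metric $X\cdot X=\eta_{\m\n}X^\m X^\n-(X^d)^2+(X^{d+1})^2$, and realize Minkowski space as the projective null cone $\{X\cdot X=0\}/\!\sim$, charted by $X^\m=x^\m$, $X^d=\tfrac12(1+x^2)$, $X^{d+1}=\tfrac12(1-x^2)$. The linear action of $\SO(2,d)$ preserves both the cone and the scaling $X\sim\la X$, hence descends to genuine transformations of the affine coordinates $x^\m$; matching this induced action against the four families of Proposition~\ref{prop:gens} identifies the homomorphism explicitly, and the dimension count together with a computation of its (discrete) kernel upgrades it to an isomorphism. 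The one point I would flag, as the physics-oriented references do, is that this is strictly an isomorphism onto the connected conformal group of \emph{compactified} Minkowski space: the inversions used to build the special conformal transformations are singular on the light cone and become globally well-defined only after passing to the compactification furnished by the null cone.
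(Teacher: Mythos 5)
Your proof takes essentially the same route as the paper: compute the commutators of $P_\m$, $D$, $J_{\m\n}$, $K_\m$ from Proposition~\ref{prop:gens} and repackage them into a single antisymmetric $J_{AB}$ with an extended metric of signature $(2,d)$ (the paper labels the two extra directions $-1$ and $d$ rather than $d$ and $d+1$, but the identification $\tfrac12(P_\m\mp K_\m)$, $D$ is identical). Your additional care in passing from the Lie-algebra isomorphism to the group statement via the projective null cone, and the remark about conformal compactification, goes beyond what the paper does --- it simply stops at the algebra level --- but this only strengthens the argument.
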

\begin{proof}
\smartqed
Given the explicit forms of the conformal generators in Prop.~\ref{prop:gens}, it follows that they satisfy the following algebra:
\ban
[D,P_\m] &= \I P_\m\\
[D,K_\m] &= -\I K_\m\\
[K_\m,P_\n] &= 2\I \eta_{\m\n}D -2\I J_{\m\n}\\
[J_{\m\n},P_\varrho] &= \I (\eta_{\n\varrho}P_\m - \eta_{\varrho\m}P_\n)\\
[J_{\m\n},K_\varrho] &= \I (\eta_{\n\varrho}K_\m - \eta_{\varrho\m}K_\n)\\
[J_{\m\n},J_{\varrho\s}] &= \I \left( \eta_{\n\varrho}J_{\m\s} + \eta_{\m\s}J_{\n\varrho} - \eta_{\m\varrho}J_{\n\s} - \eta_{\n\s}J_{\m\varrho}\right),
\ean
with all other commutators vanishing. Now let $\m\in\{0,1,2,\dots,d-1\}$ and define
$$
J_{-1,d} =D,\quad J_{-1,\m} = \tfrac{1}{2}(P_\m - K_\m), \quad J_{d,\m} = \tfrac{1}{2}(P_\m + K_\m).
$$
Then the conformal symmetry algebra can be written as
$$
[J_{ab},J_{cd}] = \I \left( \eta_{bc}J_{ad} + \eta_{ad}J_{bc} - \eta_{ac}J_{bd} - \eta_{bd}J_{ac}\right),
$$
where $a,b\in\{-1,0,1,\ldots,d\}$ and $\eta_{ab}=diag(-1,-1,1,1\ldots,1)$. This is the ${\frak so}(2,d)$ algebra.
\qed
\end{proof}
Finally, despite the fact that spacetime actually has a Lorentzian signature, in practice physicists often like to cheat and pretend that it is Euclidean by performing a so-called {\it Wick rotation}.
This amounts to an analytic continuation sending  $x^0\rightarrow \I x^d$, so that the line element
$$
\D s^2 = \sum_{\m,\n=0}^{d-1}\eta_{\m\n}dx^\m dx^\n \ \rightarrow \ \sum_{\m,\n=1}^d \dd_{\m\n}dx^\m dx^\n
$$
becomes effectively Euclidean. We will follow this convention throughout the rest of these notes. After analytic continuation, the conformal group of $(\R^d,\dd_{\m\n})$ is \nobreak{$SO(1,d+1)$}. Now we can proceed to study how the conformal group is modified in $d=2$.

\subsubsection*{Two Dimensions}
As noted earlier, the complete enumeration of conformal generators given in Prop.~\ref{prop:gens}\ only holds for $d>2$. In attempting the same proof for $d=2$, instead of finding that $\ve_\m(x)$ can be at most quadratic in $x$, one finds that $\ve_\m(x)$ must be a harmonic function.
Regarding $\R^2\simeq \CC$ with coordinates $(z,\bar z)$, this is just a reflection of the well known fact that {\it any} holomorphic function $f(z)$ generates a conformal transformation on $\CC$. Under $z\rightarrow z' =f(z)$, independent of $\bar z$,
$$
\D s^2 =\D z \D\bar z \rightarrow \left|\del_z f\right|^2 \D z \D\bar z,
$$
so this is indeed a conformal transformation. In general, such conformal transformations act locally and can only be defined in some open neighbourhood $U\subset\CC$. We will return to the question of global conformal transformations momentarily. Thus, in $d=2$ the {\it local} conformal group is infinite dimensional. We can represent the generators by
$$L_n = -z^{n+1}\frac{\del}{\del z},$$
for all $n\in\Z$, together with their anti-holomorphic partners $\tilde L_n$. Therefore, the conformal group\footnote{Unless we specify a global condition, we will now take the conformal group on $\CC$ to mean the local one.} on $\CC$ is generated by the set of all holomorphic (and anti-holomorphic) vector fields on $\CC^*$. It is easy to see that these generators satisfy the {\it Witt algebra}:
$$[L_m,L_n]=(m-n)L_{m+n},$$
and similarly for $\tilde L_n$.

\subsection{The Virasoro Algebra}\label{ss:Virasoro}
The basic algebraic structure underlying CFTs in two dimensions is the Virasoro algebra, which (together with its supersymmetric generalizations) will play a central role in the rest of these notes.\footnote{While the Virasoro algebra strongly constrains the structure of every CFT, the set of allowed fields and operators in a given theory must obey additional constraints such as locality and modular invariance. See~\cite{Polchinski:1998rq}\ for further details.}
\begin{definition}
  The {\it Virasoro algebra} is the central extension of the Witt algebra by a {\it central charge} $c$. The generators  $L_n$, $n\in\Z$, and $c$ satisfy the commutation relations
  $$
  [L_m,L_n] = (m-n)L_{m+n} +\frac{c}{12}m(m^2-1)\dd_{m,-n}.
  $$
 The central charge $c\in\R$ is a constant that commutes with all of the $L_n$.
\end{definition}
Physically, in a CFT the central charge plays several (related) roles. Firstly, it ``counts" the number of degrees of freedom in the theory. Secondly, it governs the response of a conformal theory to the introduction some length scale.\footnote{For example, we can map the complex plane to a cylinder of radius $R$ by the conformal mapping
$
z\rightarrow w = R \log z.
$
However, in doing we break the scale invariance of the system by introducing the preferred length $R$, and this is reflected by a change in the vacuum energy density by an amount $-{c/(24R^2)}$.}
Thirdly, it measures the breakdown of conformal invariance when a CFT is placed on a curved surface, such as $\P^1$, instead of $\CC$.
Notice that the generators $L_0,L_{\pm1}$ satisfy a closed $SL(2,\R)$ sub-algebra:
$$[L_0,L_{\pm1}] =\mp L_{\pm1},\quad [L_1,L_{-1}]=2L_0,$$
independent of the central charge. This is the {\it global} conformal group, which maps all of $\CC$ to itself. Together with anti-holomorphic generators $\tilde L_0,\tilde L_{\pm1}$ these combine into
$$ SL(2,\R)\times SL(2,\R)\simeq SL(2,\CC)\simeq SO(1,3).$$
Thus, the global portion of the conformal group in $d=2$ takes the same form as in higher dimensions, namely $SO(1,d+1)$.

Rather than dealing with an infinite number of generators, it is convenient to package the $L_n$ into a single local operator that generates all possible conformal transformations.
\begin{definition}
 The {\it energy-momentum tensor}, $T(z)$, is the local operator that generates the complete set of (local) conformal transformations. The relation between $T(z)$ and the  Virasoro generators is given by the formal Laurent series
 $$
 T(z) = \sum_{n\in \Z} \frac{L_n}{z^{n+2} }.
 $$
\end{definition}
We can recover the Virasoro algebra from the energy momentum operator by studying the behaviour of the product $T(z)T(w)$, as $z\rightarrow w$. The main tool for this analysis is the {\it operator product expansion (OPE)}, which we will properly define later in Def.~\ref{def:OPE}. For now, we will simply assert the following:
\begin{proposition}\label{fact:TT}
In any CFT, the OPE of the energy-momentum tensor with itself is given by
\be
T(z)T(w) \sim \frac{c/2}{(z-w)^4} +\frac{2T(w)}{(z-w)^2} + \frac{\del T(w)}{z-w},\label{eqn:TT}
\ee
where $\sim$ denotes equivalence up to non-singular terms in the limit $z\rightarrow w$.
\end{proposition}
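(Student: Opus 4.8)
The plan is to read the proposition in the only direction consistent with the paper's development: the modes $L_n$ and the Virasoro algebra are the primary data, and $T(z)=\sum_n L_n z^{-n-2}$ is \emph{defined} from them, so the singular OPE must be derived from the algebra rather than the reverse. The organizing principle is the dictionary of radial quantization: the singular part of the radially ordered product $R(T(z)T(w))$ as $z\to w$ is in bijection with the commutators of Laurent modes, through the contour formula
\begin{equation*}
[L_m,L_n] = \oint_0 \frac{dw}{2\pi\I}\, w^{n+1} \oint_w \frac{dz}{2\pi\I}\, z^{m+1}\, R\big(T(z)T(w)\big),
\end{equation*}
where the inner contour encircles $w$ and the difference of the two orderings collapses to the commutator. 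Proving the proposition then amounts to showing that the stated right-hand side is the unique singular expression whose mode commutators reproduce the Virasoro algebra.

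First I would constrain the \emph{form} of the singular terms using conformal weight. Since $T$ has weight two, an operator $\cO$ of weight $h$ can appear in the $T(z)T(w)$ OPE only through a pole of order $4-h$, so only weights $h\in\{0,1,2,3\}$ contribute. The weight-$0$ contribution is a multiple of the identity (the quartic pole); the weight-$1$ (cubic) pole is excluded by the exchange symmetry $T(z)T(w)=T(w)T(z)$; the weight-$2$ operator is $T$ itself (the double pole); and conformal covariance forces the weight-$3$ term to be the descendant $\del T$ (the simple pole). Indeed the pair $\tfrac{2T(w)}{(z-w)^2}+\tfrac{\del T(w)}{z-w}$ is just the universal statement that $T$ is a weight-two field, i.e. the same singular structure as $T(z)\cO(w)\sim \tfrac{h\cO(w)}{(z-w)^2}+\tfrac{\del\cO(w)}{z-w}$ specialized to $\cO=T$, $h=2$. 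This reduces the unknowns to three numerical coefficients in
\begin{equation*}
T(z)T(w)\sim \frac{a}{(z-w)^4}+\frac{b\,T(w)}{(z-w)^2}+\frac{d\,\del T(w)}{z-w}.
\end{equation*}

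Next I would fix $a,b,d$ by inserting this ansatz into the contour formula and matching to the algebra. Evaluating the inner $z$-integral by Taylor expanding $z^{m+1}$ about $w$ converts the quartic, double, and simple poles into the residues $\tfrac{a}{6}(m+1)m(m-1)w^{m-2}$, $b(m+1)w^m T(w)$, and $d\,w^{m+1}\del T(w)$; the outer $w$-integral against $w^{n+1}$ then produces $\tfrac{a}{6}(m^3-m)\dd_{m+n,0}$ from the first and $\big(b(m+1)-d(m+n+2)\big)L_{m+n}$ from the other two. Comparison with $[L_m,L_n]=(m-n)L_{m+n}+\tfrac{c}{12}m(m^2-1)\dd_{m,-n}$ forces $b=2$ and $d=1$ (so that $b(m+1)-d(m+n+2)=m-n$), together with $a=c/2$, which is precisely the claimed OPE.

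I expect the genuine obstacle to be conceptual rather than computational. At this stage the OPE and radial ordering have not yet been defined, so the real work lies in justifying the contour-deformation step that turns a difference of radially ordered products into a commutator, and in establishing that the singular part of $T(z)T(w)$ actually \emph{terminates} --- that only finitely many poles occur and that their coefficients close on $\mathbf{1}$, $T$, and $\del T$. Once radial ordering is in place and this finiteness is granted (it follows from the boundedness of the conformal weights entering the product), the residue arithmetic above is entirely routine, so the substance of the proof lives in the setup rather than in the algebra.
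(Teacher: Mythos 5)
Your proposal is correct in its arithmetic, but it is worth being clear that the paper does not actually prove this proposition: it explicitly defers the derivation to the references ("Deriving this result from first principles would require introducing much more CFT formalism\ldots") and instead substantiates the claim by direct Wick-contraction computation in the free-scalar and free-fermion examples of Sect.~5.6, where $T=-\tfrac12\del X\del X$ is a concrete composite operator. What you do is genuinely different: you run the paper's Prop.~2.4 (OPE $\Rightarrow$ algebra) backwards, using weight-counting and the exchange symmetry $T(z)T(w)=T(w)T(z)$ to reduce the singular part to a three-parameter ansatz, and then fixing $a=c/2$, $b=2$, $d=1$ by demanding that the contour formula reproduce the Virasoro relations. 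Your residue computation checks out, and your closing paragraph correctly locates the real unproven content in the radial-ordering/contour-deformation formalism rather than in the algebra. The one step you gloss over is the reduction to three unknowns: in a general CFT there may be additional operators of weight $1$, $2$, or $3$ (a conserved current $J$, its descendant $\del J$, other weight-two fields, a weight-three primary $W$), and "conformal covariance" alone does not exclude them from the cubic, double, and simple poles. The clean way to close this within your own framework is to note that any such extra term would, through the very same contour formula, contribute the modes of that operator to $[L_m,L_n]$; since the Virasoro algebra closes on the $L_n$ and the central element alone, and since (as the paper asserts) the singular OPE and the mode algebra determine one another, all such coefficients must vanish. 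With that amendment your uniqueness claim is justified, and the two approaches are complementary: the paper's example-based verification is constructive and shows the OPE actually arises from a concrete operator product, while your argument shows it is the only singular structure compatible with the algebra, at the cost of taking the existence and termination of the OPE as given.
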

Deriving this result from first principles would require introducing much more CFT formalism that we intend to cover here, and interested readers should consult the references listed at the beginning of this section.
 We will substantiate this claim by considering specific examples in Sect.~\ref{ss:examples}. However, given this assertion we can recover the Virasoro algebra, as claimed.
\begin{proposition}\label{prop:TT}
The singular terms in the OPE~\C{eqn:TT}\ imply the Virasoro algebra for the mode operators $L_n$.
\end{proposition}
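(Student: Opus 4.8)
The plan is to pass between the operator $T(z)$ and its modes by contour integration, convert the commutator $[L_m,L_n]$ into an iterated contour integral, and then feed in the OPE~\C{eqn:TT}. First I would invert the Laurent series $T(z)=\sum_n L_n z^{-n-2}$ against the monomials $z^{k}$, which gives
\be
L_n = \oint_0 \frac{\D z}{2\pi\I}\, z^{n+1}\, T(z),
\ee
the contour encircling the origin once counterclockwise. Next, using radial quantization, where the product $T(z)T(w)$ is defined for $|z|>|w|$ and a commutator appears as the difference of the two radial orderings, I would deform contours so that this difference collapses to a small $z$-contour around the point $w$, followed by a $w$-contour around the origin:
\be
[L_m,L_n] = \oint_0 \frac{\D w}{2\pi\I}\, w^{n+1} \oint_w \frac{\D z}{2\pi\I}\, z^{m+1}\, T(z)T(w).
\ee

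I would then evaluate the inner integral by substituting~\C{eqn:TT} and applying the residue theorem, where a pole of order $k$ at $z=w$ contributes $\tfrac{1}{(k-1)!}\del_z^{k-1}(z^{m+1})\big|_{z=w}$. Expanding $z^{m+1}$ about $z=w$, the quartic pole produces the $c$-number $\tfrac{c}{12}\,m(m^2-1)\,w^{m-2}$, the double pole produces $2(m+1)\,w^{m}\,T(w)$, and the simple pole produces $w^{m+1}\,\del T(w)$. The outer $w$-integral around the origin then extracts residues: the first term survives only when $m+n=0$, giving the central term $\tfrac{c}{12}\,m(m^2-1)\,\dd_{m,-n}$, while the remaining two terms are re-expanded via $T(w)=\sum_k L_k w^{-k-2}$ and $\del T(w)=\sum_k (-k-2) L_k w^{-k-3}$, so that picking out the pole at $k=m+n$ yields $\big[2(m+1)-(m+n+2)\big]L_{m+n}=(m-n)L_{m+n}$. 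Assembling the pieces reproduces exactly the Virasoro relation stated in the definition.

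The main obstacle is conceptual rather than computational: it is the justification of the contour-deformation step, namely that the operator-ordered commutator equals the iterated contour integral, and that the OPE~\C{eqn:TT}, which is a statement about radially ordered products inside correlation functions, may legitimately be inserted into the inner integral. Establishing this rigorously requires the notion of radial ordering and the statement that the singular part of the OPE controls contour deformations through $w$. Once that framework is granted, the residue bookkeeping described above is entirely routine, and the only care needed is the correct combinatorial factor $m(m^2-1)$ from the third derivative of $z^{m+1}$ and the cancellation producing the coefficient $(m-n)$.
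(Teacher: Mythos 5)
Your proposal is correct and follows essentially the same route as the paper's proof: inverting the Laurent series to express $L_n$ as a contour integral of $z^{n+1}T(z)$, realizing the commutator as the difference of radial orderings collapsed to a small $z$-contour about $w$ followed by a $w$-contour about the origin, and then extracting the central term and the $(m-n)L_{m+n}$ piece from the quartic, double, and simple poles of the OPE. Your added remarks on justifying the contour deformation via radial ordering go slightly beyond what the paper spells out, but the computation is identical.
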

\begin{proof}
\smartqed
The idea is that we can always recover the Virasoro generators by taking appropriate residues of $T$:
$$
L_n = \oint \frac{\D z}{2\pi\I}\, z^{n+1}T(z).
$$
Logically, the commutator $[L_m,L_n]$ requires taking two contour integrals of the product $T(z)T(w)$, but the two contours should be linked somehow, otherwise we would only recover the product $L_m L_n$, instead of the commutator.
To obtain the two orderings of the mode operators, we must consider the cases where $|z|<|w|$ and $|z|>|w|$.
We accomplish this ordering by taking the $z$ contour around the point $w$, and then the $w$ contour around, say, the origin. So, in terms of local operators, the prescription to obtain commutators of mode operators is
$$
[L_m,L_n] = \oint_0\frac{\D w}{2\pi\I} \oint_w\frac{\D z}{2\pi\I}\, z^{m+1} T(z) w^{n+1} T(w),
$$
where the subscripts on the integrals indicates the point about which we integrate. Inserting the $TT$ OPE and computing the residues, we obtain
\ban
[L_m,L_n] &= \oint_0\frac{\D w}{2\pi\I} w^{n+1}\oint_w\frac{\D z}{2\pi\I} z^{m+1}\left[\frac{c/2}{(z-w)^4} +\frac{2T(w)}{(z-w)^2} + \frac{\del_w T(w)}{z-w} +\ldots \right]\\
&= \oint_0\frac{\D w}{2\pi\I}  \left[\frac{c}{12}m(m^2-1)w^{n+m-1} + 2T(w)(m+1)w^{m+n+1} + \del T(w)w^{m+n+2} \right]\\
&= \frac{c}{12}m(m^2-1) \dd_{m+n,0} +(m-n)L_{m+n},
\ean
as required.
\qed
\end{proof}
Thus we have two equivalent ways to think about the Virasoro algebra: either in terms of its generators, $L_n$, or in terms of the energy-momentum operator, $T(z)$. One should not be fooled into thinking that we have somehow replaced and infinite number of operators by just a single one, since $T(z)$ defines an operator at each point $z\in\CC$. As the Laurent expansion relating the two clearly demonstrates, neither one contains more information that then other.
As we saw in Sect.~\ref{ss:QFT}, dual presentations of this sort are prevalent in the study of quantum field theories, and are especially useful in the understanding of CFTs.

\subsection{Local Operators}\label{ss:Local}
So far the only local operator we have considered in a CFT is the energy momentum operator $T(z)$.
Let us now discuss some of the general properties common to all local operators that appear a generic CFT.
\begin{definition}
  A local operator $\cO(z,\bar z)$ has {\it weights} $(h,\tilde h)$ if, under a global rescaling of the coordinates, it transforms according to
  $$
  \cO'(\la z,\tilde\la\bar z) = \la^{-h} \tilde \la^{-\tilde h}\cO(z,\bar z).
  $$
  The {\it dimension} and {\it spin} of $\cO$ are given by the respective sum and difference of the weights: $\Delta=h+\tilde h$, and $s=h-\tilde h$.
  \end{definition}
Note that $\Delta$ is the eigenvalue of $\cO$ under infinitesimal dilations, which are generated by $L_0+\tilde L_0$, while $s$ is the eigenvalue under infinitesimal rotations, generated by $\I(L_0-\tilde L_0)$.
For notational simplicity, we will usually suppress the $\bar z$ dependence of local operators and simply write $\cO(z)$, keeping in mind that generic local operators will also involve an anti-holomorphic sector.
A key tool in the analysis of CFTs is the operator product expansion, which we will now define for arbitrary local operators.
\begin{definition}\label{def:OPE}
  Let $\{\cO_k(z)\}$ be the complete set of independent local operators appearing in a given CFT. The {\it operator product expansion (OPE)} of two local operators, $\cO_i(z)$ and $\cO_j(w)$, relates their product to a (possibly infinite) sum of local operators:
  $$
  \cO_i(z) \cO_j(w) = \sum_k c_{ijk}\, (z-w)^{h_k -h_i -h_j} \cO_k(w),
  $$
  where $h_i$, $h_j$, and $h_k$ are the weights of the corresponding local operators, and the structure coefficients $c_{ijk}$ are constants.
\end{definition}
Note that the form of the righthand side is completely fixed by global conformal invariance. However, symmetry alone cannot constrain the values of the constants $c_{ijk}$, which are analogous to the structure constants of a Lie algebra.
The OPE defines a convergent series within a radius set by the distance to the nearest local operator. For example, given a triple product of local operators, $\cO_1(z_1) \cO_2(z_2) \cO_3(z_3)$, if we expand $\cO_1 \cO_2$ about the point $z_2$, then this OPE converges within a radius of $|z_2-z_3|$.
A local operator's weights, together with the general behaviour of fields under translations, provides sufficient data to determine the following behaviour.
\begin{proposition}\label{prop:TO}
  Let $\cO(z)$ be a local operator of weight $h$. Then, the singular terms of the $T\cO$ OPE take the universal form
  \ban
  T(z)\cO(0) &\sim \ldots +\frac{h\, \cO(0)}{z^2} + \frac{\del \cO(0)}{z},
  \ean
  where the $\ldots$ denote higher order poles, which depend upon the choice of $\cO(z)$. In particular, $T(z)$ has weight $h=2$.
\end{proposition}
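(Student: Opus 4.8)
The plan is to reduce everything to the contour-integral dictionary between the energy-momentum tensor and the Virasoro modes that was already exploited in the proof of Prop.~\ref{prop:TT}. The starting point is the mode relation $L_n = \oint_0 \frac{\D z}{2\pi\I}\, z^{n+1}T(z)$, which tells us how $L_n$ acts on a local operator sitting at the origin: encircling that point gives
$$
[L_n,\cO(0)] = \oint_0 \frac{\D z}{2\pi\I}\, z^{n+1}\,T(z)\cO(0).
$$
If I write the singular part of the OPE abstractly as $T(z)\cO(0) \sim \sum_{k\geq1} z^{-k}\,\cO^{(k)}(0)$, then a one-line residue calculation yields $[L_n,\cO(0)] = \cO^{(n+2)}(0)$. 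In other words, the coefficient of the double pole $z^{-2}$ is $[L_0,\cO(0)]$, the coefficient of the simple pole $z^{-1}$ is $[L_{-1},\cO(0)]$, and every deeper pole $z^{-k}$ with $k\geq3$ records $[L_{k-2},\cO(0)]$. So the whole proposition becomes a matter of identifying two specific commutators.

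The second step is to pin down those two commutators geometrically. Since $\cO$ has weight $h$, meaning $\cO'(\la z,\tilde\la\bar z)=\la^{-h}\tilde\la^{-\tilde h}\cO(z,\bar z)$, and since dilations are generated by $L_0+\tilde L_0$, the infinitesimal holomorphic rescaling at the origin gives $[L_0,\cO(0)] = h\,\cO(0)$; this fixes the double-pole coefficient to be $h\,\cO(0)$. Likewise, $L_{-1}$ acts as the translation generator $-\del_z$, so the elementary transformation law of a field under an infinitesimal shift gives $[L_{-1},\cO(0)] = \del\cO(0)$, fixing the simple-pole coefficient to be $\del\cO(0)$. The remaining higher poles are the commutators $[L_n,\cO(0)]$ with $n\geq1$, which are genuinely operator-dependent (they vanish exactly when $\cO$ is primary) and so are left as the unspecified $\ldots$ in the statement.

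For the final claim that $T$ itself has weight $h=2$, I would simply compare the universal double pole $h\,\cO(w)/(z-w)^2$ with the $TT$ OPE of Prop.~\ref{fact:TT}, whose double-pole term is $2T(w)/(z-w)^2$; matching the two forces $h=2$.

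The hard part is really the very first step: justifying that $[L_n,\cO(0)]$ is correctly computed by the single contour $\oint_0$ acting on the radially-ordered product $T(z)\cO(0)$. This is the same radial-quantization subtlety that underlies Prop.~\ref{prop:TT}, where a commutator had to be assembled from two orderings $|z|\gtrless|w|$ by deforming linked contours; here one must argue that the contour surrounding the origin reproduces the mode action on the operator inserted there. Once that prescription is granted, the rest of the argument is just reading off residues and recalling that $L_0$ and $L_{-1}$ generate dilations and translations.
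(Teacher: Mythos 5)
Your proposal is correct and follows essentially the same route as the paper's proof: both express $[L_n,\cO(0)]$ as the contour integral $\oint \frac{\D z}{2\pi\I}\,z^{n+1}T(z)\cO(0)$, identify the double- and simple-pole coefficients with the action of $L_0$ (dilations, eigenvalue $h$) and $L_{-1}$ (translations, $\del\cO$), and read off $h=2$ for $T$ from the $TT$ OPE of Prop.~\ref{fact:TT}. Your explicit residue bookkeeping $[L_n,\cO(0)]=\cO^{(n+2)}(0)$ and your flagging of the radial-ordering subtlety are just slightly more detailed renderings of the same argument.
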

\begin{proof}
\smartqed
  The key point is that under an infinitesimal conformal transformation,
  $$
  z\rightarrow z' = z+\sum_{n\in\Z} z^{n+1}\ve_n,
  $$
  where $\ve_n$ are a collection of small parameters, the variation of $\cO(z)$  is given by its commutator with the appropriate generator:
  $$
  \dd\cO(z) \equiv \cO'(z) - \cO(z) = -\sum_n \ve_n [L_n,\cO(z)] +O(\ve^2).
  $$
  Similar to the method employed in Prop.~\ref{prop:TT}, we can write these variations in terms of contour integrals:
  $$
  \dd_n\cO(0) = -\ve_n[L_n,\cO(0)] = -\ve_n\oint \frac{\D z}{2\pi\I} \, z^{n+1} T(z) \cO(0).
  $$
The two lowest poles in the $T\cO$ OPE follow immediately from the fact that $L_0$ generates (holomorphic) dilations, with eigenvalue $h$, and $L_{-1}$ generates translations in $z$. In particular, under a translation we have $\cO'(z) =\cO(z-\e)$
and so
$$
\dd_{-1}\cO(z) = -\e[L_{-1},\cO(z)] = \cO'(z) -\cO(z)  =  -\e\del \cO(z).
$$
The weight of $T(z)$ can then be read off from the $TT$ OPE in~\ref{eqn:TT}.
\qed
\end{proof}

From a single local operator, we can extract conventional operators by taking appropriate contour integrals, just as we did for $T(z)$.
\begin{definition}
Given a local operator $\cO(z)$ of weight $h$, we associate an infinite number of {\it mode operators}, denoted $\cO_n$, by the relations
$$
\cO(z) = \sum_{n\in\Z} \frac{\cO_{n}}{z^{n+h}},\qquad \cO_n =\oint \frac{\D z}{2\pi\I} \,z^{n+h-1} \cO(z).
$$
\end{definition}
The singular terms in the OPE of $\cO(z)\cO'(w)$ completely determines the mode algebra $[\cO_m,\cO'_n]$, and vice-versa.

\subsubsection*{Primary Operators}
In Prop.~\ref{prop:TO}, we saw that the lowest order poles in the OPE of $T(z)$ with any local operator $\cO(w)$ is fixed by its behaviour under dilations and translations.
To determine the higher order poles require knowledge of how $\cO$ behaves under the action of $L_n$, for $n\geq1$, which is not universal.
However, there is an important set of local operators for which the poles in the OPE with the energy-momentum tensor are completely determined.
\begin{definition}\label{def:primary}
A local operator $\cO(w)$ is called {\it primary} if its OPE with $T(z)$ has a pole of order (at most) 2 at $z=w$:
$$
T(z)\cO(w) \sim \frac{h\, \cO(w)}{(z-w)^2} +\frac{\del\cO(w)}{z-w},
$$
with all higher order singular terms vanishing. A local operator that is not primary is called {\it secondary} or a {\it descendant}.
\end{definition}
Notice that $T(z)$ is not a primary operator, unless $c=0$. As explained earlier, the central charge is related to the breakdown of conformal invariance in a CFT, and the non-primary nature of $T(z)$ is related to this fact.
It can be shown that the infinitesimal conformal transformations of a primary operator, determined by its OPE with $T$, ``integrates" to the following finite form
$$
\cO'(z') = \left(\del_z z'\right)^{-h} \cO(z)
$$
for {\it any} conformal transformation, not just the global dilations we used to define $h$ and $\tilde h$. Non-primary operators, such as $T(z)$, would have additional correction terms on the right-hand side of the finite conformal transformation.
A nice feature of primary operators is that their algebra with the Virasoro generators is completely determined:
$$
[L_m,\cO_n] = \left(m(h-1)-n\right)\cO_{m+n}.
$$
We will see in the coming sections that primary operators play a central role in CFTs, and their supersymmetric extensions.

\subsection{Representations of the Virasoro Algebra}\label{ss:reps}
For a Euclidean field theory, canonical quantization is a rather ambiguous procedure since there is no ``time" direction to single out.
In order to help us choose a Hamiltonian, it helps to recall that our ultimate goal is understanding the CFTs that live on the worldsheet of a propagating string.
To that end, consider an infinitely long cylinder, ${\cal C}=\R\times S^1$, which we identify as the worldsheet of a closed string.
It is natural to think of the length of the cylinder as corresponding to a ``time" direction, $t\in\R$,  and its circumference as ``space", with $\th\sim\th+2\pi$.
Then, since we are working with a two-dimensional conformal theory, we can consider the conformal map
$$
z = \E^{t+\I \th},
$$
which maps ${\cal C}$ to the complex plane $\CC$.
\begin{figure}[h]
\centering
\includegraphics[trim= 0 400 0 50 ,clip,scale=0.5]{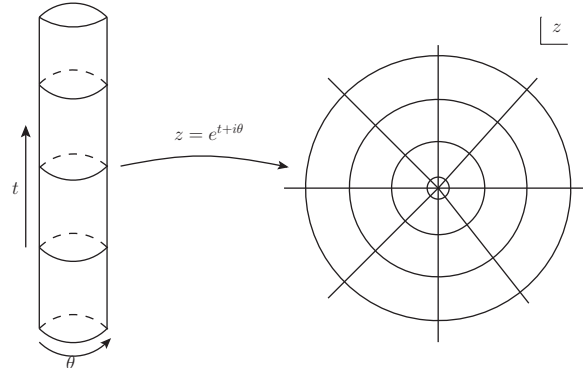}
\caption{\textit{Conformal map from the cylinder, ${\cal C}$, to the complex plane, $\CC$}}
\label{fig:Map}
\end{figure}
In particular, our ``time" direction, $t$, is now identified with the radial direction on the plane, and the infinite past, $t\rightarrow-\infty$, maps to the point $z=0$, while distant future maps to the point $z=\infty$ on $\P^1$.
This characterizes our quantization procedure.
\begin{definition}
  In the {\it radial quantization} of a CFT on $\CC$, the Hamiltonian is chosen to coincide with dilation operator, \ie\  $H=L_0+\tilde L_0$, which generates radial evolution.\footnote{To be precise, changing between cylinder and plane frames induces a shift in $H$ by $(c+\tilde c)/{24}$ because of the anomalous (\ie non-primary) transformation properties of $T(z)$ and $\tilde T(\bar z)$. For simplicity, throughout these notes we will ignore these subtle corrections, and refer the interested reader to the references for a more thorough treatment.} Therefore, a state's energy is given by its dimension $\Delta=h+\tilde h$.
If the spectrum of $H$ is discrete, the theory is called {\it non-degenerate}, otherwise it is {\it degenerate}.
\end{definition}
Now that we have chosen a Hamiltonian, we can separate all of the mode operators into raising and lowering types (and those that generate symmetries). From the Virasoro algebra, we have
$$
[L_0,L_n] = -nL_n,
$$
so that $L_n$ with $n>0$ lowers the energy of a state, while $n<0$ raises its energy. Also for a primary operator $\cO(z)$, we have
$$
[L_0,\cO_n] = -n\cO_n,
$$
so once again modes with $n>0$ act as lowering operators, and $n<0$ are raising operators.\footnote{This is why we shifted the powers of $z$ in the Laurent expansion of $\cO(z)$ by $h$.}
Having identified the lowering operators on the quantum theory, we may now assume the existence of a vacuum.
\begin{definition}
  The {\it vacuum state}, denoted $\ket{0}$, is the $SL(2,\CC)$ invariant vector in the Hilbert space of states that is annihilated by all of the lowering operators of the CFT. In particular,
  $$
  L_n\ket{0} = \tilde L_n\ket{0} = 0,\quad\forall~n\geq-1.
  $$
\end{definition}
Notice that, in analogy with the Poincar\'e invariance of the vacuum in a typical quantum field theory, we have demanded that the vacuum be invariant under the global conformal group, $SL(2,\CC)$.\footnote{We will see by the end of this section that demanding invariance of the vacuum under the full Virasoro algebra is too strong a requirement.}
With some additional mild assumptions, we will show that at the end of this section that the vacuum state is well-defined as the unique (up to scalar multiplication) lowest energy, $SL(2,\CC)$  invariant state.

Starting from the vacuum state, we can build up the entire Hilbert space of states, ${\cal H}$, by acting with all possible combinations of the rasing operators, $L_{-n}$ and $\cO_{-n}$ for $n>0$.
A remarkable fact about CFTs, which certainly does not hold for a typical quantum field theory, is that the structure of the Hilbert space is completely determined by the set of local operators.
\begin{theorem}
  {\rm (The state-operator correspondence)} In any CFT, especially within the framework of radial quantization, the Hilbert space of states is isomorphic to the complete set of local operators.
\end{theorem}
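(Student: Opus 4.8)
The plan is to prove the isomorphism by constructing the two maps explicitly and then showing they are mutually inverse. Radial quantization, as set up above, is precisely the framework that makes this possible: because the infinite past $t\to-\infty$ is sent to the single point $z=0$, the natural way to prepare an incoming state is to insert a local operator at the origin. So first I would define the map from operators to states by
$$
\cO \longmapsto \ket{\cO} := \lim_{z\to 0}\cO(z)\ket{0}.
$$
Using the mode expansion $\cO(z)=\sum_n \cO_n\, z^{-n-h}$ together with the requirement that $\cO(z)\ket{0}$ be regular at the origin, which forces $\cO_n\ket{0}=0$ for all $n>-h$, every potentially singular term is killed while the terms with $n<-h$ vanish by positive powers of $z$; only $n=-h$ survives, so $\ket{\cO}=\cO_{-h}\ket{0}$ is finite.

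Next I would verify, using $[L_0,\cO_n]=-n\,\cO_n$ and the primary commutator $[L_m,\cO_n]=(m(h-1)-n)\cO_{m+n}$ from Sect.~\ref{ss:Local}, that when $\cO$ is primary the resulting vector is genuinely of highest weight: $L_0\ket{\cO}=h\ket{\cO}$ and $L_m\ket{\cO}=0$ for every $m>0$. Descendant states are then produced by acting with the raising modes $L_{-n}$ and $\cO_{-n}$, so that the full representation-theoretic structure on the state side is reproduced from the operator content. Injectivity of this forward map is then immediate, since the operator can be recovered from the state: the correlation functions $\langle\cO(z)\cdots\rangle$ are determined by radial evolution of $\ket{\cO}$, so distinct operators yield distinct states.

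For the reverse direction I would realize each raising mode acting on a state as an honest operator insertion. The key observation is that a mode acting on $\ket{\cO}=\cO(0)\ket{0}$ is itself the origin-insertion of a local operator: defining the descendant field
$$
(L_{-n}\cO)(w) := \oint_w \frac{\D z}{2\pi\I}\,\frac{T(z)\,\cO(w)}{(z-w)^{n-1}},
$$
one checks $(L_{-n}\cO)(0)\ket{0}=L_{-n}\cO(0)\ket{0}=L_{-n}\ket{\cO}$, and similarly for the $\cO_{-n}$. Iterating these nested contour integrals shows that every descendant state is the origin-insertion of a bona fide local operator, and extending by linearity gives the candidate inverse map, state $\mapsto$ operator.

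The main obstacle is \emph{surjectivity together with completeness}: showing that the states produced this way actually exhaust $\cH$. This is exactly where the defining structure of a well-behaved CFT must enter — one assumes, or separately establishes, that $\cH$ decomposes as a direct sum of Virasoro highest-weight modules, each generated from a primary state by the raising operators. Granting this, the forward map hits every primary and, via the descendant-field construction, every descendant, so it is onto. The cleanest conceptual justification for this completeness is the path-integral picture: a state on a circle of radius $r$ is the functional computed by the path integral over the disk $|z|\le r$, and shrinking $r\to 0$ collapses all of that data into a single local insertion at the origin, so by construction nothing in $\cH$ fails to arise from some local operator. I would therefore use the algebraic construction above to exhibit the explicit isomorphism and invoke this path-integral argument to close the completeness gap.
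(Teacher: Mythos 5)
Your proposal is correct and follows essentially the same route as the paper: the operator-to-state map $\cO\mapsto\lim_{z\to0}\cO(z)\ket{0}=\cO_{-h}\ket{0}$ with the regularity condition $\cO_n\ket{0}=0$ for $n>-h$, and the inverse direction justified by collapsing the $t\to-\infty$ spatial slice to the single point $z=0$ (your path-integral disk-shrinking argument is the same heuristic the paper uses). If anything, your write-up is a more careful rendering of the paper's admittedly heuristic proof, since you additionally spell out the descendant-field contour construction and explicitly flag the completeness/surjectivity gap that the paper passes over in silence.
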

\begin{proof}
\smartqed
  We will only present a heuristic proof of this important theorem, leaving the details to the references. We start by considering an arbitrary state $\ket{\psi}$ in the cylinder frame, ${\cal C}$.
  This state corresponds to the complete set of profiles $\{\cO_i(\th)\}$ of the fields (\ie\ local operators) at a fixed time slice, acting on the vacuum.
  In a string theoretic application, these states then correspond to the possible configurations of the string in space at any given time.
  The time evolution of these states is dictated by the unitary operator $\E^{-\I Ht}$. Thus, for an arbitrary state we have
  $$
\ket{\psi(\th,t)} = \E^{-\I Ht}\ket{\psi(\th)} = \E^{-\I Ht}\sum_i \cO_i(\th)\ket{0}.
$$
The beauty of radial quantization is that if we propagate any state back to $t\rightarrow-\infty$, and apply the conformal map $z=\exp(t+i\th)$, then the entire spatial slice of ${\cal C}$ in the infinite past gets mapped to the point $z=0$.
Thus, the {\it non-local} state $\ket{\psi(\th)}$ is mapped to the local operator $\sum_i\cO_i(0)$.

Going in the other direction, we can begin with any local operator $\cO(z)$ of weight $h$ and consider its mode decomposition,
$$\cO(z) = \sum_n z^{-n-h}\cO_n.$$
Clearly, only modes with $n+h\geq0$ could possibly contribute at $z=0$, and we already know that modes with $n>0$ will annihilate the vacuum. In order that $\cO(0)\ket{0}$ be well-defined, we must postulate that not only $n>0$ annihilate the vacuum but
$$
\cO_n\ket{0}=0,\quad \forall~n\geq1-h.
$$
Notice that this is the behaviour of $T(z)$, which has $h=2$ and $L_{-1}\ket{0}=0$. We will see in examples that this assumptions is indeed satisfied.
Therefore, to the local operator $\cO(z)$ we can assign the state
$$
\ket{\cO} = \lim_{z\rightarrow0}\cO(z)\ket{0} = \cO_{-h}\ket{0}.
$$
In particular, the vacuum state, $\ket{0}$, simply corresponds  to the unit operator $1$.
\qed
\end{proof}
Thus, to understand the structure of the Hilbert space we only need to study the local operators, which have already done in detail.
In particular, recall that primary operators are supposed to play a central role in CFTs. Let us now explain why.
\begin{definition}
A state $\ket{\cO}$ is called a {\it primary state} if it is associated (by the state-operator correspondence) with a primary operator. Equivalently,
$$
L_0\ket{\cO}= h \ket{\cO},\qquad L_n\ket{\cO} = 0,\quad \forall~n>0.
$$
A state $\ket{\cO_{k_1,k_2,k_3,\ldots}}$ is called a {\it descendant of $\ket{\cO}$} if it follows from the primary state $\ket{\cO}$ by application of the raising operators $L_{-n}$:
$$
\ket{\cO_{k_1,k_2,k_3,\ldots}} = \ldots L_{-k_3}L_{-k_2}L_{-k_1}\ket{\cO},\quad 0<k_1\leq k_2\leq k_3 \ldots.
$$
A primary state together with all of its descendants comprise a {\it conformal family}, also called a {\it Verma module}.
\end{definition}
Notice that if a primary state has weight $h$, then its descendants will have weights $h+\sum_i k_i>h$. Also, note that the energy-momentum tensor is a descendant of the vacuum, since $T(z)\ket{0}=L_{-2}\ket{0}$.
Clearly, a conformal family transforms as a representation of the Virasoro algebra\footnote{In most cases of interest, the Virasoro Verma module is not irreducible and the corresponding CFT is neither rational nor unitary. However, one can also construct a CFT associated to the irreducible quotient. This is then a CFT with an interesting representation theory, especially there are central charges for which the simple Virasoro algebra is the symmetry algebra of a unitary rational CFT. A complete book is devoted to this subtle question \cite{KI}. We thank T.~Creutzig for explaining this point to us.}, and furthermore it is completely characterized by its primary, which is the state of lowest weight.\footnote{Ironically, these states are usually referred to as highest weight states, in analogy with standard Lie theory.}
Thus, to understand the entire Hilbert space of a CFT it suffices to study the primary states as the rest are related by conformal transformations.

\subsubsection*{An Aside: Building Representations}
Perhaps the statement that a conformal family forms a representation of the Virasoro algebra deserves further comment. Let us illustrate this point by briefly recalling the construction of (irreducible) representations of ${\frak su}(2)$.
To that end, we begin with the algebra $[J_0,J_\pm]=\pm J_\pm,~[J_+,J_-]=2J_0$. We work in an eigenbasis of $J_0$, so that $J_0\ket{j}=j\ket{j}$.
Then we postulate the existence of a lowest weight state $\ket{j_{\rm min}}$, such that $J_-\ket{j_{\rm min}}=0$. We build our representation of ${\frak su}(2)$ by applying $J_+$ repeatedly to $\ket{j_{\rm min}}$.
However, $J_+$ is typically nilpotent with $(J_+)^{2j_{\rm min}+1}=0$. Thus, each state in the $(2j_{\rm min}+1)$-dimensional representation of ${\frak su}(2)$ is then of the form $(J_+)^n\ket{j_{\rm min}}$ for some $0\leq n\leq 2j_{\rm min}$.
The only differences between the representations of the Virasoro algebra and ${\frak su}(2)$ is that there are an infinite number of raising operators, $L_{-n}$ for $n>0$, and they are no longer nilpotent. Thus, the irreducible representations of the Virasoro algebra (the conformal families) are all infinite dimensional.

\subsubsection*{Unitary CFTs}
Let us close this section by describing some general features of the spectrum of a wide class of (physically relevant) CFTs.
\begin{definition}
  A conformal field theory is called {\it unitary} if the inner product $\braket{\ }{\ }$ on its Hilbert space is positive and
  $$
  L_n^\dag = L_{-n},\quad \tilde L_n^\dag = \tilde L_{-n}.
  $$
\end{definition}
Note that the dual state $\bra{0}$ is naturally identified with the insertion of the unit operator $1$ at the point $z\rightarrow\infty$, corresponding to the infinite future. Our discussions will focus solely on unitary CFTs, in part because they behave as we expect.
\begin{proposition}\label{prop:unitary}
  Let $\cO(z,\bar z)$ be a local operator with weights $(h,\tilde h)$. Then, in any unitary CFT with central charges $(c,\tilde c)$ the following properties hold:
  \begin{itemize}
  \item $h\geq0$, with $h=0\ \Leftrightarrow\ \cO(z,\bar z) = \cO(\bar z)$,
  \item similarly $\tilde h\geq0$, with  $\tilde h=0\ \Leftrightarrow\ \cO(z,\bar z) = \cO(z)$,
  \item and $c,\tilde c\geq0.$
  \end{itemize}
\end{proposition}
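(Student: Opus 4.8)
The plan is to exploit the two pillars of the unitarity hypothesis: the state-operator correspondence, which lets me trade each local operator $\cO$ with weights $(h,\tilde h)$ for a state $\ket{\cO}$ satisfying $L_0\ket{\cO}=h\ket{\cO}$ and $\tilde L_0\ket{\cO}=\tilde h\ket{\cO}$, and the positivity of the inner product together with the reality conditions $L_n^\dagger=L_{-n}$, $\tilde L_n^\dagger=\tilde L_{-n}$. The basic mechanism throughout is that the norm-squared of any descendant state must be non-negative, and evaluating such a norm via the Virasoro commutators produces precisely the quantity whose sign I want to control.

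For the bound $h\geq0$ I would first treat a primary state $\ket{\cO}$, for which $L_1\ket{\cO}=0$, and compute the norm of its level-one descendant $L_{-1}\ket{\cO}$. Using $L_{-1}^\dagger=L_1$ and the commutator $[L_1,L_{-1}]=2L_0$, one finds $\|L_{-1}\ket{\cO}\|^2=\bra{\cO}L_1 L_{-1}\ket{\cO}=2h\,\braket{\cO}{\cO}$. Since the left side is a norm and hence $\geq0$, while $\braket{\cO}{\cO}>0$ by positivity, this forces $h\geq0$. The equality case is immediate from the same identity: $h=0$ forces $\|L_{-1}\ket{\cO}\|^2=0$, hence $L_{-1}\ket{\cO}=0$ in a positive-definite inner product space; since $L_{-1}$ implements the holomorphic derivative $\del_z$ under the state-operator map, this says $\del_z\cO=0$, i.e. $\cO=\cO(\bar z)$. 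Conversely a purely antiholomorphic operator manifestly has $h=0$. The statements for $\tilde h$ follow verbatim with $\tilde L_{\pm1}$.

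For the central charge the analogous move is to apply a raising operator to the vacuum. Since $\ket{0}$ is $SL(2,\CC)$ invariant, $L_{-1}\ket{0}=0$, so the first nontrivial descendant is $L_{-2}\ket{0}$ (which is exactly $T(z)\ket{0}$). Using $[L_2,L_{-2}]=4L_0+\tfrac{c}{2}$ together with $L_2\ket{0}=L_0\ket{0}=0$, I obtain $\|L_{-2}\ket{0}\|^2=\bra{0}L_2 L_{-2}\ket{0}=\tfrac{c}{2}\braket{0}{0}\geq0$, whence $c\geq0$; the same computation with $\tilde L_{-2}$ gives $\tilde c\geq0$.

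The step requiring the most care, and the main obstacle, is promoting these primary-state arguments to arbitrary local operators, together with the logical structure of the equality case. Every state of definite weight is a (possibly infinite) combination of Virasoro primaries and their descendants, and since each raising operator $L_{-n}$ with $n>0$ strictly increases the $L_0$-eigenvalue, the minimal holomorphic weight within any conformal family is carried by its primary; thus $h\geq0$ for primaries propagates to all operators. For the sharp equality I must argue that an operator with $h=0$ cannot be a nontrivial holomorphic descendant (its weight would exceed that of its primary), so its holomorphic part is genuinely primary and the $L_{-1}\ket{\cO}=0$ argument applies. I would also take care to confirm that the states whose norms I divide by are nonzero, so that positivity of the inner product legitimately yields the strict positivity of $\braket{\cO}{\cO}$ and $\braket{0}{0}$ used above.
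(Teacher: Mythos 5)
Your proposal is correct and follows essentially the same route as the paper: positivity of $\|L_{-1}\ket{\cO}\|^2$ for primaries gives $h\geq 0$ and the $h=0$ equivalence, the norm of a level-$n$ descendant of the vacuum gives $c\geq 0$ (you fix $n=2$ where the paper keeps $n>1$ general, which changes nothing), and the reduction from arbitrary operators to primaries is handled the same way. The only cosmetic difference is the converse direction $\cO=\cO(\bar z)\Rightarrow h=0$, which you call ``manifest'' while the paper derives it from the identity $2h=\|L_{-1}\ket{\cO}\|^2-\|L_1\ket{\cO}\|^2$ combined with $L_{-1}\ket{\cO}=0$.
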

\begin{proof}
\smartqed
  To prove the lower bound on $h$, it suffices to consider only the primary states since a descendant's weight is always greater. Then, by the positivity of the inner product we have
  $$
  0\leq\left|\left|L_{-1}\ket{\cO}\right|\right|^2 = \bra{\cO}L_1L_{-1}\ket{\cO} = \bra{\cO}(2L_0+L_{-1}L_1)\ket{\cO} = 2h,
  $$
since $L_n\ket{\cO}=0$ for $\cO$ primary and $n>0$. Note that if $\cO$ is not primary (with respect to $L_n$)\footnote{If $\cO$ is a descendent solely by the action of $\tilde L_{-n}$ then we consider it primary for this argument.},
then $h>h_p\geq0$ where $h_p$ is the smallest weight within its entire conformal family. Therefore, if $h=0$ then $\cO$ must be primary and we have
$$
h=0\ \Rightarrow\ L_{-1}\ket{\cO}=0\ \Leftrightarrow\ \del\cO(0)\ket{0}=0\ \Leftrightarrow\ \del\cO(z,\bar z)=0,
$$
and so $\cO(z,\bar z)=\cO(\bar z)$. However, for {\it any} state $\ket{\psi}$ we have
$$
0\leq 2h_\psi = \bra{\psi}2L_0\ket{\psi} = \bra{\psi}[L_1,L_{-1}]\ket{\psi} = ||L_{-1}\ket{\psi}||^2 - ||L_1\ket{\psi}||^2.
$$
So if $\cO$ is anti-holomorphic then $L_{-1}\ket{\cO}=0$, implying $0\leq h\leq 0$, and therefore $h=0$. As a bonus, we learn that states with $h=0$ are also annihilated by $L_1$, and thus invariant under the holomorphic $SL(2,\R)$.
Analogous arguments apply for $\tilde h$.

Finally, to show the lower bound on the central charges, we compute the norm of $L_{-n}\ket{0}$ for any $n>1$:
$$
0\leq ||L_{-n}\ket{0}||^2 = \bra{0}L_nL_{-n}\ket{0} = \bra{0}\left(2nL_0 + \frac{c}{12}n(n^2-1)\right)\ket{0} = \frac{c}{12}n(n^2-1).
$$
Since $n>1$, we have $c\geq0$, and similarly for $\tilde c$.
\qed
\end{proof}
Some basic properties of the vacuum state (claimed earlier) follow immediately.
\begin{corollary}
In a unitary CFT, the vacuum is the unique $SL(2,\CC)$ invariant state, and furthermore it has the lowest possible energy.
\end{corollary}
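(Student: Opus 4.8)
The plan is to reduce both assertions to the positivity bounds of Prop.~\ref{prop:unitary} by means of the state--operator correspondence. To any state $\ket{\psi}$ I would associate the local operator $\cO(z,\bar z)$ of weights $(h,\tilde h)$ provided by that correspondence, and recall that in radial quantization the energy of the state is its dimension $\Delta = h + \tilde h$.

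For the claim that the vacuum has the lowest energy, I would simply invoke Prop.~\ref{prop:unitary}, which gives $h\geq0$ and $\tilde h\geq0$ for every local operator, whence $\Delta=h+\tilde h\geq0$ for every state. Since the vacuum $\ket{0}$ corresponds to the unit operator $1$, it has $h=\tilde h=0$ and therefore $\Delta=0$, which is therefore the minimum energy. Thus no state lies below the vacuum in the spectrum.

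For the uniqueness of the $SL(2,\CC)$ invariant state, suppose $\ket{\psi}$ is annihilated by the global generators $L_0,L_{\pm1}$ and $\tilde L_0,\tilde L_{\pm1}$. In particular $L_0\ket{\psi}=\tilde L_0\ket{\psi}=0$ forces $h=\tilde h=0$. I would then apply the two biconditionals of Prop.~\ref{prop:unitary}, namely $h=0 \Leftrightarrow \cO(z,\bar z)=\cO(\bar z)$ and $\tilde h=0 \Leftrightarrow \cO(z,\bar z)=\cO(z)$. Taken together these give $\del\cO=\delbar\cO=0$, so the associated operator carries neither holomorphic nor anti-holomorphic dependence and is hence constant. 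The only such local operator is a scalar multiple of the identity $1$, which corresponds to $\ket{0}$, so $\ket{\psi}\propto\ket{0}$.

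The delicate point is precisely this last step: that an operator with $\del\cO=\delbar\cO=0$ must be a c-number multiple of the identity, equivalently that the weight-$(0,0)$ sector of a unitary CFT is one-dimensional. I would justify it by observing (as already appears inside the proof of Prop.~\ref{prop:unitary}) that $h=0$ implies both $L_{-1}\ket{\cO}=0$ and $L_1\ket{\cO}=0$, and symmetrically for $\tilde h$; hence $h=\tilde h=0$ is in fact \emph{equivalent} to full $SL(2,\CC)$ invariance. This collapses the two assertions of the corollary into the single statement that the space of $SL(2,\CC)$ invariant states is one-dimensional and spanned by the vacuum, and it is consistent to regard the uniqueness of the identity operator as part of the defining data of the vacuum sector.
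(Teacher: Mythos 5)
Your proposal is correct and follows essentially the same route as the paper: minimality of the vacuum energy comes from the positivity bounds $h,\tilde h\geq 0$ of Prop.~\ref{prop:unitary}, and uniqueness comes from noting that an $SL(2,\CC)$ invariant state has $h=\tilde h=0$, hence $\del\cO=\delbar\cO=0$, hence $\cO$ is constant and therefore a multiple of the unit operator. Your extra remark flagging the one-dimensionality of the weight-$(0,0)$ sector as the delicate step is a fair observation, but you resolve it exactly as the paper implicitly does.
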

\begin{proof}
\smartqed
  The vacuum has $h=\tilde h=0$, which is minimal, and from the previous proof this implies $SL(2,\CC)$ invariance. The only thing to check is uniqueness. Suppose $\ket{\cO}$ is another $SL(2,\CC)$ invariant state. Then $\del\cO=\bar \del\cO = 0$, which means $\cO$ is constant and therefore (a multiple of) the unit operator.
\qed
\end{proof}

\subsection{Examples of Free CFTs}\label{ss:examples}
In this section we will briefly present some examples of simple CFTs.
This will serve to illustrate the formal concepts we have developed so far, and will also serve as the basic building block for more complicated examples when we discuss superconformal theories.
Many results will be stated without proof, since a full detailed account would be too involved. Hopefully, we have developed the general theory up to this point enough that the reader should be able to fill in many of the steps.
Some results, however, require material beyond what we have covered in these notes, and the reader should consult the references.
\subsubsection*{Free Scalar Field}
The standard example to begin with is a single, free (massless) scalar field, $X(z,\bar z)$. The action for this theory is
$$
S = \frac{1}{4\pi}\int \D^2z\ \del X \bar\del X,
$$
which we may interpret as the worldsheet theory of string propagating on $\R$.\footnote{The normalization is set by the tension, $T=({2\pi\a'})^{-1}$. In this section we will work in units where $\a'=2$, which will considerably simply the formulas. This convention differs from Sect.~\ref{ss:Tdual}, where we used $\a'=1$.} Multidimensional generalizations are straightforward to construct by including additional scalar fields.
The Euler-Lagrange equation of motion for this action is simply
$$
\del\bar\del X(z,\bar z)=0.
$$
So, at least classically, $X(z,\bar z)$ is harmonic function.
However, we know that in a quantum mechanical theory $X(z,\bar z)$ should be treated as a local operator, and singularities can develop when local operators approach one another.
It turns out, by a basic quantum field theory computation, that this equation of motion holds up to insertions of $X$ at the same point:
$$
X(w,\bar w)\del\bar\del X(z,\bar z)  = -2\pi \dd^2(z-w,\bar z-\bar w).
$$
This determines the OPE
$$
X(z,\bar z) X(w,\bar w) \sim -\ln|z-w|^2,
$$
up to possible holomorphic/anti-holomorphic (and therefore non-singular) terms.
One lesson the reader should take away from this OPE is that $X$ not a ``good" operator, since it's OPE is non polynomial in $(z-w)^{-1}$.
On the other hand the operators $\del X(z)$ and $\bar\del X(\bar z)$, which by the equation of motion are respectively holomorphic and anti-holomorphic fields, are much better behaved.
For example, the OPE of $\del X$ with itself is
\be\label{eqn:dXdX}
\del X(z) \del X(w) \sim -\frac{1}{(z-w)^2},
\ee
which follows by suitably differentiating the $XX$ OPE. In addition, these are the operators used to construct the energy momentum operators:
$$
T(z) = -\hlf \del X(z) \del X(z),
$$
and similarly for $\tilde T(\bar z)$.\footnote{As a composite local operator, we should be careful in how we define $T(z)$, because of potential singularities coming from the OPE.
Typically, this ambiguity is handled by so-called {\it normal ordering}, by defining composite operators with the singular terms subtracted off.
So, more precisely, the energy momentum operator is given by
$$
T(z) = -\hlf \lim_{w\rightarrow z}\left(\del X(z) \del X(w) +\frac{1}{(z-w)^2}\right).
$$
In what follows, we will always assume that composite operators are normal ordered.}
Notice that $\bar\del T(z) = \del\tilde T(\bar z)=0$, which is a sign that this theory is in fact conformal.
Using~\C{eqn:dXdX}\ we can work out the product rule for $T(z)$ with itself:
\ban
T(z) T(w) &\sim \frac{1/2}{(z-w)^4} -\frac{\del X(z) \del X(w)}{(z-w)^2} \\
&\sim \frac{1/2}{(z-w)^4} -\frac{\del X(z)\del X(z)}{(z-w)^2} - \frac{\del^2 X(z)\del X(z)}{z-w}\\
&\sim \frac{1/2}{(z-w)^4} + \frac{2T(w)}{(z-w)^2} +\frac{\del T(w)}{z-w},
\ean
where the first term on the righthand side comes from the two ways to contract all four $\del X$, and the second term comes from the four possible pairwise products.
This confirms the claim in Prop.~\ref{fact:TT}\ for the $TT$ OPE, at least for this specific example, which turns out to have $c=\tilde c=1$.
Similarly, we can work out
\ban
T(z) \del X(w)  &\sim \frac{\del X(z)}{(z-w)^2}\ \sim\ \frac{\del X(w) }{(z-w)^2}+\frac{\del^2 X(w)}{(z-w)},
\ean
which shows that $\del X$ is a primary with $(h,\tilde h)=(1,0)$.

The Laurent expansion for $\del X(z)$ is holomorphic, and similarly for $\bar\del X(\bar z)$:
$$
\del X(z) = -\I \sum_{n\in\Z} \frac{\a_n}{z^{n+1}},\quad \bar\del X(\bar z) = -\I \sum_{n\in\Z}\frac{\tilde \a_n}{\bar z^{n+1}}.
$$
As usual, the modes $\a_n$ with $n<0$ act as raising operators and $n<0$ as lowering operators.
By taking residues of~\C{eqn:dXdX}, we can easily work out the associated commutators:
\be\label{eqn:aa}
[\a_m,\a_n] = m\dd_{m,-n}.
\ee
$\a_0$ and $\tilde \a_0$ are special since they commute with $L_0$ and $\tilde L_0$, and therefore they correspond to conserved charges.
In fact they correspond to the same charge, namely the momentum in the target space $\R$.
To see this, integrate the Laurent series of $\del X$ and $\bar\del X$:
$$
X(z,\bar z) = x -\I \a_0\ln z -\I \tilde\a_0\ln\bar z +\I \sum_{n\neq0}\frac1n\left(\frac{\a_n}{z^n} +\frac{\tilde \a_n}{\bar z^n}\right),
$$
where $x$ is an integration constant that we may interpret as the center of mass position of the string.
The appearance of $\ln(z)$ reinforces the notion that $X$ is not a ``good" operator.
In order that $X$ remains single-valued as $z\rightarrow \E^{2\pi\I}z$, we must impose the equality
$$
\a_0 = \tilde \a_0=:p.
$$
By taking residues of the $XX$ OPE, in addition to~\C{eqn:aa}, we find
$$
[x,p]=\I .
$$
This is the familiar canonical commutation relation from quantum mechanics, and justifies the identification of the zero-mode, $p$, with the momentum of the string.
By inserting the mode expansion for $\del X$ in $T(z)$, we can determine the Virasoro generators in terms of $\a_n$:
$$
L_m = \hlf\sum_{n\in\Z} \a_{m-n}\a_n.
$$
For $m\neq0$ this is result is unambiguous, since the mode operators in each term commute, but for $m=0$ we must specify the order.
We follow the standard convention of placing the raising operators to the left of the lowering operators, so that
$$
L_0 := \frac{\a_0^2}{2} + \sum_{n>0}\a_{-n}\a_n.
$$
In particular, the Hamiltonian is given by\footnote{Again, we are suppressing a constant additive shift in the spectrum.}
$$
H = L_0 + \tilde L_0 = p^2 + \sum_{n>0}\left(\a_{-n}\a_n + \tilde\a_{-n}\tilde\a_n\right).
$$

A general state in the Hilbert space will be labeled by its momentum, $k$, together with the ``occupation numbers", $N_n$ and $\tilde N_n$, for each of the $\a_{-n}$ and $\tilde\a_{-n}$ modes.
That is, every state can be written in the form
$$
\ket{k;N,\tilde N} =\ket{k;N_1,N_2,\ldots,\tilde N_1,\tilde N_2,\ldots} = \ldots \tilde\a_{-2}^{\tilde N_2}\tilde\a_{-1}^{\tilde N_1}\ldots \a_{-2}^{N_2}\a_{-1}^{N_1} \E^{\I kx}\ket{0}.
$$
The primary states of this theory are those with $N_1,\tilde N_1\leq1$ and $N_n=\tilde N_n=0$, for all $n>1$. In particular, for every $k\in\R$ there exists a primary state $\ket{k;0,0}$ associated with the operator $\E^{\I kX(z,\bar z)}:$
$$
\ket{k;0,0} = \lim_{z,\bar z\rightarrow0}\E^{\I kX(z,\bar z)}\ket{0} = \E^{\I kx}\ket{0}.
$$
It is easy to check that these states are indeed the momentum eigenstates:
$$
p\ket{k;0,0} = p \E^{\I kx}\ket{0} = k \E^{\I kx}\ket{0} = k\ket{k;0,0},.
$$
where we used $[x,p]=\I$ to pull down the factor of $k$.
The spectrum of the theory is similarly easy to compute. For an arbitrary state,
$$
H\ket{k;N,\tilde N} = \left(k^2 + \sum_{n>0} n(N_n+\tilde N_n)\right)\ket{k;N,\tilde N},
$$
which is a slight refinement of what we claimed in Sect.~\ref{ss:Tdual}. In particular, since $\R$ is non-compact there is no quantization of the momentum, and so the spectrum is actually degenerate.

\subsubsection*{Compactified Scalar Field}
In obtain a discrete spectrum, we require a compact target space. This is easy enough to achieve in one dimension by imposing a periodicity in $X$:
$$X\simeq X+2\pi R.$$
In order that the operator $\E^{\I kx}$, which creates a state with momentum $k$, be single-valued under $X\rightarrow X+2\pi R$, the momentum must be quantized:
$$
k = \frac{n}{R},\quad n\in\Z.
$$
Furthermore, as we circle the complex plane the field $X(z,\bar z)$ no longer needs to be single-valued:
$$
X(\E^{2\pi\I}z,\E^{-2\pi\I}\bar z) = X(z,\bar z) +2\pi R w,
$$
for some $w\in\Z$ which we identify with the winding number of the string. Most of the structure of the free scalar field remains unchanged, except for the zero-modes.
In this case, it proves convenient to write $X$ as a sum of holomorphic and anti-holomorphic functions:
$$
X(z,\bar z) = X_L(z) + X_R(\bar z),
$$
where
\ban
X_L(z) &= x_L -\I p_L \ln z +\I \sum_{n\neq0}\frac{\a_n}{nz^n}, \qquad X_R(\bar z) = x_R -\I p_R \ln \bar z +\I \sum_{n\neq0}\frac{\tilde\a_n}{n\bar z^n}.
\ean
In order to obtain the desired behaviour, we should now identify
\ban
\a_0 &= p_L = \frac{n}{R} + \frac{wR}{2}, \qquad \tilde\a_0 = p_R = \frac{n}{R} - \frac{wR}{2}.
\ean
The two position operators, $x_L$ and $x_R$ are not independent, since the physical center of mass is given by their sum: $x = x_L + x_R$.
Nevertheless, it is useful to write them as distinct to simplify the zero-mode algebra:
$$
[x_L,p_L] = [x_R,p_R] =\I .
$$
The momentum eigenstates now carry two labels, $\ket{k_L,k_R;0,0}$, and are associated with the primary operators  $\exp(\I k_L X_L(z) + \I k_R X_R(\bar z))$.
The energy of a typical state is therefore
$$
H\ket{k_L,k_R;N,\tilde N} = \left(\left(\frac{n}{R}\right)^2 + \left(\frac{wR}{2}\right)^2 +\sum_{m>0}m(N_m+\tilde N_m)\right)\ket{k_L,k_R;N,\tilde N}.
$$
As expected, this spectrum is now discrete. Furthermore, we see that the spectrum is invariant under the simultaneous interchange
$$
R\leftrightarrow\frac{2}{R},\quad n\leftrightarrow w,
$$
which we recognize as the symmetry of T-duality, discussed in Sect.~\ref{ss:Tdual}.\footnote{Recall that we are now working in units with $\a'=2$, and in general T-duality acts by $R\leftrightarrow \a'/R$.}

\subsubsection*{Free Fermion Field}
Aside from scalar fields, such as $X(z,\bar z)$, the other main building block for superconformal field theories are {\it fermion} fields, $\Psi(z,\bar z)$.
Fermion fields are somewhat peculiar because they anticommute among themselves: $\Psi_1\Psi_2 = -\Psi_2\Psi_1$, and this is the crucial feature that underlies the well-known ``Pauli Exclusion Principle".
In $d$ (Euclidean) dimensions, fermions transform in non-trivial spinor representations of $Spin(d)$. A Dirac spinor has $2^{\lfloor d/2 \rfloor}$ complex components, but this is often reducible.
For example, in $d=2$ a Dirac spinor takes the form
$$
\Psi(z,\bar z) = \begin{pmatrix} \psi(z) \\ \tilde\psi(\bar z) \end{pmatrix},
$$
and the components, $\psi(z)$ and $\tilde\psi(\bar z)$, transform separately as distinct Weyl spinors. In dimensions $d=2~{\rm mod}~8$ (and in particular $d=2$) Weyl spinors can be further reduced by imposing a Majorana (\ie reality) reality condition,
$\psi^\dagger(z) = \psi(z)$. This reduces $\psi$ to a single, real, fermionic component, which will be the focus of this example.
Much of this discussion parallels examination of the scalar field above, so we will only highlight the key differences.

We begin with the action for a single Majorana-Weyl fermion $\psi$:
$$
S = \frac{1}{4\pi} \int \D^2z\ \psi\bar\del\psi.
$$
The equation of motion that follows from this action is:
$$
\bar\del\psi(z)=0,
$$
so $\psi$ is holomorphic, and we can include a separate $\tilde \psi$ sector. Once again, the equation of motion holds up to insertions of $\psi$ at the same point, which leads to the OPE
$$
\psi(z) \psi(w) \sim \frac{1}{z-w}.
$$
Notice that this is antisymmetric under interchange of $z$ and $w$, consistent with the fermionic nature of the field. The energy-momentum tensor turns out to be
$$
T(z) = -\hlf \psi(z)\del\psi(z),
$$
and it is not hard to verify that it satisfies the OPE~\ref{eqn:TT}\ with $(c,\tilde c)=(\hlf,0)$. Some care is required in this computation, because we should only contract adjacent operators and permuting fermions will introduce important signs.
We can similarly compute:
$$
T(z) \psi(w) \sim \hlf\frac{\psi(z)}{(z-w)^2} + \hlf \frac{\del\psi(z)}{z-w} \sim \hlf \frac{\psi(w)}{(z-w)^2} + \frac{\del\psi(w)}{z-w},
$$
so we see that $\psi(z)$ is a primary with $(h,\tilde h)=(\hlf,0)$.

To introduce mode operators, we must decide on boundary conditions for the fermionic fields. As real-valued objects, there are only two natural possibilities:
periodic or antiperiodic, which are typically referred to as {\it Neveu-Schwarz (NS)}  and  {\it Ramond (R)} boundary conditions, respectively. Taking into account the conventional shift by $h$ in the Laurent expansion, we have
$$
\psi(z) = \sum_{r\in\Z+\n} \frac{\psi_r}{z^{r+1/2}}
$$
where $\n=1/2$ for $NS$ boundary conditions and $\n=0$ for $R$ boundary conditions. From the OPE, we can determine the algebra of the mode operators:
$$
\{\psi_r,\psi_s\}= \psi_r\psi_s + \psi_s\psi_r = \dd_{r,-s}.
$$
Note the use of the {\it anti-commutator}, as befits fermionic operators.
As usual, raising operators have $r<0$ and lowering operators have $r>0$.
In the $R$ sector, there is a zero-mode with $\{\psi_0,\psi_0\}=1$ which isomorphic to the Clifford algebra in one-dimension.\footnote{More generally, for a multi-component fermion field $\{\psi_0^i,\psi_0^j\} = \dd^{ij}$.} We will come back to this point momentarily.
From the mode expansion for $\psi(z)$, we can work out
$$
L_m = \frac14\sum_{r\in\Z+\n} (2r-m)\psi_{m-r}\psi_{r},
$$
where once again we should be careful about ordering in $L_0$, which we define as\footnote{Once again we are being sloppy about additive constants, which shift the zero point of the spectrum.}
\ban
 L_0 := \sum_{r\in\mathbb{N}+\n} r\psi_{-r}\psi_{r}.
\ean

With the exception of the zero-mode $\psi_0$ in the $R$ sector, every mode operator squares to zero. Therefore, the occupation numbers are binary $N_r\in\{0,1\}$.
In the $NS$ sector, a typical state is then of the form
$$
\ket{N}_{NS} = \ket{N_{1/2},N_{3/2},\ldots}_{NS} = \ldots\psi_{-3/2}^{N_{3/2}}\psi_{-1/2}^{N_{1/2}}\ket{0}_{NS},
$$
with energy given by
$$
L_0\ket{N}_{NS} = \sum_{r\in\mathbb{N}+1/2} rN_r \ket{N}_{NS}.
$$
In the $R$ sector, if $\ket{0}$ is a vacuum state annihilated by $\psi_n$ for $n>0$, then $\psi_0\ket{0}$ defines another vacuum state, since $\{\psi_n,\psi_0\}=0$ $\forall n\neq0$.
Thus, the $R$ vacuum is actually two-fold degenerate, which we will label as
$$
\ket{-}_R,\quad \ket{+}_R = \sqrt{2}\psi_0 \ket{-}_R.
$$
Since $(\sqrt{2}\psi_0)^2 = 1$, the degeneracy is only two-fold. Thus, a typical state in the $R$ sector is given by
$$
\ket{\pm;N}_R = \ket{\pm;N_{1},N_{2},\ldots}_{R} = \ldots\psi_{-2}^{N_{2}}\psi_{-1}^{N_{1}}\ket{\pm}_{R}
$$
with energies
$$
L_0\ket{\pm;N}_R = \sum_{n\in\mathbb{N}} nN_n\ket{\pm}_R.
$$
Fermion fields, with their distinct $NS$ and $R$ sectors, will play an important role in the superconformal theories, to which we can finally turn our attention.

\section{$N=(2,2)$ Superconformal Field Theories}\label{sec:SCFT}
After the whirlwind review of conformal field theories in the last chapter, we can now focus on the $N=(2,2)$ superconformal field theories (SCFTs), which are relevant for Calabi-Yau manifolds and mirror symmetry.
Section~\ref{ss:super}\ introduces the concepts of supersymmetry, superconformal algebras and their representations.
An important subset of states in an $N=(2,2)$ SCFT are the chiral primaries, which we study in Sect.~\ref{ss:chiral}\ along with some of their essential properties.
Another important feature in $(2,2)$ theories is called spectral flow, discussed in Sect.~\ref{ss:spectral}, which is ultimately the reason these models will exhibit mirror symmetry.
Section~\ref{ss:geometry}\ will develop a remarkable relation between the chiral primary states in an $N=(2,2)$ SCFT on the one hand, and the cohomology rings of Calabi-Yau manifolds on the other.
Mirror symmetry will emerge as a trivial automorphism of the SCFT, though the geometric implications are far from trivial.
We will present several classes of examples in Sect.~\ref{ss:examples2}\ to illustrate this structure.
We close with brief comments on some of the physical applications of mirror symmetry in Sect.~\ref{ss:summary2}.
Many of these ideas were first formulated in~\cite{Lerche:1989uy}, and~\cite{Greene:1996cy}\ provides an excellent review of this material as well as more advanced topics we do not have time to cover.

\subsection{Superconformal Groups}\label{ss:super}
Since superconformal symmetries will play a central role in the remainder of these notes, let us take a moment to understand their general structure.
Roughly speaking, {\it supersymmetry} is an extension of the Poincar\'e group, generated by $P_\m$ and $J_{\m\n}$, by anticommuting ``supercharges" $Q_\a$, such that
$$
\{Q_\a, Q_\beta\}= \g^\m_{\a\b}P_\m,
$$
where $\{A,B\}=AB+BA$ is the anticommutator of two operators, and $\g^\m$ generate the Clifford algebra $\{\g^\m,\g^\n\} = 2\eta^{\m\n}$.\footnote{Additionally, supersymmetry groups contain $R$-symmetries as ``internal" sub-groups, which act non-trivially on the supercharges only.}
Thus, $Q$ behaves as a sort of ``square-root" of translations. It is possible to have several such generators, $Q^A_\a$ with $A=1,2,\ldots,N$, which we refer to as {\it $N$-extended supersymmetry}.
Similarly, a superconformal group is an enhancement of the conformal group into a ``supergroup", \ie\ a Lie group generated by a $\Z_2$-graded algebra, by including generators $Q_\a$ and $S_\a$. In analogy with the $Q_\a$, we can think of $S_\a$ as a kind of ``square-root" of the SCTs $K_\m$.
For our purposes, the following rough definition will suffice:
\begin{definition}
  In $d>2$, the {\it $N$-extended superconformal group of $(\R^d,\eta_{\m\n})$} is the most general $\Z_2$-graded extension the (even) conformal group $SO(2,d)$, by the odd generators $Q^A_\a$ and $S^A_\a$, where $A=1,2,\ldots N$, transforming in spinor representations of $Spin(1,d-1)$.
  The even elements of the group are called {\it bosons} and the odd elements are called {\it fermions}.
\end{definition}
We will not dwell here on features such as uniqueness of these extensions or the detailed form of the complete super-algebras, which may be found in \eg~\cite{West:1990tg}.
Instead we will focus on the case of interest, $d=2$ with Euclidean signature, where once again special considerations are required because of the infinite dimensional nature of the conformal group.
For the most part, the concepts developed in Sect.~\ref{sec:conf}\ lift to the supersymmetric setting rather straightforwardly.
Therefore, we will proceed rather quickly, pausing only to emphasize the new features that arise.

\subsubsection*{$N=1$ Super-Virasoro Algebras}
There exist two minimal (\ie\ $N=1$) supersymmetric extensions of the Virasoro algebra, corresponding to the two classes on boundary conditions for the fermions in the theory: the {\it Neveu-Schwarz (NS) algebra} and the {\it Ramond (R) algebra}.
In addition to the Virasoro generators, $L_n$, we introduce odd generators $G_r$.
\begin{definition}
  The  $N=1$ super-Virasoro algebras (with central charge $c$), are generated by mode operators $L_n$ (even) and $G_r$ (odd) subject to the graded commutation relations:
  \ban
  [L_m,L_n] &= (m-n)L_{m+n} +\frac{c}{12}m(m^2-1)\dd_{m,-n} \\
  [L_m,G_r] &= \left(\frac{m}{2}-r\right)G_{m+r} \\
  \{G_r, G_s\} &= 2L_{r+s} +\frac{c}{12}(4r^2-1)\dd_{r,-s}.
  \ean
For the {\it Neveu-Schwarz algebra} $r\in\Z+\hlf$, while for the {\it Ramond algebra} $r\in\Z$.
\end{definition}
Notice that in the $NS$ sector, the operators $L_0, L_{\pm1}, G_{\pm\hlf}$ satisfy a closed sub-algebra independent of the central charge. This is holomorphic portion of the global superconformal algebra. No such sub-algebra exists in the $R$ sector.
\begin{definition}
  The {\it supercurrent} $G(z)$ is the local operator with Laurent coefficients $G_r$:
  $$
  G(z) = \sum_{r\in\Z+\n} \frac{G_r}{z^{r+3/2}},
  $$
  where $\nu=\hlf$ (resp. $0$) for the $NS$ (resp. $R$) algebras.
\end{definition}
The shift in the powers of $z$ suggests that $G(z)$ has weight $h=\frac{3}{2}$. This suspicion is indeed confirmed by considering the algebra in OPE form.
\begin{proposition}
By the usual contour argument, the $N=1$ super-Virasoro algebras are equivalent to the following OPEs:
\ban
T(z)T(w) &\sim \frac{c/2}{(z-w)^4} +\frac{2T(w)}{(z-w)^2} + \frac{\del T(w)}{z-w}\\
T(z)G(w) &\sim \frac{(3/2)\,G(w)}{(z-w)^2}  + \frac{\del G(w)}{z-w}\\
G(z)G(w) &\sim \frac{2c/3}{(z-w)^3} + \frac{2T(z)}{z-w}.
\ean
\end{proposition}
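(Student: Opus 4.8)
The plan is to establish all three OPEs by the same residue/contour prescription already used in Prop.~\ref{prop:TT}, recovering each (graded) mode bracket from the singular part of an operator product. Recall that the modes are extracted as $L_n = \oint \frac{\D z}{2\pi\I}\, z^{n+1} T(z)$ and $G_r = \oint \frac{\D z}{2\pi\I}\, z^{r+1/2} G(z)$, and that a bracket of two modes is computed by nesting the $z$-contour tightly around $w$ inside the $w$-contour around the origin, then inserting the relevant OPE and reading off residues. The first OPE is nothing but the content of Prop.~\ref{prop:TT}, so it reproduces the pure Virasoro relation verbatim and needs no new work.

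For the $T(z)G(w)$ OPE I would simply observe that it is exactly the primary form of Prop.~\ref{prop:TO} with weight $h=\tfrac{3}{2}$: the product has a double pole with coefficient $\tfrac{3}{2}G(w)$, a simple pole $\del G(w)$, and no higher poles. Hence $G$ is a weight-$\tfrac{3}{2}$ primary, and the mixed bracket follows instantly from the general primary relation $[L_m,\cO_n]=(m(h-1)-n)\cO_{m+n}$. Setting $h=\tfrac{3}{2}$ and identifying $\cO_n \to G_r$ gives $[L_m,G_r]=\left(\tfrac{m}{2}-r\right)G_{m+r}$, as required; this step is pure identification rather than computation.

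The genuinely new case is the $G(z)G(w)$ OPE, and this is where I would spend the care. Because $G$ is an odd (fermionic) operator, radial ordering carries a relative sign, $R\big(G(z)G(w)\big)=G(z)G(w)$ for $|z|>|w|$ but $-G(w)G(z)$ for $|z|<|w|$; consequently the difference of the two contour orderings assembles into the \emph{anti}commutator $\{G_r,G_s\}$ rather than a commutator. Carrying out the inner $z$-integral around $w$ with the OPE $G(z)G(w)\sim \frac{2c/3}{(z-w)^3}+\frac{2T(z)}{z-w}$ inserted, the triple pole contributes through the second derivative of $z^{r+1/2}$, producing the factor $\tfrac{1}{2}(r+\tfrac{1}{2})(r-\tfrac{1}{2})=\tfrac{1}{2}(r^2-\tfrac{1}{4})$ and hence the central term, while the simple pole (after expanding $T(z)$ about $w$) contributes $2T(w)w^{r+1/2}$. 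Performing the outer $w$-integral against $w^{s+1/2}$ then yields $\{G_r,G_s\}=2L_{r+s}+\frac{c}{12}(4r^2-1)\dd_{r,-s}$.

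I expect the main obstacle to be bookkeeping of signs rather than any conceptual difficulty: one must track the fermionic minus sign in the radial ordering that promotes the ordered difference to an anticommutator, together with the signs arising when permuting the two $G$'s, and one must differentiate $z^{r+1/2}$ twice correctly to land the coefficient $4r^2-1$ of the central extension. The converse implication --- that the mode relations reproduce the stated OPEs --- is immediate: substituting the Laurent series for $T$ and $G$ and resumming the brackets regenerates precisely the singular terms above, so the equivalence holds in both directions.
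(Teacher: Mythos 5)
Your proposal is correct and follows exactly the route the paper intends: the paper offers no explicit proof of this proposition beyond the phrase ``by the usual contour argument,'' and your nested-contour residue computation (with the fermionic radial-ordering sign promoting the bracket to an anticommutator, and the double derivative of $z^{r+1/2}$ yielding the coefficient $\frac{c}{12}(4r^2-1)$) is precisely that argument carried out in detail, matching the method of Prop.~\ref{prop:TT} and the identification of $G$ as a weight-$\tfrac{3}{2}$ primary via Prop.~\ref{prop:TO}.
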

Rather than develop the representation theory of the $N=1$ algebra here, we will jump ahead to the case of $N=2$ (extended) superconformal algebra.

\subsubsection*{The $N=2$ Superconformal Algebra}
As the name suggests, the $N=2$ algebra contains two fermionic supercurrents, $G^i(z)$ for $i=1,2$. In addition, the $N=2$ algebras includes an $SO(2)$ current, $J(z)$,\footnote{This is an example of an $R$-symmetry, alluded to in an earlier footnote.} which together with $T(z)$ completes the list of generators.
The supercurrents $G^i(z)$ transform as a doublet under the $SO(2)$ symmetry, so it proves convenient to combine these into the complex combinations
$$
G^{\pm}(z) = \frac{1}{\sqrt{2}}\left(G^1(z) \pm \I G^2(z)\right),
$$
which carry opposite charges under the $U(1)\simeq SO(2)$ symmetry. Note that complex conjugation is therefore equivalent to charge inversion.
\begin{definition}
The {\it $N=2$ super-Virasoro algebra} is generated by the local operators $T(z), G^{\pm}(z),$ and $J(z)$, with weights $\{2,\frac{3}{2},\frac{3}{2},1\}$ and OPEs:
\ban
T(z)T(w) &\sim \frac{c/2}{(z-w)^4} +\frac{2T(w)}{(z-w)^2} + \frac{\del T(w)}{z-w} \\
T(z)G^{\pm}(w) &\sim \frac{(3/2)\, G^\pm(w)}{(z-w)^2}  + \frac{\del G^\pm(w)}{z-w} \\
G^\pm(z)G^\mp(w) &\sim \frac{2c/3}{(z-w)^3} \pm\frac {2J(w)}{(z-w)^2}+ \frac{2T(z)\pm \del J(w)}{z-w} \\
G^\pm(z)G^\pm(w) &\sim 0\\
T(z)J(w) &\sim \frac{J(w)}{(z-w)^2} +\frac{\del J(w)}{z-w} \\
J(z)G^\pm(w) &\sim \pm \frac{G^\pm(w)}{z-w} \\
J(z)J(w) &\sim \frac{c/3}{(z-w)^2}.
\ean
\end{definition}
\begin{proposition}\label{prop:N=2}
  By making the mode expansions
  $$
  T(z) = \sum_{n\in\Z}\frac{L_n}{z^{n+2}},\quad G^\pm(z) = \sum_{r\in\Z\pm\n}\frac{G^\pm_r}{z^{r+3/2}},\quad J(z) = \sum_{n\in\Z}\frac{J_n}{z^{n+1}},
  $$
  the OPEs of the $N=2$ super-Virasoro algebra are equivalent to the following graded commutation relations:
\ban
[L_m,L_n] &= (m-n)L_{m+n} +\frac{c}{12}m(m^2-1)\dd_{m,-n} \\
[L_m,G^\pm_{r}] &= \left(\frac{m}{2}-r\right)G^\pm_{m+r} \\
\{G^+_{r}, G^-_{s}\} &= 2L_{r+s} +(r-s)J_{r+s} +\frac{c}{12}(4r^2-1)\dd_{r,-s} \\
\{G^+_r,G^+_s\} &= \{G^-_r,G^-_s\} =0\\
[L_m,J_n] &= -nJ_{m+n} \\
[J_m,G^\pm_{r}] &= \pm G^\pm_{m+r} \\
[J_m,J_n] &= \frac{c}{3}m\, \dd_{m,-n}.
\ean
\end{proposition}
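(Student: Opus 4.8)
The strategy is precisely the contour argument already used to establish the Virasoro relations in Prop.~\ref{prop:TT}: each generator is recovered as a residue of its defining current, e.g.\ $G^\pm_r = \oint \frac{\D z}{2\pi\I}\, z^{r+1/2} G^\pm(z)$ and $J_n = \oint \frac{\D z}{2\pi\I}\, z^{n} J(z)$, and a graded commutator of modes is computed as a nested double contour, the inner $z$-contour encircling $w$ and the outer $w$-contour encircling the origin. Deforming the $z$-contour past $w$ produces the difference of the two radial orderings; for the even currents $T,J$ this difference is the commutator, while for the odd currents $G^\pm$ the radial-ordering sign flips the relative sign and the \emph{anti}commutator emerges automatically. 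Inserting the singular part of the relevant OPE and evaluating the residues then yields the mode relation. Because the Laurent-mode correspondence is a bijection, running the same computation in reverse reconstructs the singular OPE from the graded commutators, so the two formulations are genuinely equivalent; I will spell out only one direction.

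Most of the relations are immediate. The $[L_m,L_n]$ relation is literally the content of Prop.~\ref{prop:TT}, since the $TT$ OPE is unchanged. The mixed relations with $T$ follow from the primary mode algebra $[L_m,\cO_n]=(m(h-1)-n)\cO_{m+n}$ derived in Sect.~\ref{ss:Local}: the $TG^\pm$ and $TJ$ OPEs exhibit only a double and a simple pole, so $G^\pm$ and $J$ are primaries of weights $h=\tfrac32$ and $h=1$, giving $[L_m,G^\pm_r]=(\tfrac{m}{2}-r)G^\pm_{m+r}$ and $[L_m,J_n]=-nJ_{m+n}$ directly. The simple-pole $JG^\pm$ OPE gives $[J_m,G^\pm_r]=\pm G^\pm_{m+r}$, while the double-pole $JJ$ OPE gives the $\U(1)$ current algebra $[J_m,J_n]=\tfrac{c}{3}m\,\dd_{m,-n}$, structurally identical to the free-boson relation $[\a_m,\a_n]=m\dd_{m,-n}$ of Sect.~\ref{ss:examples}. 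Finally $\{G^\pm_r,G^\pm_s\}=0$ is immediate from the vanishing $G^\pm G^\pm$ OPE.

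The one genuinely delicate computation is the anticommutator $\{G^+_r,G^-_s\}$, which receives contributions from all three singular terms of the $G^+G^-$ OPE. Taking the inner residue uses $\oint_w \frac{\D z}{2\pi\I}\frac{z^{r+1/2}}{(z-w)^p}$, which brings down the $(p-1)$-th derivative of $z^{r+1/2}$ evaluated at $w$; the outer $w$-integral then projects onto the appropriate mode. The triple pole $\frac{2c/3}{(z-w)^3}$ produces the central term $\tfrac{c}{12}(4r^2-1)\dd_{r+s,0}$, and the simple-pole $2T$ produces $2L_{r+s}$. The subtle part is the $\U(1)$ piece: the double pole $\frac{2J(w)}{(z-w)^2}$ contributes $(2r+1)J_{r+s}$, while the simple pole $\frac{\del J(w)}{z-w}$ contributes $-(r+s+1)J_{r+s}$, and these must be combined carefully to give the advertised $(r-s)J_{r+s}$.

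Two further points deserve attention in this last step: one must track the fermionic signs throughout so that the anticommutator (rather than the commutator) is what is produced, and one should note that writing $2T(z)$ versus $2T(w)$ in the simple-pole numerator is immaterial, since their difference is regular at $z=w$ and drops out of the residue. I expect this $\{G^+_r,G^-_s\}$ computation, and in particular the bookkeeping that fuses the two distinct $J$-contributions, to be the main obstacle; everything else reduces to the standard residue manipulations of Prop.~\ref{prop:TT}.
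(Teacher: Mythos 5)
Your proposal is correct and follows exactly the route the paper intends: the paper's own proof consists of the single remark that the result ``follows by applying the standard contour argument'' of Prop.~\ref{prop:TT}, and your nested-contour residue computation (including the careful fusion of the double-pole and $\del J$ contributions into $(r-s)J_{r+s}$, and the observation that $2T(z)$ versus $2T(w)$ in the simple pole is immaterial) is precisely that argument carried out in detail.
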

As usual, the proof simply follows by applying the standard contour argument. A new feature of the $N=2$ algebra is that $\n$ is now free to take any real value, though clearly the algebras labeled by $\n$ and $\n+1$ are isomorphic.
Thus, there appears to be one parameter family of $N=2$ algebras, labeled by $\n\in[0,1)$. The cases of physical interest remain $\n=0,\hlf$, which we continue to refer to as the Ramond and Neveu-Schwarz sectors.
However, we will see in Sect.~\ref{ss:spectral}\ that all of these $N=2$ algebras are in fact isomorphic.\footnote{This is the reason we have been referring to the $N=2$ algebra in the singular, unlike the $N=1$ cases.} This simple fact will lie at the very heart of mirror symmetry.

\subsubsection*{Representations of the $N=2$ Algebra}
Exactly as we did for the (standard) Virasoro algebra, we must divide the mode operators of the $N=2$ algebra into raising and lowering types as dictated by the algebra in Prop.~\ref{prop:N=2}.
Since $J_0$ commutes with $L_0$, we must also label states by their eigenvalue of this conserved charge.\footnote{In the Ramond sector, when $\n=0$, we must also deal with $G^\pm_0$, which also commute with $L_0$. We define {\it Ramond groundstates} to be those annihilated by both $G_0^\pm$.}
As before, the irreducible representations of the $N=2$ algebra come in (super-)conformal families built on lowest weight states.
\begin{definition}
In an $N=2$ superconformal field theory, a state $\ket{\cO}$ is called a {\it primary of weight $h$ and charge $q$}, if
\ban
&L_0\ket{\cO} = h\ket{\cO},\qquad J_0\ket{\cO} = q\ket{\cO}\\
L_n\ket{\cO} &= G_r^\pm\ket{\cO} = J_m\ket{\cO} =0,\quad \forall~n,r,m>0.
\ean
Equivalently, by the state-operator correspondence, the local operator $\cO(z)$ is called a {\it primary of weight $h$ and charge $q$}, if
\ban
T(z)\cO(w) &\sim \frac{h}{(z-w)^2} + \frac{\del\cO(w)}{z-w} \\
J(z)\cO(w) &\sim \frac{q\,\cO(w)}{z-w} \\
G^\pm(z)\cO(w) &\sim \frac{\tilde \cO^\pm(w)}{z-w}
\ean
where $\tilde\cO^\pm$ are the {\it superpartners} of $\cO$.
Finally, an $N=2$ superconformal field theory is {\it unitary} if the inner product $\braket{\ }{\ }$ is positive and
$$
L_n^\dag = L_{-n},\quad \left(G_r^\pm\right)^\dag = G_{-r}^\mp,\quad  J_n^\dag = J_{-n}.
$$
\end{definition}

\subsection{Chiral Rings}\label{ss:chiral}
For now, let us restrict our attention to the $NS$ sector, with $\n=\hlf$. Then, there exists a very special subset of primary operators in an $N=2$ theory, which will be our focus for the remainder of these notes.
\begin{definition}
$\cO(z)$ is called a {\it chiral} if $G^+_{-1/2}\ket{\cO}=0,$ and {\it anti-chiral} if $G^-_{-1/2}\ket{\cO}=0.$
An operator that is both primary and (anti-)chiral called an {\it (anti-)chiral primary}.
\end{definition}
If $\cO(z)$ is a chiral or anti-chiral primary, then this implies
$$
G^+(z) \cO(w)\sim 0,\quad {\rm or}\quad G^-(z) \cO(w)\sim 0.
$$
Thus, half of the superpartner $\tilde \cO^\pm$ are absent for these special classes of operators. In physics parlance, the superconformal families associated with (anti-)chiral primaries are in {\it short multiplets},\footnote{The terminology {\it BPS multiplets} is also used in this case.} because they contain fewer states than a typical irreducible representation.

Taking into account the anti-holomorphic operators $\tilde G^{\pm}(\bar z)$, since we are really interested in $N=(2,2)$ superconformal theories, we have four distinguished subsets of primary operators, which we can label in an obvious manner by:  $(c,c),~(a,c),~(c,a),~(a,a)$.
However, not all of these sectors are independent since charge conjugation takes CPOs into APOs. Thus in an $N=(2,2)$ superconformal theory, there are two distinguished sectors of chiral/anti-chiral primary operators, which we choose to be
$$(c,c)\simeq(a,a)^*,\quad {\rm and}\quad (a,c)\simeq(c,a)^*.$$
It is possible to perform a ``topological twist" of an $N=(2,2)$ theory~\cite{Witten:1988xj,Witten:1989ig,Witten:1991zz,Vafa:1991uz}, with the results that the entire spectrum of the theory is truncated only to one of these two sectors.
These are the so-called {\it A-model} and {\it B-model} of topological string/field theory, which are likely more familiar to this audience than the full-blown $(2,2)$ theory that we have been studying.
While these specializations have played (and continue to play) a central role in the development of mirror symmetry, particularly for making precise mathematical statements, we will not discuss them any further here.
However, the results that are about to follow should clarify why these particular sets of operators garner so much attention from mathematicians and physicists alike.

\begin{theorem}\label{thm:ring}
In an $N=2$ superconformal theory, the set of chiral primaries form a non-singular and closed ring, ${\cal R}_c$, under the operation of operator product {\rm at the same point}. Moreover, if the theory is unitary and non-degenerate, then $\cR_c$ is finite.
\end{theorem}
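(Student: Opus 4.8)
The plan is to build everything on a single unitarity inequality, the $N=2$ analogue of the BPS bound, and then read off the ring structure and finiteness as consequences. For any state $\ket{\cO}$ of weight $h$ and charge $q$ in the $NS$ sector, the adjointness relation $(G^+_r)^\dag = G^-_{-r}$ together with the anticommutator from Prop.~\ref{prop:N=2}, namely $\acomm{G^+_{-1/2}}{G^-_{1/2}} = 2L_0 - J_0$, gives
$$
\left\| G^+_{-1/2}\ket{\cO}\right\|^2 + \left\| G^-_{1/2}\ket{\cO}\right\|^2 = \bra{\cO}(2L_0 - J_0)\ket{\cO} = (2h - q)\braket{\cO}{\cO}.
$$
Positivity of the inner product forces $h \geq q/2$, and the companion anticommutator $\acomm{G^+_{1/2}}{G^-_{-1/2}} = 2L_0 + J_0$ gives $h \geq -q/2$, so every state obeys $h \geq |q|/2$. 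I would then record the sharp corollary that a state saturates $h = q/2$ \emph{if and only if} it is a chiral primary: saturation makes both norms above vanish, which yields chirality $G^+_{-1/2}\ket{\cO}=0$ together with $G^-_{1/2}\ket{\cO}=0$, while the remaining primary conditions follow because any nontrivial lowering would exhibit $\ket{\cO}$ as a descendant of a state of strictly smaller weight, which the bound $h\geq|q|/2$ together with the nilpotency $(G^+_{-1/2})^2=0$ forbids. In particular chiral primaries have $q = 2h \geq 0$.

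For the ring structure I would use charge conservation in the OPE. If $\cO_1,\cO_2$ are chiral primaries of charges $q_1,q_2$, then every operator $\cO_k$ appearing in $\cO_1(z)\cO_2(w)$ carries charge $q_1+q_2$, since $J_0$ grades the OPE. The universal bound then gives $h_k \geq (q_1+q_2)/2 = h_1 + h_2$, so the exponent $h_k - h_1 - h_2$ in the OPE of Def.~\ref{def:OPE} is never negative: the product is \emph{non-singular}, and the limit $z \to w$ is well defined. The surviving $(z-w)^0$ terms are exactly those with $h_k = h_1+h_2 = (q_1+q_2)/2$, which saturate the bound and are therefore chiral primaries by the corollary above. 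Hence the coincident-point product lands back among chiral primaries, so $\cR_c$ is closed; commutativity and associativity are inherited from the corresponding properties of the OPE, and the charge provides a grading.

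Finiteness is where the real work lies. The lower end is immediate, $h = q/2 \geq 0$ with the identity the unique element at $h=0$, so I must supply an \emph{upper} bound $h \leq c/6$, equivalently $q \leq c/3$. My plan is a Kac-determinant-style computation: for a chiral primary $\ket{\cO}$ of charge $q$, form the two level-$\tfrac32$ descendants $G^-_{-3/2}\ket{\cO}$ and $L_{-1}G^-_{-1/2}\ket{\cO}$ (both of charge $q-1$) and demand that their Gram matrix be positive semi-definite. Computing the entries with the algebra of Prop.~\ref{prop:N=2} produces a $2\times 2$ determinant proportional to $q\left(\tfrac{c}{3}-q\right)$, and positivity together with $q\geq 0$ forces $q \leq c/3$. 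With $0 \leq h \leq c/6$ (and likewise $0\leq \tilde h \leq \tilde c/6$ on the anti-holomorphic side), all chiral primaries sit in a bounded energy window; non-degeneracy means the spectrum of $H = L_0 + \tilde L_0$ is discrete, so only finitely many states occupy this window, and $\cR_c$ is finite.

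The main obstacle is precisely this upper bound. The inequality $h \geq |q|/2$ and the ring axioms drop out of single-state norm positivity, but pinning down $q \leq c/3$ requires the more delicate positivity of a multi-state Gram matrix (or, equivalently, the spectral-flow argument to be developed in Sect.~\ref{ss:spectral}, under which chiral primaries are exchanged with Ramond ground states and the charge reflects as $q \mapsto \tfrac{c}{3}-q$). I would carry out the determinant calculation carefully, since the central term $\tfrac{c}{12}(4r^2-1)$ in $\acomm{G^+_r}{G^-_{-r}}$ is exactly what injects the dependence on $c$ that makes the window finite.
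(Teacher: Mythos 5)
Your BPS bound and ring-closure steps reproduce the paper's argument essentially verbatim: the norm computation $0\leq \|G^\pm_{-1/2}\ket{\cO}\|^2+\|G^\mp_{1/2}\ket{\cO}\|^2=2h\mp q$ gives $h\geq|q|/2$ with saturation characterizing (anti-)chiral primaries, and charge conservation plus this bound makes the OPE of two chiral primaries non-singular with the coincident-point limit closing on chiral primaries. Those parts are correct and are the paper's route.

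The finiteness step, however, has a genuine gap: the $2\times 2$ Gram matrix you propose does not produce $q\leq c/3$. For a chiral primary with $q=2h$, setting $\ket{A}=G^-_{-3/2}\ket{\cO}$ and $\ket{B}=L_{-1}G^-_{-1/2}\ket{\cO}$, the algebra of Prop.~\ref{prop:N=2} gives $\langle A|A\rangle=2h+3q+\tfrac{2c}{3}=8h+\tfrac{2c}{3}$, $\langle B|B\rangle=4h(2h+1)$, and $\langle A|B\rangle=2(2h+q)=8h$, so
$$
\det\begin{pmatrix} 8h+\tfrac{2c}{3} & 8h\\ 8h & 8h^2+4h\end{pmatrix}=4q\left(2q(q-1)+\tfrac{c}{3}(q+1)\right),
$$
which is manifestly nonnegative for all $q\geq 1$ whatever the value of $c$; it never changes sign at $q=c/3$ and so yields no upper bound. (It is also not the full level-$\tfrac32$, charge-$(q-1)$ Gram matrix, since $J_{-1}G^-_{-1/2}\ket{\cO}$ is omitted.) The problem is that you lowered the charge; the constraint lives in the charge-\emph{raising} direction, where chirality kills every level-$\tfrac32$ descendant except one and no determinant is needed. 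The paper's one-line argument is
$$
0\leq \left\|G^+_{-3/2}\ket{\cO}\right\|^2=\bra{\cO}\acomm{G^-_{3/2}}{G^+_{-3/2}}\ket{\cO}=2h-3q+\tfrac{2c}{3}=-4h+\tfrac{2c}{3},
$$
using $G^-_{3/2}\ket{\cO}=0$ (primarity) and $q=2h$, whence $h\leq c/6$. Your parenthetical alternative via spectral flow ($q\mapsto q-c/3$ landing in the anti-chiral sector, which has nonpositive charges) would also work, but it is forward-referenced rather than carried out. Once $0\leq h\leq c/6$ is established, your appeal to non-degeneracy to conclude finiteness is the same as the paper's.
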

Similarly there are (finite) rings $\cR_a$ of anti-chiral primaries, and the tensor products $\cR_{cc}=\cR_c\otimes\tilde\cR_c$ and $\cR_{ac}=\cR_a\otimes\tilde\cR_c$. Before we can prove this important result, we need another important property of (anti-)chiral primaries.
\begin{lemma}\label{lem:BPS}{\rm (BPS bound)}
In a unitary $N=2$ SCFT, the weights and charges of every local operator $\cO$ must obey the inequality
$$ h\geq\frac{|q|}{2}.$$
This bound is saturated by $(i)$ $h=+q/2$ iff $\cO$ is a chiral primary, or $(ii)$ $h=-q/2$ iff $\cO$ is an anti-chiral primary.
\end{lemma}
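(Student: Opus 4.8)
The plan is to extract the bound directly from the $N=2$ anticommutators of Prop.~\ref{prop:N=2}, using positivity of the inner product in exactly the way the lower bounds were obtained in Prop.~\ref{prop:unitary}. Working in the $NS$ sector ($\n=\hlf$), the two relations I need come from specializing $\{G^+_r,G^-_s\}=2L_{r+s}+(r-s)J_{r+s}+\frac{c}{12}(4r^2-1)\dd_{r,-s}$ to $(r,s)=(-\hlf,\hlf)$ and $(r,s)=(\hlf,-\hlf)$. In both cases $r+s=0$ and the central term drops out since $4r^2-1=0$, leaving the two ``global'' anticommutators $\{G^+_{-1/2},G^-_{1/2}\}=2L_0-J_0$ and $\{G^+_{1/2},G^-_{-1/2}\}=2L_0+J_0$. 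On a state $\ket{\cO}$ of weight $h$ and charge $q$ these act with eigenvalues $2h-q$ and $2h+q$ respectively.

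Next, invoking unitarity in the form $(G^\pm_r)^\dag=G^\mp_{-r}$, I would sandwich the first anticommutator in $\ket{\cO}$ and rewrite it as a sum of manifestly non-negative norms:
\[
0\leq \left|\left|G^+_{-1/2}\ket{\cO}\right|\right|^2 + \left|\left|G^-_{1/2}\ket{\cO}\right|\right|^2 = \bra{\cO}\{G^+_{-1/2},G^-_{1/2}\}\ket{\cO} = (2h-q)\,\braket{\cO}{\cO},
\]
which yields $h\geq q/2$; the companion anticommutator gives $h\geq -q/2$, and together these are $h\geq|q|/2$. This handles an arbitrary local operator, since the two-norm trick never assumed $\cO$ primary. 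For a state that \emph{is} primary the term $\|G^-_{1/2}\ket{\cO}\|^2$ vanishes automatically, so the inequality collapses to $\|G^+_{-1/2}\ket{\cO}\|^2=(2h-q)\braket{\cO}{\cO}$, and saturation $h=q/2$ is then literally equivalent to $G^+_{-1/2}\ket{\cO}=0$, i.e.\ to chirality. This already gives the clean ``iff'' for the primaries that build the ring $\cR_c$.

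The main obstacle is the remaining half of the saturation claim: that $h=q/2$ forces $\cO$ to be primary in the first place, rather than merely annihilated by $G^+_{-1/2}$ and $G^-_{1/2}$. To close this I would argue at the level of conformal families. Every state sits in a module built on a primary of weight and charge $(h_p,q_p)$ with $h_p\geq q_p/2$, and I would track how the combination $h-\tfrac{q}{2}$ changes under each raising operator: $L_{-n}$ and $J_{-m}$ shift it by $n,m\geq1$, the mode $G^-_{-r}$ shifts it by $r+\hlf\geq1$, and $G^+_{-r}$ shifts it by $r-\hlf\geq0$ with equality only for $r=\hlf$. Since $h_p-\tfrac{q_p}{2}\geq0$ and every increment is non-negative, reaching $h-\tfrac{q}{2}=0$ requires both a chiral-primary base and building only with $G^+_{-1/2}$; but $G^+_{-1/2}$ annihilates a chiral primary and squares to zero via $\{G^+_r,G^+_s\}=0$, so no genuine descendant survives. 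Hence the saturating states are exactly the chiral primaries. The anti-chiral case $h=-q/2$ follows by the mirror-image argument with $G^+\leftrightarrow G^-$, or equivalently by charge conjugation $q\mapsto-q$.
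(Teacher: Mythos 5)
Your proposal is correct, and the first two-thirds of it (the norm-sum argument for $h\geq|q|/2$ from $\{G^{\mp}_{1/2},G^{\pm}_{-1/2}\}=2L_0\mp J_0$, and the observation that for a primary the saturation $h=q/2$ is literally equivalent to $G^+_{-1/2}\ket{\cO}=0$) coincides with the paper's proof. Where you genuinely diverge is the remaining step, that saturation forces $\cO$ to be primary. The paper argues ``top-down'': it lists the commutators $[(L_0\mp\tfrac12 J_0),L_n]$, $[(L_0\mp\tfrac12 J_0),G^\pm_r]$, $[(L_0\mp\tfrac12 J_0),J_m]$, $[(L_0\mp\tfrac12 J_0),G^\mp_{s+1}]$ and notes that every lowering operator strictly decreases $h\mp q/2$, so acting on a state that already saturates the bound would produce a state violating the inequality just proven; hence all these states vanish and $\cO$ is primary. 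You instead argue ``bottom-up'': decompose the state into a conformal family over a primary with $h_p\geq q_p/2$ and check that every raising operator increments $h-\tfrac{q}{2}$ by a non-negative amount, with zero increment only for $G^+_{-1/2}$, which annihilates a chiral-primary base and is nilpotent. Both arguments are sound and use the same grading by $L_0\mp\tfrac12 J_0$; the paper's version has the advantage of being self-contained (it needs only the bound itself, not the assumption that the Hilbert space decomposes into lowest-weight modules), whereas yours makes the module-theoretic structure explicit, at the cost of invoking that decomposition --- an assumption the paper does rely on elsewhere (e.g.\ in Prop.~\ref{prop:unitary}), so nothing is lost in context. One small point worth making explicit in your version: a primary with $h_p=q_p/2$ is automatically chiral, by the same vanishing-norm argument, so ``chiral-primary base'' is not an extra hypothesis but a consequence.
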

\begin{proof}
\smartqed
Recall that in a unitary $N=2$ theory the norm on the Hilbert space of states is positive and: $$\left(G_r^\pm\right)^\dag = G^\mp_{-r}.$$
Therefore, for any state $\ket{\cO}$ we have
\ban
0 &\leq ||G^\pm_{-1/2}\ket{\cO}||^2 + ||G^\mp_{1/2}\ket{\cO}||^2 = \bra{\cO}\{G^\mp_{1/2},G^\pm_{-1/2}\}\ket{\cO} = \bra{\cO}\left(2L_0\mp J_0\right)\ket{\cO} = 2h \mp q,
\ean
which proves the inequality.

To demonstrate the second statement, we show the reverse direction first. Assume that $\cO$ is a chiral primary. Then, by definition
$$ G^\pm_{1/2}\ket{\cO} = G^+_{-1/2}\ket{\cO} =0,$$
and it follows that $h=q/2$. Similarly, if $\cO$ is an anti-chiral primary, then it follows immediately that $h=-q/2$. Going in the other direction, we now assume that $\cO$ is an operator such that $h=\pm q/2$, which tells us
$$ G^\pm_{-1/2}\ket{\cO} = G^\mp_{1/2}\ket{\cO} =0.$$
The first condition tells us that $\cO$ is {\it (anti-)chiral}, but not necessarily primary.\footnote{The second condition is certainly required to be primary, but is not sufficient.} Thus, it remains to show that:
$$ L_n\ket{\cO} = G_r^\pm\ket{\cO} =G^\mp_{s+1}\ket{\cO}= J_m\ket{\cO} =0,\quad \forall~n,r,s,m>0.$$
However, all of these operators reduce the value of $h\mp q/2$:
\ban
[(L_0\mp\tfrac12 J_0),L_n] = -nL_n,\quad &\quad  [(L_0\mp\tfrac12 J_0),G^\pm_r] = -(r+\tfrac12)G^\pm_r ,\\
[(L_0\mp\tfrac12 J_0),J_m] = -mJ_m , \quad &\quad [(L_0\mp\tfrac12 J_0),G^\mp_{s+1}] = -(s+\tfrac12)G^\mp_{s+1}.
\ean
Since $\ket{\cO}$ already saturates the BPS bound, $h= |q|/2$, these new states must all vanish. Thus, $\cO$ is primary as well.
\qed
\end{proof}
Armed with these important facts, we can now return to proving the ring structure of the chiral primaries.
\begin{proof}[of Thm.~\ref{thm:ring}]
\smartqed
Consider the OPE of two chiral primary operators $\cO_1$ and $\cO_2$:
$$
\cO_1(z_1)\cO_2(z_2) = \sum_i (z_1-z_2)^{h_i-h_1-h_2}\cO_i(z_2),
$$
where we have absorbed the structure coefficients $c_{12i}$ into the operators $\cO_i$. Charge conservation, together with the BPS bound~\C{lem:BPS}, gives us:
$$
h_i \geq \frac{q_i}{2} = \frac{q_1+q_2}{2} = h_1 + h_2.
$$
Therefore, the OPE of two chiral primaries is non-singular, and we are free to take the limit $z_1\rightarrow z_2$. However, the only terms that survive that limit must have
$$ h_i = h_1 + h_2 = \frac{q_i}{2},$$
so they must also be chiral primaries, and we have
$$
\cO_1(z)\cO_2(z) = \sum_{i\in \{CPO\}} \cO_i(z),
$$
where $\{CPO\}$ denotes the set of chiral primaries operators. Thus, the chiral primaries form a closed ring.

To show finiteness of this ring, we go back to the $N=2$ algebra and consider
$$
\{G^-_{3/2},G^+_{-3/2}\} = 2L_0-3J_0 +\frac{2}{3}c.
$$
In a unitary theory, $G^-_{3/2}$ and $G^+_{-3/2}$ are each others adjoints, and so sandwiching their bracket by any chiral primary states $\ket{\cO}$ gives us:
$$
0\leq \bra{\cO}\{G^-_{3/2},G^+_{-3/2}\}\ket{\cO} = \bra{\cO}(2h-3q+2c/3 )\ket{\cO} = -4h +2c/3.
$$
This sets an upper-bound on the weight of a chiral primary: $h\leq c/6$. Recall, from Prop.~\ref{prop:unitary}, that in a unitary CFT $h\geq0$ for all operators. Thus, the allowed weights of chiral primaries must lie in the range
$$
0\leq h\leq \frac{c}{6},
$$
and so in a non-degenerate theory (where the values of $h$ are discrete), there are only finitely many possibilities.
\qed
\end{proof}
While the existence of a non-singular ring structure is an interesting feature of the theory, and useful for computations, this fact alone is not terribly exciting.
What makes the chiral rings interesting is their relationship to algebraic geometry. This next theorem, and especially its proof, will begin to uncover this connection.
\begin{theorem}\label{thm:coh}
The chiral ring $\cR_c$ is isomorphic to the cohomology ring of $G^+_{-1/2}$.
\end{theorem}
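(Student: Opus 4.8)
The plan is to recognize the operator $Q := G^+_{-1/2}$ as a differential and to set up a Hodge-theoretic correspondence between its harmonic representatives and the chiral primaries. The first step is to check nilpotency: specializing $\{G^+_r,G^+_s\}=0$ from Prop.~\ref{prop:N=2} to $r=s=-\tfrac12$ gives $2Q^2=0$, hence $Q^2=0$, so that the cohomology $H(Q)=\ker Q/\operatorname{im}Q$ is well-defined on the space of states. Working in a unitary theory (the standing assumption of this section), the adjoint is $Q^\dag=(G^+_{-1/2})^\dag=G^-_{1/2}$, which I would treat as the analogue of a codifferential.

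Next I would compute the associated ``Laplacian''. Setting $r=-\tfrac12$, $s=\tfrac12$ in the bracket $\{G^+_r,G^-_s\}$ of Prop.~\ref{prop:N=2}, the central term vanishes because $4r^2-1=0$, and one finds
\[
\Delta := \{Q,Q^\dag\} = 2L_0 - J_0,
\]
an operator with eigenvalue $2h-q$ on a state of weight $h$ and charge $q$. The BPS bound of Lemma~\ref{lem:BPS} says precisely that $2h-q\geq0$, with equality if and only if the state is a chiral primary; therefore $\ker\Delta=\ker Q\cap\ker Q^\dag$ is exactly the span of the chiral primaries, i.e. $\cR_c$. Since $[\Delta,Q]=0$ and $Q$ shifts the gradings by $(h,q)\mapsto(h+\tfrac12,q+1)$, the operator $Q$ acts within finite-dimensional $(h,q)$-graded subspaces in the non-degenerate case, and on each such subspace the orthogonal decomposition $\ker Q=\ker\Delta\oplus\operatorname{im}Q$ holds. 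This yields a canonical isomorphism $H(Q)\cong\ker\Delta=\cR_c$ as vector spaces; uniqueness of the chiral-primary representative in each class follows because a harmonic state that is also $Q$-exact is orthogonal to itself and hence zero.

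It then remains to upgrade this to a ring isomorphism. By the usual contour-deformation argument, $Q$ acts as a graded derivation of the same-point OPE product, so the product descends to $H(Q)$ and sends $(\text{exact})\cdot(\text{closed})$ to exact. By Thm.~\ref{thm:ring} the same-point OPE product of two chiral primaries is again a chiral primary, hence harmonic, hence equal to its own cohomology representative; consequently the induced product on $H(Q)$ coincides under the isomorphism above with the chiral-ring product in $\cR_c$, giving the desired isomorphism of rings.

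The main obstacle I anticipate is the Hodge decomposition in the infinite-dimensional state space. I would sidestep the functional-analytic issues by restricting to each finite-dimensional $(h,q)$-eigenspace, which is legitimate for a non-degenerate theory, thereby reducing the statement to elementary linear algebra on a two-term complex. A secondary point requiring care is verifying that $Q$ genuinely distributes over the OPE with the correct fermionic signs, since this is exactly what makes the product well-defined on cohomology and matches it to the chiral-ring multiplication.
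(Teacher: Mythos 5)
Your proof is correct, but it takes a genuinely more self-contained route than the paper's. The paper explicitly declines to carry out the Hodge decomposition on the state space (``we will follow a different route instead'') and instead only sets up the dictionary $G^+_{-1/2}\leftrightarrow\bar\del$, $G^-_{1/2}\leftrightarrow\bar\del^\dag$, $2L_0-J_0\leftrightarrow\Delta$, after which the theorem is reduced by analogy to the classical fact that every Dolbeault class contains a harmonic representative. You instead prove the harmonic-representative statement directly inside the SCFT: nilpotency from $\{G^+_r,G^+_s\}=0$, the Laplacian $\{Q,Q^\dag\}=2L_0-J_0$ whose kernel is pinned down exactly by Lemma~\ref{lem:BPS}, and the orthogonal decomposition $\ker Q=\ker\Delta\oplus\operatorname{im}Q$ on finite-dimensional graded pieces of a non-degenerate unitary theory. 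This buys an actual argument where the paper has an appeal to analogy, and it localizes precisely where unitarity and non-degeneracy enter. You also address the multiplicative structure --- $Q$ acting as a graded derivation of the coincident-point product, so that the induced product on $H(Q)$ matches the chiral-ring product via Thm.~\ref{thm:ring} --- which the paper's proof omits entirely even though the statement asserts an isomorphism of \emph{rings}. One small imprecision to fix: $Q$ does not act ``within'' a fixed $(h,q)$-eigenspace, since it shifts $(h,q)\mapsto(h+\tfrac12,q+1)$; what it preserves is the eigenvalue of $2L_0-J_0$, and the finite-dimensional linear algebra should be run on the resulting graded complex $\cdots\to V_{h,q}\to V_{h+1/2,q+1}\to\cdots$ at fixed $2h-q$. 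What the paper's shorter route buys in exchange is the dictionary itself, which is reused verbatim in the subsequent Hodge-decomposition corollary and in the Calabi-Yau theorem of Sect.~\ref{ss:geometry}.
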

\begin{proof}
\smartqed
First, we note that $$\left(G^+_{-1/2}\right)^2=0,$$ so its cohomology is a well-defined notion. Clearly, $\cR_c \subseteq{\rm Ker}~ G^+_{-1/2}$,
so it remains to show that every chiral operator can be written as a chiral primary modulo ${\rm Im}(G^+_{-1/2})$.
This can be done by an analog of the Hodge decomposition for forms, but we will follow a different route instead.
The idea will be to demonstrate an equivalence between the chiral structures of the $N=2$ algebra and the $\bar\del$-cohomology of complex manifolds, where the equivalent result is well-known.
In the process, we uncover a deep relationship between the $N=2$ chiral rings and complex geometry, which will be expanded upon in the sequel.

The map between these two structures is given by the following dictionary:
\ban
\bar\del\ &\leftrightarrow\ G^+_{-1/2}\\
\bar\del^\dag\ &\leftrightarrow\ G^-_{1/2}\\
\Delta\ &\leftrightarrow\  2L_0 -J_0 \\
{\rm deg}\ &\leftrightarrow\ J_0.
\ean
Crucially, this identification respects the correct algebras on both sides, for instance:
$$\Delta = \bar\del\bar\del^\dag +\bar\del^\dag\bar\del\ \leftrightarrow\ \{G^+_{-1/2},G^-_{1/2}\} = 2L_0-J_0.$$
Once the operators are identified, the rest of the map falls into place:
\ban
{\rm chiral}\ &\leftrightarrow\ \bar\del{\rm -closed}\\
{\rm chiral~primary}\ &\leftrightarrow\ {\rm harmonic}.
\ean
Thus, the statement of the theorem is just that every (de Rham or Dolbeault) cohomology class contains a harmonic representative, which is a well-known fact.
\qed
\end{proof}
At this point, by using the above dictionary we could translate many of the well-known results from Dolbeault cohomology into corresponding statements about the $N=2$ chiral rings.
We will content ourselves for the moment with just one.
\begin{corollary}{\rm (Hodge decomposition)} In an $N=2$ theory, any state $\ket{\cO}$ can be written in the form
$$
\ket{\cO} = \ket{\cO_0} + G^+_{-1/2}\ket{\cO_1} + G^-_{1/2}\ket{\cO_2},
$$
where $\cO_0$ is a chiral primary, and $\cO_1$ and $\cO_2$ are some other operators. In particular, if $\cO$ is chiral then $\cO_2=0$.
\end{corollary}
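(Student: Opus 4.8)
The plan is to recognize $G^+_{-1/2}$ and $G^-_{1/2}$ as an abstract $(\bar\del,\bar\del^\dag)$ pair and to reproduce, directly at the level of the Hilbert space, the orthogonal three-term splitting that underlies ordinary Hodge theory. The dictionary set up in the proof of Thm.~\ref{thm:coh} already tells us what to expect: $\bar\del\leftrightarrow G^+_{-1/2}$, $\bar\del^\dag\leftrightarrow G^-_{1/2}$, and the Laplacian $\Delta\leftrightarrow\{G^+_{-1/2},G^-_{1/2}\}=2L_0-J_0$. The goal is therefore to establish the orthogonal decomposition
\be
\cH = \ker\Delta \;\oplus\; \operatorname{im} G^+_{-1/2} \;\oplus\; \operatorname{im} G^-_{1/2},
\ee
to identify $\ker\Delta$ with the chiral primaries, and then to read off the stated formula.

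First I would record the two structural facts that make the argument run. Unitarity gives $\left(G^+_{-1/2}\right)^\dag = G^-_{1/2}$, and both operators square to zero (as already used in Thm.~\ref{thm:coh}). Combined with the sandwich identity $\bra{\cO}\Delta\ket{\cO} = \|G^+_{-1/2}\ket{\cO}\|^2 + \|G^-_{1/2}\ket{\cO}\|^2$, this shows that a state is harmonic ($\Delta\ket{\cO}=0$) precisely when it is annihilated by both of $G^\pm_{\mp 1/2}$; by the BPS bound (Lem.~\ref{lem:BPS}) this is exactly the condition $h=q/2$ characterizing a chiral primary. Hence $\ker\Delta=\cR_c$ at the level of states, reproducing the entry ${\rm chiral~primary}\leftrightarrow{\rm harmonic}$ of the dictionary, so $\cO_0$ will automatically be a chiral primary.

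Next I would verify mutual orthogonality of the three summands by the usual one-line adjoint manipulations: harmonic states are orthogonal to $\operatorname{im} G^+_{-1/2}$ because $\langle\cO_0|G^+_{-1/2}\alpha\rangle = \langle G^-_{1/2}\cO_0|\alpha\rangle = 0$, similarly against $\operatorname{im} G^-_{1/2}$, while the two images are orthogonal because $\langle G^+_{-1/2}\alpha|G^-_{1/2}\beta\rangle = \langle\alpha|(G^-_{1/2})^2\beta\rangle = 0$. With orthogonality in hand, the decomposition of an arbitrary $\ket{\cO}$ follows once the three subspaces are known to span $\cH$: any vector orthogonal to both $\operatorname{im} G^+_{-1/2}$ and $\operatorname{im} G^-_{1/2}$ must be annihilated by $G^-_{1/2}$ and $G^+_{-1/2}$, hence harmonic, and being orthogonal to $\ker\Delta$ as well it must vanish.

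The main obstacle is precisely this completeness step, which is the honest content of ``every class has a harmonic representative''. As stated it reduces to finite-dimensional linear algebra only because the theory is non-degenerate, so that $L_0$ and $J_0$ have discrete joint spectrum with finite-dimensional eigenspaces; one restricts the entire argument to a fixed $(L_0,J_0)$-eigenspace, noting that $G^\pm_{\mp1/2}$ preserve this grading up to the fixed shifts dictated by $[L_0,G^\pm_{-1/2}]$ and $[J_0,G^\pm_{-1/2}]$. Finally, for the ``in particular'' clause I would apply $G^+_{-1/2}$ to the decomposition of a chiral $\cO$: the harmonic term dies (it is a chiral primary), the $G^+_{-1/2}$-exact term dies since $(G^+_{-1/2})^2=0$, and chirality of $\cO$ leaves $G^+_{-1/2}G^-_{1/2}\ket{\cO_2}=0$; then $\|G^-_{1/2}\ket{\cO_2}\|^2 = \bra{\cO_2}G^+_{-1/2}G^-_{1/2}\ket{\cO_2}=0$ forces the last term to vanish, so one may take $\cO_2=0$.
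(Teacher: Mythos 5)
Your proof is correct, and it supplies an argument that the paper deliberately omits: immediately after stating this corollary the author remarks that the decomposition ``would have been required in proving Thm.~3.3, but by cleverly mapping the structure into a familiar language we were able to effectively side-step the issue'' --- i.e.\ the paper's route is to translate the statement through the dictionary $\bar\del\leftrightarrow G^+_{-1/2}$, $\bar\del^\dag\leftrightarrow G^-_{1/2}$ and simply import the classical Hodge decomposition from Dolbeault theory, with no independent verification on the CFT side. You instead prove the three-term orthogonal splitting directly in the Hilbert space from unitarity, the adjointness $(G^+_{-1/2})^\dag=G^-_{1/2}$, nilpotency, and the anticommutator $\{G^+_{-1/2},G^-_{1/2}\}=2L_0-J_0$, identifying the harmonic states with the chiral primaries via the BPS bound of Lem.~3.1. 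This is the more self-contained approach: it makes the corollary a theorem of the $N=2$ representation theory rather than a consequence of an analogy, and it correctly isolates the only nontrivial step --- completeness of the three subspaces --- which you resolve by restricting to the finite-dimensional joint $(L_0,J_0)$-eigenspaces guaranteed by non-degeneracy and unitarity (rightly noting that $G^\pm_{\mp1/2}$ shift the grading by fixed amounts rather than preserve it, so the orthogonal-complement argument must be run eigenspace by eigenspace). The handling of the ``in particular'' clause, forcing $G^-_{1/2}\ket{\cO_2}=0$ by a norm computation, is also correct. What the paper's dictionary argument buys is brevity and the conceptual point that the chiral-ring structure mirrors Dolbeault cohomology; what yours buys is an actual proof that does not presuppose the existence of a geometric realization.
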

This decomposition would have been required in proving Thm.~\ref{thm:coh}, but by cleverly mapping the structure into a familiar language we were able effectively side-step the issue.
We will elaborate further on the analogy between chiral rings and complex geometry in the remainder of these notes. Under some simple, and fairly obvious, assumptions we will
see that the rings $\cR_{cc}$ and $\cR_{ac}$ can be interpreted as the Dolbeault cohomology rings $H^{*,*}(X)=\oplus_{p,q}H^{p,q}(X)$ on some K\"ahler manifold $X$.
Before doing so, first must discuss a couple other important features of $N=2$ theories.

\subsubsection*{Marginal Operators}
Recall the example of the compactified boson from Sect.~\ref{ss:examples}, which\footnote{For simplicity, we suppress the normalization constant $1/4\pi$ in front of the action.} is specified by the action
$$
S_0 = \int \D^2z\ \del X\bar\del X,
$$
with $X\sim X+2\pi R$. Geometrically, this theory describes the mapping of a two dimensional worldsheet into a space containing a circle of radius $R$.
For the purposes of this discussion, it will be more convenient to work with the rescaled field $\tilde X(z,\bar z):=R X(z,\bar z)$ so that
$$
S_0 =R^2 \int \D^2z\  \del \tilde X \bar \del \tilde X,
$$
and now $\tilde X\sim \tilde X +2\pi$. Thus $\tilde X$ is independent of the modulus of the circle, $R$, which now appears as a parameter in the action.
Suppose now we wish to deform the size of the circle by an infinitesimal amount: $R\rightarrow R+\varepsilon$. It follows that the action $S_0$ is also deformed:
$$
S_0\rightarrow S_\ve = S_0 + 2\ve R\int \D^2z\ \del\tilde X \bar\del \tilde X +O(\ve^2).
$$
The compactified boson therefore defines a family of CFTs labeled by the modulus $R$.
Theories corresponding to different values of $R$ are therefore connected by adding the operator $\cO = \del\tilde X\bar\del \tilde X$ to the original action $S_0$.
Notice that this deforming operator has $(h,\tilde h)=(1,1)$, and this is exactly compensated for by the measure $\D^2z=\D z \D\bar z$, so that the new integrated action is also conformally invariant.
In this simple example, it is clear that both $S_0$ and $S_\ve$ define conformally invariant theories since we are only changing the radius of a circle. More generally, such a $(1,1)$ deformation may break the conformal invariance of the theory once it used to deform the action.
\begin{definition}
  A local operator $\cO(z,\bar z)$ is called {\it marginal} if it has weights $(h,\tilde h)=(1,1)$, and can therefore be used to deform a CFT.
  A marginal operator is called {\it truly marginal} if it remains $(1,1)$ after being added to the action of a conformal theory.
\end{definition}
We now wish to study the truly marginal operators of an $N=(2,2)$ theory, since these will lead us to families of $(2,2)$ SCFTs.
In particular, we would like to examine the link between truly marginal operators and chiral primary operators.
In an $N=(2,2)$ theory, in addition to being conformally invariant a marginal deformation must also respect the $U(1)\times U(1)$ symmetry generated by $J(z)$ and $\tilde J(\bar z)$.
So, clearly, the chiral primaries themselves cannot be marginal, since the only neutral chiral primary is the vacuum state with $h=\tilde h=0$.
However, it is possible for certain superpartners of chiral primaries to be (truly) marginal.
\begin{proposition}
    Let $\cO_{(\pm1,+1)}$ be elements of the rings $\cR_{cc}$ and $\cR_{ac}$ with charges $ q=\pm1$, and $\tilde q=+1$. Then the operators
  $$
  \hat\cO_{(\pm1,+1)}(w,\bar w) := \oint_w \frac{\D z}{2\pi\I} \oint_{\bar w}\frac{\D\bar z}{2\pi\I}\ G^\mp(z)\tilde G^-(\bar z) \cO_{(\pm1,+1)}(w,\bar w),
  $$
  are truly marginal. Furthermore, every truly marginal operator in an $N=(2,2)$ theory can be associated with an element of the $\cR_{cc}$ or $\cR_{ac}$ ring in this manner.
\end{proposition}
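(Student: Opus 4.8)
\section*{Proof proposal}

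The plan is to split the statement into the forward claim (that each $\hat\cO_{(\pm1,+1)}$ is truly marginal) and the converse (surjectivity onto truly marginal operators), reducing both to mode computations governed by the $N=2$ algebra of Prop.~\ref{prop:N=2} together with the BPS bound of Lemma~\ref{lem:BPS}. First I would rewrite the contour integrals in terms of modes: since $\oint_w \frac{\D z}{2\pi\I}\,G^\mp(z)$ extracts the Laurent coefficient with $r=-\hlf$, and likewise in the antiholomorphic sector, the definition reads
$$
\hat\cO_{(\pm1,+1)} = G^\mp_{-1/2}\,\tilde G^-_{-1/2}\,\cO_{(\pm1,+1)}
$$
on the corresponding states. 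An element of $\cR_{cc}$ with $(q,\tilde q)=(+1,+1)$ saturates the BPS bound, so $(h,\tilde h)=(\hlf,\hlf)$, and the same holds for the $\cR_{ac}$ element with $(q,\tilde q)=(-1,+1)$. The commutators $[L_0,G^\mp_{-1/2}]=\hlf G^\mp_{-1/2}$ and $[J_0,G^\mp_{-1/2}]=\mp G^\mp_{-1/2}$, together with their tilded analogues, then show immediately that $\hat\cO_{(\pm1,+1)}$ carries weights $(1,1)$ and is neutral, $(q,\tilde q)=(0,0)$. A short check using $[L_n,G^\mp_r]=(\tfrac n2-r)G^\mp_{n+r}$ confirms that all positive modes annihilate it, so it is a conformal primary of weight $(1,1)$, i.e. a marginal operator.

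The harder half of the forward claim is that $\hat\cO$ is \emph{truly} marginal, which is really a non-renormalization statement. Here I would exploit the protection furnished by the BPS bound. Because $\hat\cO$ is $R$-neutral and is a superconformal descendant (a contour integral of the supercurrents against a primary), adding $\int\D^2w\,\hat\cO$ to the action preserves both the $U(1)\times U(1)$ $R$-symmetry and the supercharges; one must verify that the $G^\pm$ and $\tilde G^\pm$ contours can be freely deformed past the insertion, which is exactly the statement that the deformation is a genuine superspace integral. Granting this, the deformed theory remains $N=(2,2)$ superconformal, the family of operators continuously connected to $\cO_{(\pm1,+1)}$ stays chiral primary, and its integer $R$-charges $(\pm1,+1)$ cannot vary continuously. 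Lemma~\ref{lem:BPS} then locks $(h,\tilde h)=(\hlf,\hlf)$ for all values of the deformation parameter, so $\hat\cO=G^\mp_{-1/2}\tilde G^-_{-1/2}\cO_{(\pm1,+1)}$ remains exactly of weight $(1,1)$. I expect this step to be the main obstacle, since justifying that the deformation does not spoil the superconformal structure (so that charge quantization and BPS protection apply) is precisely the content of the non-renormalization theorem rather than a finite algebraic manipulation.

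For the converse I would start from an arbitrary truly marginal operator $\Phi$ of weight $(1,1)$. Preservation of the $R$-symmetry forces $\Phi$ to be neutral, and preservation of the supercurrents forces it to be a superconformal descendant. The idea is then to invert the construction by acting with the adjoint contours $G^\pm_{1/2}$ and $\tilde G^+_{1/2}$: using the Hodge decomposition corollary to Thm.~\ref{thm:coh}, the state $\ket{\Phi}$ splits into a harmonic (chiral primary) piece plus $G^+_{-1/2}$- and $G^-_{1/2}$-exact pieces, and the neutral, SUSY-preserving $(1,1)$ content is captured by a single chiral primary ancestor $\cO$ of weights $(\hlf,\hlf)$, whose charges are forced to be $(\pm1,+1)$ by the weight and charge bookkeeping of the lifting. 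Reapplying $G^\mp_{-1/2}\tilde G^-_{-1/2}$ then recovers $\Phi$ up to $G^+_{-1/2}$-exact terms, which are null in the chiral ring. The delicate point in this direction is verifying that the lifted ancestor genuinely lands in $\cR_{cc}$ or $\cR_{ac}$ rather than in some other primary family, though this is more representation-theoretic than dynamical.
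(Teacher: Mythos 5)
Your proposal is correct and follows essentially the same route as the paper's (explicitly sketched) proof: the marginality of $\hat\cO_{(\pm1,+1)}$ is read off from the weight/charge bookkeeping of $G^\mp_{-1/2}\tilde G^-_{-1/2}$ acting on a BPS-saturating primary, and the converse is obtained from the cohomological dictionary of Thm.~\ref{thm:coh} (the paper invokes an analog of the $\del\bar\del$-lemma where you invoke the Hodge decomposition corollary --- the same content). The only place you go beyond the paper is the \emph{truly} marginal step, where the paper simply defers to the references; your BPS-protection/charge-quantization argument is the standard one those references give, and you correctly flag its delicate point (showing order by order that the deformation preserves the superconformal structure so that the protection applies).
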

\begin{proof}
\smartqed
  Here we will only provide a sketch of the proof. The essential point is that, in both cases, it is clear that $\hat\cO_{(\pm1,1)}$ have the correct weights and charges to be marginal, namely $h=\tilde h=1$ and $q=\tilde q=0$.
  This follows from the basic fact that $G^\pm(z)$ has $(h,\tilde h)=(\hlf,0)$ and $(q,\tilde q)=(\pm1,0)$, and similarly for $\tilde G^\pm(\bar z)$.
  We leave the proof that these define {\it truly} marginal deformations to the references, though we will see in examples that this fact often correlated with the unobstructedness of geometric moduli, which is well-understood.
  That every (truly) marginal operator can be written in this way follows by applying the dictionary used in Thm.~\ref{thm:coh}\ to derive an analog of the $\del\bar\del$-lemma.
\qed
\end{proof}
Thus, among the elements of the $\cR_{cc}$ and $\cR_{ac}$ rings, those with $h=\tilde h=\hlf$ will play a special role, since they will parameterize the connected families of $N=(2,2)$ superconformal theories.

\subsection{Spectral Flow}\label{ss:spectral}

Recall that there exists a parameter $\n$ in the $N=2$ algebra, which determines the periodicity of $G^\pm(z)$ under $z\rightarrow e^{2\pi\I}z$. In particular, since
$$
G^\pm(z) = \sum_{r\in\Z\pm\n}\frac{G^\pm_r}{z^{r+3/2}}, \quad \Rightarrow \quad G^\pm(\E^{2\pi\I}z) = -\E^{\mp2\pi\I\n}G^\pm(z).
$$
The $NS$ and $R$ sectors of the theory correspond to $\n=1/2$ and $\n=0$, so that $G^\pm$ are, respectively, periodic and antiperiodic.
However, as we alluded to earlier, these sectors are not distinct.
\begin{proposition}
  The one-parameter family of $N=2$ algebras, labeled by $\n$, are all isomorphic.
\end{proposition}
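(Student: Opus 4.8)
The plan is to exhibit an explicit one-parameter family of isomorphisms --- the \emph{spectral flow} --- that continuously shifts the moding parameter. For each flow parameter $\eta\in\R$ I would define new generators
\ban
\tilde L_n &= L_n + \eta J_n + \frac{c}{6}\eta^2\dd_{n,0}, \\
\tilde J_n &= J_n + \frac{c}{3}\eta\,\dd_{n,0}, \\
\tilde G^+_r &= G^+_{r+\eta}, \qquad \tilde G^-_r = G^-_{r-\eta}.
\ean
The first thing to settle is the bookkeeping of the moding. Since $G^+_r$ carries $r\in\Z+\n$ and $G^-_r$ carries $r\in\Z-\n$, the shifted operators $\tilde G^+_r$ and $\tilde G^-_r$ are indexed by $r\in\Z+(\n-\eta)$ and $r\in\Z-(\n-\eta)$ respectively. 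Hence flowing by $\eta$ sends the algebra with parameter $\n$ to the one with parameter $\n-\eta$, and choosing $\eta=\n-\n'$ connects any two chosen values. Because flowing by $-\eta$ is a manifest inverse, it then suffices to show that the tilded generators satisfy \emph{exactly} the graded relations of Prop.~\ref{prop:N=2}, with the \emph{same} central charge $c$.

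The bulk of the argument is a direct verification of each relation in Prop.~\ref{prop:N=2}\ for the new generators, using only the original relations. Most are routine. The relations $[\tilde J_m,\tilde G^\pm_r]$, $[\tilde L_m,\tilde G^\pm_r]$, and $\{\tilde G^\pm_r,\tilde G^\pm_s\}=0$ follow almost immediately upon substituting the shifted modes; in particular the $\mp\eta$ produced by $[L_m,G^\pm_{r\pm\eta}]$ is exactly cancelled by the $\pm\eta$ from $\eta\,[J_m,G^\pm_{r\pm\eta}]$, so the coefficient $(\frac{m}{2}-r)$ is reproduced cleanly. The relations $[\tilde L_m,\tilde L_n]$, $[\tilde L_m,\tilde J_n]$, and $[\tilde J_m,\tilde J_n]$ require combining $[L,J]$, $[J,J]$, and the Virasoro central term; here the point is that the $\dd_{n,0}$ shifts built into $\tilde L_0$ and $\tilde J_0$ are precisely what is needed to convert the cross terms into the correct $(m-n)\tilde L_{m+n}$ and $-n\tilde J_{m+n}$ on the support $m+n=0$, while the Virasoro central charge passes through unchanged.

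The one relation I expect to be the genuine crux is the mixed anticommutator $\{\tilde G^+_r,\tilde G^-_s\}=\{G^+_{r+\eta},G^-_{s-\eta}\}$. Expanding with the original relation yields $2L_{r+s}+(r-s+2\eta)J_{r+s}+\frac{c}{12}\bigl(4(r+\eta)^2-1\bigr)\dd_{r+s,0}$, and the task is to match this against the target $2\tilde L_{r+s}+(r-s)\tilde J_{r+s}+\frac{c}{12}(4r^2-1)\dd_{r,-s}$. The term linear in $J$ works because the extra $2\eta J_{r+s}$ is exactly the shift hidden inside $2\tilde L_{r+s}$. The delicate part is the central term: on the support $s=-r$, the c-number contributions $\frac{c}{3}\eta^2$ coming from $2\tilde L_0$ and $(r-s)\frac{c}{3}\eta=2r\cdot\frac{c}{3}\eta$ coming from $(r-s)\tilde J_0$ must combine with the explicit $\frac{c}{12}(4r^2-1)$ to complete the square into $\frac{c}{12}\bigl(4(r+\eta)^2-1\bigr)$. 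Confirming this completion of the square --- and thereby that the central charge is genuinely unaltered --- is the heart of the argument, and indeed the precise quadratic and linear $\eta$-dependence chosen for $\tilde L_0$ and $\tilde J_0$ above is dictated entirely by this single requirement. Once this identity is checked, every relation of Prop.~\ref{prop:N=2}\ is reproduced, the map is an algebra isomorphism, and the proposition follows.
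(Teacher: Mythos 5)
Your proposal is correct and follows essentially the same route as the paper: both define the spectral-flow twisted generators $L_n^\eta = L_n+\eta J_n+\tfrac{c}{6}\eta^2\dd_{n,0}$, $J_n^\eta = J_n+\tfrac{c}{3}\eta\dd_{n,0}$, $G^{\eta\pm}_r=G^\pm_{r\pm\eta}$ and verify by direct computation that they close on the $N=2$ relations with the same central charge. In fact you supply more detail than the paper on the one genuinely nontrivial check (the completion of the square in the central term of $\{G^+,G^-\}$), which the paper leaves to the reader; your computation of that step is correct.
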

\begin{proof}
\smartqed
The basic idea is to construct a one-parameter family of operators that interpolate between the different values of $\n$. To that end, consider the following ``twisted" operators:
  \ban
  L_n^\th &:= L_n +\th J_n +\frac{c}{6}\th^2 \dd_{n,0}\\
  J_n^\th &:= J_n +\frac{c}{3}\th\dd_{n,0}\\
  G^{\th\pm}_r &:=  G^\pm_{r\pm\th}.
  \ean
If the original operators (for $\th=0$) are defined for a given value of $\n$, then their twisted versions (for $\th\neq0$) are in the sector $\n+\th$.
So, the twisted operators are simply linear combinations of the untwisted operators at a shifted value of $\n$.
In particular, the algebra of the twisted operators is isomorphic to the algebra of the untwisted operators in the sector $\n+\th$.
If we can show that the algebra of the twisted operators are equivalent for all values of $\th$, then we will have demonstrated the isomorphism of the $N=2$ algebra for all values of $\n$.

Indeed the twisted operators satisfy the same algebra for all values of $\th$, and moreover that algebra is precisely the $N=2$ algebra (for any fixed value of $\n$). The proof of this claim is by brute force computation. For example:
$$
[L^\th_n,G^{\th\pm}_r] = [L_n, G^\pm_{r\pm\th}] +\th[J_n,G^\pm_{r\pm\th}] =\left(\frac n2 -r\right)G^{\pm}_{n+r\pm\th}= \left(\frac n2 -r\right)G^{\th\pm}_{n+r},
$$
and the other (graded) commutation relations can be worked out similarly, so we will not reproduce them here. The point is that the algebra of the twisted operators are all equivalent to the $N=2$ algebra, and so the different values of $\n$ all yield isomorphic algebras.
\qed
\end{proof}
Since different values of $\n$ yield isomorphic algebras, it follows that the corresponding representations isomorphic as well.
Thus, there must exist a unitary transformation which maps between the different sectors labeled by $\n$.
If we denote by $\cH_\nu$ the Hilbert space in the sector $\n$, then we have a unitary transformation
$$
\cU_\th:{\cal H}_\n \rightarrow {\cal H}_{\nu+\th},
$$
such that for any operator $\cO$ that acts on $\cH_\n$, there exists an operator $\cO_{\th}$ that acts on $\cH_{\n+\th}$ given by
$$
\cO_{\th} = \cU_\th \cO\cU_{\th}^{-1}.
$$
In particular, the action of $\cU_\th$ on the $N=2$ generators gives:
  \ban
\cU^{-1}_\th L_n\cU_\th &= L_n^\th  \\
\cU^{-1}_\th J_n\cU_\th &= J_n^\th \\
\cU^{-1}_\th G^\pm_r\cU_\th &= G^{\th\pm}_r ,
\ean
and similarly for the anti-holomorphic generators.
\begin{definition}
The map between different sectors, $\cH_\n\rightarrow\cH_{\n+\th}$, is called {\it spectral flow} by an amount $\th$, and $\cU_\th$  is called the {\it spectral flow operator}.
\end{definition}
Notice that spectral flow by $\th=\pm\hlf$ induces an isomorphism between the $NS$ and $R$ sectors of the theory,\footnote{This equivalence, together with an integral charge constraint, guarantees the existence of supersymmetry in the target space (as opposed to worldsheet) theory~\cite{Banks:1987cy}, since bosons originate in the $NS$ sector and fermions originate in the $R$ sector.}
while spectral flow by an integral amount maps the $NS$ and $R$ sectors back to themselves.
Some care must be taken when comparing the Hilbert spaces $\cH_\nu$ in different sectors. Regarded simply as a collections of states, then spectral flow provides an isomorphism between $\cH_\nu$ and $\cH_{\n+\th}$.
However, as modules of the $N=2$ algebra these Hilbert spaces are not isomorphic, since they are acted upon by different (though isomorphic) algebras.\footnote{A nice discussion of this subtle point can be found in~\cite{Greene:1996cy}.}
This is a consequence of the fact that the charges and weights of each state varies with $\th$.
\begin{proposition}
  Let $\cO_0$ be an operator of weight $h_0$ and charge $q_0$, and let $\cO_\th$ be its image under spectral flow by an amount $\th$ with weight and charge $h_\th$ and $q_\th$. Then
\ban
q_\th=q_0-\frac{c}{3}\th,\quad h_\th=h_0-\th q_0 +\frac c6 \th^2.
\ean
In particular, if $\cO_0\in\cR_c$ , then $\cO_{1/2}$ is an {\rm R} ground state, and $\cO_{1}\in\cR_a$.
\end{proposition}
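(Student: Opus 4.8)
The plan is to reduce everything to the intertwining relations between $\cU_\th$ and the $N=2$ generators, together with the explicit form of the twisted zero-modes. By definition the weight and charge of the flowed operator are the eigenvalues of $L_0$ and $J_0$ on the flowed state $\ket{\cO_\th}=\cU_\th\ket{\cO_0}$ living in the sector $\n+\th$. Since conjugation by $\cU_\th$ converts these zero-modes into the twisted operators $L_0^\th$ and $J_0^\th$, the computation of $h_\th$ and $q_\th$ collapses to evaluating $L_0^\th$ and $J_0^\th$ on the original eigenstate $\ket{\cO_0}$.

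First I would record the two ingredients. From the definition of the twist, the relevant zero-modes are
$$ J_0^\th = J_0 + \tfrac{c}{3}\th, \qquad L_0^\th = L_0 + \th J_0 + \tfrac{c}{6}\th^2, $$
the $\d_{n,0}$ terms surviving only at $n=0$. Then I would push $J_0$ and $L_0$ through $\cU_\th$ via $\cU_\th^{-1}L_0\cU_\th=L_0^\th$ and $\cU_\th^{-1}J_0\cU_\th=J_0^\th$, keeping track that flowing by $+\th$ from $\cH_\n$ to $\cH_{\n+\th}$ reverses the sign of the linear-in-$\th$ twist (equivalently, one evaluates the $\th\to-\th$ twisted generators). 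Acting on $\ket{\cO_0}$, for which $L_0\ket{\cO_0}=h_0\ket{\cO_0}$ and $J_0\ket{\cO_0}=q_0\ket{\cO_0}$, this reads off directly
$$ q_\th = q_0 - \tfrac{c}{3}\th, \qquad h_\th = h_0 - \th q_0 + \tfrac{c}{6}\th^2. $$

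For the ``in particular'' claims I would specialize to $\cO_0\in\cR_c$, so that Lemma~\ref{lem:BPS} gives $h_0=q_0/2$. Substituting $\th=\hlf$ produces $h_{1/2}=h_0-\hlf q_0+\tfrac{c}{24}=\tfrac{c}{24}$, independent of $q_0$; since $\n+\hlf=1\cong0$ this state lies in the $R$ sector, and computing $\{G_0^+,G_0^-\}=2L_0-\tfrac{c}{12}$ from Prop.~\ref{prop:N=2} shows, using positivity of the norm and $(G_0^+)^\dag=G_0^-$ in the unitary theory, that $h=\tfrac{c}{24}$ is exactly the saturation condition $G_0^\pm\ket{\cO_{1/2}}=0$ that defines a Ramond groundstate. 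Substituting $\th=1$ gives $q_1=q_0-\tfrac{c}{3}$ and $h_1=h_0-q_0+\tfrac{c}{6}=-\hlf q_1$, so $\cO_1$ saturates the lower BPS bound; by the converse direction of Lemma~\ref{lem:BPS} it is therefore an anti-chiral primary, and since $\n+1=\tfrac{3}{2}\cong\hlf$ it remains in the $NS$ sector, i.e. $\cO_1\in\cR_a$.

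The reading-off of eigenvalues is routine; the step demanding the most care is the sign bookkeeping in the flow direction. The displayed formulas emerge with the stated signs only once one fixes consistently how $\th$ labels the shift $\n\to\n+\th$ and which of $\cU_\th,\cU_\th^{-1}$ conjugates the generators — a convention most cleanly pinned down a posteriori by the requirement that chiral primaries flow to Ramond groundstates at $\th=\hlf$ and to anti-chiral primaries at $\th=1$. The only genuinely new input beyond this is the $\{G_0^+,G_0^-\}$ anticommutator identifying $h=\tfrac{c}{24}$ with the Ramond groundstate condition; everything else is algebra already assembled in Prop.~\ref{prop:N=2} and Lemma~\ref{lem:BPS}.
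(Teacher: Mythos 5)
Your proposal is correct and follows essentially the same route as the paper: both derive $q_\th$ and $h_\th$ by evaluating the twisted zero-modes $J_0^\th = J_0+\tfrac{c}{3}\th$ and $L_0^\th = L_0+\th J_0+\tfrac{c}{6}\th^2$ against the intertwining relation $\cU_\th^{-1}J_0\cU_\th=J_0^\th$ on the eigenstate, and both then use the BPS lemma plus the $\{G_0^+,G_0^-\}=2L_0-\tfrac{c}{12}$ anticommutator to identify $h_{1/2}=c/24$ with the Ramond groundstate condition and the saturation $h_1=-q_1/2$ with anti-chirality. Your caveat about sign bookkeeping is well taken — the paper's own proof in fact writes $q_1=q_0+\tfrac{c}{3}$ at one point, inconsistent with its displayed formula $q_\th=q_0-\tfrac{c}{3}\th$ — but this does not affect the substance of either argument.
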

\begin{proof}
\smartqed
Let $\ket{\th}$ denote the vacuum in the $\th$ twisted sector. Then states $\ket{\cO_\th}$ and $\ket{\cO_0}$ are related by spectral flow:
$$
\ket{\cO_\th} = \cO_\th\ket{\th} = \cU_\th \cO_0 \cU_\th^{-1}\ket{\th} = \cU_\th\cO_0\ket{0} = \cU_\th\ket{\cO_0}.
$$
  Then, on the one hand we have
  $$
  J^\th_0\ket{\cO_\th} = \cU_\th J_0 \cU_\th^{-1}\ket{\cO_\th} = \cU_\th J_0 \ket{\cO_0} = q_0\cU_\th\ket{\cO_0} = q_0\ket{\cO_\th},
  $$
  but on the other hand,
  $$
  J^\th_0\ket{\cO_\th} = \left(J_0+\frac c3 \th\right)\ket{\cO_\th} = \left(q_\th+\frac c3 \th\right)\ket{\cO_\th}.
  $$
  Putting these together, we conclude that
  $$
  q_\th = q_0-\frac c3 \th.
  $$
By similar reasoning, we can also conclude
$$
h_\th = h_0-\th q_\th -\frac c6 \th^2 = h_0 - \th q_0 +\frac c6 \th^2.
$$

Now, suppose that $\cO_0\in \cR_c$ so that $h_0=q_0/2$, and we are in an $NS$ sector with $\n\in\Z+\hlf$. Spectral flow by $\th=\hlf$ takes us to an  $R$ sector, for which
$$
h_{1/2} = h_0-\frac{q_0}{2} +\frac{c}{24} = \frac{c}{24}.
$$
We have not discussed the Ramond sector in much detail, but we can place a lower bound on the allowed weights much as we did in the $NS$ sector. Since $G^\pm_0$ are adjoints to one another, then for any state $\ket{\psi}_R$ in the Ramond sector,
$$
0\leq {}_R\bra{\psi}\{G^+_0,G^-_0\}\ket{\psi}_R = {}_R\bra{\psi}(2L_0 -c/12)\ket{\psi}_R = 2h-c/12.
$$
In particular a Ramond groundstate, which must be annihilated by $G^\pm_0$, saturates the lower bound of $h=c/24$. Thus, the chiral primary states flow to the Ramond sector groundstates.
If instead we flow by $\th=1$, we return to an $NS$ sector, but now
$$
q_1 = q_0 +\frac c3,
$$
and
$$
h_1 = h_0-q_0 +\frac c6 = -\frac{q_1}{2} + \frac c6 = -\frac{q_1}{2}.
$$
So, indeed, a chiral primary operator flows to an anti-chiral primary.
\qed
\end{proof}

\subsection{Calabi-Yaus and Mirror Symmetry}\label{ss:geometry}
We have seen that the spectral flow operator induces an isomorphism of the $\cR_c$ and $\cR_a$ chiral rings.
However, recall that our real interest is $N=(2,2)$ theories, so we must include the anti-holomorphic sector as well.
Let $\cU_{\th,\tilde\th}$ denote the combined spectral flow operator for the two sectors.
Then $\cU_{\pm1,\pm1}$ induces an isomorphism between $\cR_{cc}$ and $\cR_{aa}$, which as we already learned are conjugate to one another.\footnote{Similar statements can be made for the $\cR_{ac}$ and $\cR_{ca}$ rings, by using the spectral flow operators $\cU_{\pm1,\mp1}$.}
This leads to the following suggestive fact:
\begin{proposition}\label{prop:poincare}{\rm (Poincar\'e duality)} Let $h^{q,\tilde q}$ be the number of elements in $\cR_{cc}$ with charges $(q,\tilde q)$ with respect to the $N=(2,2)$ algebra. Then, these degeneracies satisfy the duality relation
  $$h^{q,\tilde q} =h^{\frac{c}{3}-q,\frac{\tilde c}{3}-\tilde q}.$$
\end{proposition}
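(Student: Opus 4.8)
The plan is to combine two bijections already at our disposal: the spectral flow isomorphism $\cU_{1,1}$, which relates $\cR_{cc}$ to $\cR_{aa}$, and complex (charge) conjugation, which relates $\cR_{aa}$ back to $\cR_{cc}$. Their composite will be a degeneracy-preserving bijection of $\cR_{cc}$ onto itself whose effect on the $U(1)\times U(1)$ charges is exactly $(q,\tilde q)\mapsto(\tfrac{c}{3}-q,\tfrac{\tilde c}{3}-\tilde q)$, which is the assertion. Throughout I write $h^{q,\tilde q}$ for the dimension of the charge-$(q,\tilde q)$ graded piece of $\cR_{cc}$ and $h^{q,\tilde q}_{aa}$ for the corresponding piece of $\cR_{aa}$.

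First I would track the charges under the combined spectral flow $\cU_{\th,\tilde\th}$ by $\th=\tilde\th=1$. The preceding proposition shows that, in each chiral sector, flow by one unit carries a chiral primary to an anti-chiral primary, so $\cU_{1,1}$ restricts to a map $\cR_{cc}\to\cR_{aa}$. The charge formula $q_\th=q_0-\tfrac{c}{3}\th$ (and its anti-holomorphic analogue) gives, for $\th=\tilde\th=1$,
$$
q\ \mapsto\ q-\tfrac{c}{3},\qquad \tilde q\ \mapsto\ \tilde q-\tfrac{\tilde c}{3}.
$$
Since $\cU_{1,1}$ is unitary it is injective, and it is onto $\cR_{aa}$ because its inverse $\cU_{-1,-1}$ flows anti-chiral primaries back to chiral primaries; hence it is a degeneracy-preserving bijection, yielding $h^{q,\tilde q}=h^{\,q-c/3,\ \tilde q-\tilde c/3}_{aa}$.

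Next I would invoke the conjugation $(c,c)\simeq(a,a)^*$ recorded in Section~\ref{ss:chiral}, together with the fact (noted there) that complex conjugation is equivalent to charge inversion. This identifies the charge-$(a,b)$ piece of $\cR_{aa}$ with the charge-$(-a,-b)$ piece of $\cR_{cc}$, i.e.\ $h^{a,b}_{aa}=h^{-a,-b}$, and is involutive, hence bijective. Substituting $a=q-\tfrac{c}{3}$ and $b=\tilde q-\tfrac{\tilde c}{3}$ then gives
$$
h^{q,\tilde q}=h^{\,q-c/3,\ \tilde q-\tilde c/3}_{aa}=h^{-(q-c/3),\,-(\tilde q-\tilde c/3)}=h^{\,c/3-q,\ \tilde c/3-\tilde q},
$$
as claimed.

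The routine content — the charge shifts and the sign flip — is immediate from the formulas already established. The one point deserving genuine care, and the main obstacle, is justifying that spectral flow descends to an honest bijection \emph{between the rings} rather than merely a unitary map of the ambient Hilbert spaces: one must verify that flow by $\th=1$ carries a state saturating the chiral bound $h=q/2$ precisely to one saturating the anti-chiral bound $h=-q/2$ (so chiral primaries map to anti-chiral primaries, not to generic descendants), and that the $(q,\tilde q)$-grading is preserved so that each fixed charge sector is matched one-to-one. This is exactly the preceding proposition applied in both the holomorphic and anti-holomorphic sectors, so no new machinery is required, but it is the step where the argument would fail were spectral flow not to preserve the short-multiplet (BPS) structure.
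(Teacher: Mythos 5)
Your proof is correct and uses exactly the same two ingredients as the paper — charge conjugation between $\cR_{cc}$ and $\cR_{aa}$ (reversing all charges) composed with integral spectral flow between the same two rings (shifting charges by $(c/3,\tilde c/3)$) — merely applying them in the opposite order, which composes to the identical relation. The only small point the paper makes that you omit is the preliminary observation that the charges of $(c,c)$ operators lie in $[0,c/3]\times[0,\tilde c/3]$, so the ``dual'' charges are themselves admissible.
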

\begin{proof}
\smartqed
First, recall that the weights of a $(c,c)$ operator must lie in the range
$$
(0,0)\leq(h,\tilde h)\leq (\tfrac c6,\tfrac{\tilde c}{6}),
$$
which means the charges obey
$$
(0,0)\leq (q,\tilde q) \leq (\tfrac{c}{3},\tfrac{\tilde c}{3}).
$$
Therefore the ``dual" charges, $(\frac{c}{3}-q,\frac{\tilde c}{3}-\tilde q)$, are well-defined for $(c,c)$ operators.

To demonstrate the duality, we will use the two equivalences between the $\cR_{cc}$ and $\cR_{aa}$ rings mentioned above. Charge conjugation takes the $\cR_{cc}$ ring to the $\cR_{aa}$ ring and reverses the signs of all the charges. Thus,
$$
h^{q,\tilde q}_{cc} = h^{-q,-\tilde q}_{aa},
$$
where the meaning of the subscripts should be obvious. At the same time, we can flow by $\th=\tilde\th=-1$ to return us to the $\cR_{cc}$ ring, which shifts all the charges by $(c/3,\tilde c/3)$, and so
$$
h^{-q,-\tilde q}_{aa} = h^{\tfrac{c}{3}-q,\tfrac{\tilde c}{3}-\tilde q}_{cc}.
$$
By a straightforward generalization, this result applies to the $\cR_{ac}$ ring as well.
\qed
\end{proof}
\begin{corollary}
  In every unitary $N=(2,2)$ theory, there exists a unique state in the $\cR_{cc}$ ring with the maximal weight $(h,\tilde h)=(c/6,\tilde c/6)$.
\end{corollary}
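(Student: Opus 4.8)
The plan is to reduce this to the Poincar\'e duality relation of Prop.~\ref{prop:poincare} together with the uniqueness of the vacuum. First I would convert the claim about weights into a claim about charges. Since every element of $\cR_{cc}$ is a chiral primary in each of the two sectors, the BPS bound of Lemma~\ref{lem:BPS} forces $h=q/2$ and $\tilde h=\tilde q/2$. Hence the maximal weight $(h,\tilde h)=(c/6,\tilde c/6)$ is equivalent to the maximal charge $(q,\tilde q)=(c/3,\tilde c/3)$, and the corollary becomes the assertion that $h^{c/3,\tilde c/3}=1$.

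Next I would apply Prop.~\ref{prop:poincare} at the charges $(q,\tilde q)=(c/3,\tilde c/3)$. The dual charges are
$$
\left(\tfrac c3-\tfrac c3,\ \tfrac{\tilde c}{3}-\tfrac{\tilde c}{3}\right)=(0,0),
$$
so the duality relation collapses to $h^{c/3,\tilde c/3}=h^{0,0}$. It therefore suffices to count the $(c,c)$ operators of vanishing charge.

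Finally, the key step is to establish $h^{0,0}=1$. A $(c,c)$ operator with charges $(q,\tilde q)=(0,0)$ has weights $(h,\tilde h)=(0,0)$, and by the corollary following Prop.~\ref{prop:unitary} the only such state in a unitary theory is the vacuum, equivalently the identity operator, which is unique up to rescaling. Thus $h^{0,0}=1$, and combining this with the duality relation gives $h^{c/3,\tilde c/3}=1$, which establishes both the existence and the uniqueness of the maximal-weight element. The one point requiring care is precisely this identification $h^{0,0}=1$: one must invoke the earlier uniqueness-of-vacuum result to be sure that no nontrivial chiral primary carries vanishing charge. Everything else is an immediate specialization of the duality already in hand, so I expect no further obstacle.
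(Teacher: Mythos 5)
Your proposal is correct and follows exactly the paper's own argument: the paper likewise deduces $h^{0,0}=h^{c/3,\tilde c/3}=1$ from the uniqueness of the vacuum combined with the Poincar\'e duality of Prop.~\ref{prop:poincare}. You merely spell out the intermediate step (the BPS identification of maximal weight with maximal charge) that the paper leaves implicit.
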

\begin{proof}
\smartqed
  We know that the vacuum is the unique state with $h=\tilde h =q =\tilde q =0$. Then by duality $h^{0,0} = h^{\tfrac{c}{3},\tfrac{\tilde c}{3}} =1$.
\qed
\end{proof}
This property of chiral rings brings us back to our earlier claim, that there is a deep connection between chiral rings of $N=(2,2)$ theories on the one hand, and the Dolbeault cohomology of complex manifolds on the other.
For such a geometric interpretation to make sense we should demand that $q,\tilde q\in\Z$ for all $(c,c)$ and $(a,c)$ states, and in particular this implies $c/3,{\tilde c/3}\in\Z$.\footnote{Alternatively, we could demand that the spectral flow operators $\cU_{\pm1,0},\cU_{0,\pm1}$ be {\it mutually local} with respect to the other operators in the theory, and the integrality condition will emerge automatically on the entire spectrum (not just the chiral sectors)~\cite{Banks:1987cy}.}
To obtain a geometric interpretation we should further impose $c=\tilde c=3n$, where now $n$ can be regarded as the (complex) dimension of the associated geometry.
In fact, once we insist on integral charges the chiral rings are naturally equipped with a grading and compatible Hodge structure:
$$
\cR_{cc} = \bigoplus_{q,\tilde q=0}^n \cR_{cc}^{q,\tilde q},\quad \cR_{ac} = \bigoplus_{q,\tilde q =0}^n \cR_{ac}^{-q,\tilde q},
$$
where we use $-q$ in the $(a,c)$ ring so that me can conveniently sum over $q\geq0$. This structure should be more or less obvious, so we will not belabour the details.
One point that does deserve our attention is the property of complex conjugation. In the $N=(2,2)$ theories this interchanges the holomorphic and anti-holomorphic sectors, which should be distinguished from the charge conjugation that acts within each sector by reversing the signs of all the charges.
This works straightforwardly for the $(c,c)$ ring:
$$
\overline{\cR_{cc}^{q,\tilde q}} = \cR_{cc}^{\tilde q,q},
$$
but for the $(a,c)$ ring we must be careful because this brings us to the $(c,a)$ ring. Of course these two rings are related by the second type of conjugation, and so we have
$$
\overline{\cR_{ac}^{-q,\tilde q}} = \cR_{ca}^{\tilde q,-q} = \cR_{ac}^{-\tilde q,q}.
$$

We now see that the chiral rings of $N=(2,2)$ theories, with integral $U(1)\times U(1)$ charges and central charge $c=\tilde c$, bear a striking resemblance to the cohomology rings of K\"ahler manifolds of dimension $c/3$.
In fact, under these assumptions there is an even stronger statement that can be made about the associated K\"ahler manifolds.
\begin{theorem}
Suppose that $q,\tilde q\in\Z$ for all states in the $(c,c)$ and $(a,c)$ sectors of a unitary $N=(2,2)$ superconformal theory with $c=\tilde c=3n$.
Then, the chiral rings $\cR_{cc}$ and $\cR_{ac}$ are formally equivalent to the Dolbeault cohomology rings of some {\rm Calabi-Yau} $n$-folds.
\end{theorem}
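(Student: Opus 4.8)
The plan is to read off the axiomatic structure of a Calabi--Yau Dolbeault cohomology ring directly from the chiral rings, using the dictionary of Thm.~\ref{thm:coh} together with the integrality hypothesis. The phrase \emph{formally equivalent} is doing real work here: the goal is not to reconstruct an honest manifold, but to verify that $\cR_{cc}$ and $\cR_{ac}$ carry exactly the bigraded-ring structure, the Hodge-diamond symmetries, and the canonical trivialisation that together characterise $H^{*,*}(X)=\bigoplus_{p,q}H^{p,q}(X)$ for a Calabi--Yau $n$-fold.

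First I would fix the identification as bigraded rings. Applying Thm.~\ref{thm:coh} in the holomorphic and anti-holomorphic sectors at once realises each of $\cR_{cc}$ and $\cR_{ac}$ as the cohomology of a pair of anticommuting nilpotents built from $G^+_{-1/2}$ and $\tilde G^+_{-1/2}$, graded by the $U(1)\times U(1)$ charges $(q,\tilde q)=(J_0,\tilde J_0)$. Under the hypothesis $q,\tilde q\in\Z$ with $c=\tilde c=3n$, the weight bound $0\le h\le c/6$ from Thm.~\ref{thm:ring} forces $0\le q,\tilde q\le n$, so I identify $(q,\tilde q)$ with a Hodge bidegree and recover the grading $\cR_{cc}=\bigoplus_{q,\tilde q=0}^{n}\cR_{cc}^{q,\tilde q}$ of the preceding discussion. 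The product is the OPE at coincident points, which Thm.~\ref{thm:ring} guarantees is nonsingular and closed; correlating the fermion parity of a chiral primary with the parity of its charge makes this product graded-commutative, matching the wedge product on forms. The Hodge-diamond symmetries are then already in hand: finiteness is Thm.~\ref{thm:ring}, $h^{0,0}=1$ is uniqueness of the vacuum, Poincar\'e--Serre duality $h^{q,\tilde q}=h^{n-q,n-\tilde q}$ is precisely Prop.~\ref{prop:poincare}, and the Hodge symmetry $h^{q,\tilde q}=h^{\tilde q,q}$ follows from the conjugation relation $\overline{\cR_{cc}^{q,\tilde q}}=\cR_{cc}^{\tilde q,q}$ recorded above. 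At this point the data coincide with those of $H^{*,*}$ of an arbitrary compact K\"ahler $n$-fold.

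The heart of the argument, and the step I expect to be the main obstacle, is isolating the genuinely \emph{Calabi--Yau} input: a unique holomorphic $n$-form trivialising the canonical bundle, i.e.\ $h^{n,0}=1$ with multiplication by that class an isomorphism. On the CFT side this is supplied by spectral flow. Flowing the vacuum by $(\th,\tilde\th)=(-1,0)$ sends the state of charge $(0,0)$ to a chiral primary of charge $(q,\tilde q)=(n,0)$ and weight $(c/6,0)$, which I identify with the class $\Omega\in\cR_{cc}^{n,0}$; its uniqueness is forced because $\cU_{-1,0}$ is unitary and the vacuum is unique. The decisive subtlety is that the integrality hypothesis is exactly what makes $\cU_{\pm1,0}$ and $\cU_{0,\pm1}$ \emph{honest local operators} of the theory, mutually local with the full spectrum (cf.~\cite{Banks:1987cy}); geometrically this is the vanishing of $c_1$, i.e.\ the Calabi--Yau condition itself. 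Multiplication by $\Omega$ then implements the canonical trivialisation $\cR_{cc}^{0,\tilde q}\to\cR_{cc}^{n,\tilde q}$, the algebraic shadow of $\,\cdot\wedge\Omega\colon H^{0,\tilde q}(X)\to H^{n,\tilde q}(X)$, and the analogous single-sector flow identifies $\cR_{ac}$ with $\bigoplus_{p,q}H^q(X,\wedge^p T_X)\cong\bigoplus_{p,q}H^{n-p,q}(X)$, accounting for the sign convention $\cR_{ac}^{-q,\tilde q}$.

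What cannot be proved, finally, is any canonical passage from the abstract graded ring back to a specific space: the same ring may be realised by a Landau--Ginzburg phase, a non-geometric background, or by two distinct manifolds forming a mirror pair. This is why the statement is phrased for \emph{some} Calabi--Yau $n$-folds and only up to \emph{formal} equivalence, and the residual ambiguity in the geometric realisation is exactly the automorphism that the next section will identify with mirror symmetry.
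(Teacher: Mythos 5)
Your proposal is correct and follows essentially the same route as the paper: the dictionary from Thm.~\ref{thm:coh} extended to both sectors, the Hodge-diamond symmetries from Prop.~\ref{prop:poincare} and conjugation, and the decisive Calabi--Yau input $h^{n,0}=h^{0,n}=1$ obtained by applying the single-sector spectral flow operators $\cU_{\mp1,0}$ (and $\cU_{0,\mp1}$) to the unique vacuum. The extra observations you add --- graded-commutativity of the OPE product, multiplication by $\Omega$ as the trivialisation of the canonical bundle, and the role of mutual locality of the flow operators --- are consistent elaborations rather than a different argument.
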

\begin{proof}
\smartqed
At this point, the relation between the ring structures has been laid out fairly explicitly. Let us briefly recapitulate the dictionary between the chiral rings and the cohomology rings:
\ban
\begin{array}{ccccc}
\underline{\ (c,c)\  } & \ &  \underline{\ Dolbeault\ } & \ & \underline{\ (a,c)\ }\\
G^+_{-1/2} & \longleftrightarrow & \del & \longleftrightarrow & G^-_{-1/2}\\
\tilde G^+_{-1/2} & \longleftrightarrow & \bar\del & \longleftrightarrow & \tilde G^+_{-1/2}\\
2L_0-J_0 & \longleftrightarrow & \Delta_{\del} & \longleftrightarrow &  2L_0 +J_0 \\
2\tilde L_0-\tilde J_0 & \longleftrightarrow & \Delta_{\bar\del} & \longleftrightarrow & 2\tilde L_0 -\tilde J_0 \\
J_0 & \longleftrightarrow & {\rm deg} & \longleftrightarrow & -J_0 \\
\tilde J_0 & \longleftrightarrow & \overline{{\rm deg}} & \longleftrightarrow & \tilde J_0 \\
\cR_{cc}^{p,q} & \longleftrightarrow & H^{p,q} & \longleftrightarrow & \cR_{ac}^{-p,q}
\end{array}
\ean
We have already seen that the chiral rings posses the properties we expect of a Hodge diamond, namely Poincar\'e duality and complex conjugation:
$$
h^{p,q} = h^{n-p,n-q} = h^{q,p},
$$
and also $h^{0,0} = h^{n,n}=1$ follows from uniqueness of the vacuum combined with spectral flow.

The novel claim of the theorem is that the associated K\"ahler manifolds should be Calabi-Yau, which additionally requires $h^{n,0}=h^{0,n}=1$.
To show this, consider the states
$$
\cU_{-1,0}\ket{0},\quad \cU_{+1,0}\ket{0},
$$
so that we only apply spectral flow to the holomorphic sector. These states are elements of $\cR_{cc}$ and $\cR_{ac}$, respectively, with charges $(n,0)$ and $(-n,0)$.
Similarly, by applying $\cU_{0,\mp1}$ to the vacuum we obtain the charge conjugate states. The uniqueness of the vacuum state then gives us
$$
h^{0,0} = h^{n,0} = h^{0,n} = h^{n,n} = 1,
$$
exactly as we expect for a (compact) Calabi-Yau $n$-fold.
\qed
\end{proof}
We should point out that, at this level of discussion, the relation between $(2,2)$ theories and Calabi-Yau manifolds is only a formal one.
That is, given the $\cR_{cc}$ and $\cR_{ac}$ rings of a $(2,2)$ theory that satisfy the conditions of the theorem, nothing guarantees the existence of a Calabi-Yau manifold with the appropriate Hodge numbers.
However, when no such manifold is known then the chiral rings of the $(2,2)$ theory in question are often used to {\it define} the space, at least at the topological level.

In addition, while we have demonstrated a (formal) equivalence between chiral primaries and cohomology classes, we have not shown an equivalence of ring structures.
That is, while we have two rings with the same degeneracies nothing guarantees that these rings have identical product structures.
In fact, we should not expect the two rings to be completely isomorphic, since we know that ``stringy" effects will modify the geometry when the space is sufficiently small.
We have already seen a prominent example of this in Sect.~\ref{ss:Tdual}\ when we discussed T-duality. In the large volume limit, when a conventional geometric description applies,
then the ring structures will indeed agree. We will not prove this fact here, but we will see this borne out in examples. However, when the size of the manifold is small then the chiral ring structure leads to significant modifications of the usual cohomology ring.
This structure is usually termed {\it quantum cohomology}, and this is the subject of Gromov-Witten theory. This, and related topics, are discussed in great detail in~\cite{Hori:2003ic}, among other places, and other notes appearing in this volume.

There is another important stringy effect encoded in the chiral rings, and this is {\it mirror symmetry}. While this relation appears mysterious and surprising from a geometric perspective,
it actually follows rather trivially as an automorphism of $(2,2)$ theories.
\begin{corollary}{\rm (Mirror symmetry)}
Given an $N=(2,2)$ SCFT subject to the conditions in the previous theorem, we can associate two (typically) distinct Calabi-Yau manifolds, $X$ and $Y$, related by
$$
H^{p,q}(X) \simeq H^{n-p,q}(Y).
$$
\end{corollary}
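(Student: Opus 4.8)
The plan is to treat the two chiral rings as belonging to one and the same SCFT and to relate them by the integer holomorphic spectral flow operator $\cU_{1,0}$, reading the mirror relation directly off the charge bookkeeping. By the dictionary of the preceding theorem the relevant degeneracies are $\dim H^{p,q}(X)=\dim\cR_{cc}^{p,q}$ and $\dim H^{p,q}(Y)=\dim\cR_{ac}^{-p,q}$, where the superscripts record the $(J_0,\tilde J_0)$ eigenvalues. It therefore suffices to produce a charge-graded bijection $\cR_{cc}\to\cR_{ac}$ sending a state of charge $(p,q)$ to one of charge $(p-n,q)$; this will force $\dim\cR_{cc}^{p,q}=\dim\cR_{ac}^{p-n,q}=\dim\cR_{ac}^{-(n-p),q}=\dim H^{n-p,q}(Y)$, which is exactly the claimed isomorphism.

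First I would invoke spectral flow only in the left-moving sector, i.e. the unitary operator $\cU_{1,0}$ with $(\th,\tilde\th)=(1,0)$. Since $\th=1$ is an integer, this preserves the $NS$--$NS$ sector and restricts to a unitary map of its Hilbert space to itself. By the spectral flow proposition, flowing a left chiral primary by one unit yields a left anti-chiral primary while leaving the right-moving structure untouched; hence $\cU_{1,0}$ carries every $(c,c)$ state to an $(a,c)$ state, its inverse $\cU_{-1,0}$ returns the $(a,c)$ states to $(c,c)$, and so $\cU_{1,0}\colon\cR_{cc}\to\cR_{ac}$ is a bijection.

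Next I would read off the charge shift from the spectral flow formula $q_\th=q_0-\tfrac{c}{3}\th$. With $c=3n$ and $\th=1$ the left charge drops by exactly $n$, so $p\mapsto p-n$, while the right charge $q$ (flowed by $\tilde\th=0$) is unchanged. Combined with the $\cR_{ac}^{-p,q}\leftrightarrow H^{p,q}(Y)$ convention, the shift $p\mapsto p-n$ becomes the reflection $p\mapsto n-p$ on the holomorphic Hodge degree of $Y$, and the unitarity (hence bijectivity) of $\cU_{1,0}$ yields the equality of dimensions asserted above. This is the entire content of the relation $H^{p,q}(X)\simeq H^{n-p,q}(Y)$.

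The step I expect to require the most care is conceptual rather than computational: the two manifolds $X$ and $Y$ arise from one SCFT, so no auxiliary ``mirror theory'' need be constructed, and the mirror map is literally the relabeling of states effected by $\cU_{1,0}$ --- the concrete realization of the automorphism of the $N=2$ algebra that interchanges the chiral and anti-chiral sectors. One must be careful to distinguish this flow (charge shift $p\mapsto p-n$, which produces mirror symmetry) from the naive sign reversal $p\mapsto-p$ of the left $U(1)$ charge, which merely reproduces charge conjugation and would wrongly force $h^{p,q}(X)=h^{p,q}(Y)$. I would close by emphasizing that the argument establishes only the equality of Hodge numbers, i.e. an isomorphism of graded vector spaces; upgrading it to an isomorphism of the full ring structures fails in general because of the quantum-cohomology corrections noted earlier, so the statement is to be read at the level of the underlying cohomology groups.
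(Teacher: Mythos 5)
Your proposal is correct and follows essentially the same route as the paper: both apply one-sided integer spectral flow ($\cU_{\pm1,0}$) to identify $\cR_{cc}$ with $\cR_{ac}$, read off the left-charge shift $p\mapsto p-n$ from $q_\th=q_0-\tfrac{c}{3}\th$ with $c=3n$, and translate it through the grading conventions into $H^{p,q}\simeq H^{n-p,q}$. The only differences are the direction of the flow and which ring is labeled $X$ versus $Y$, both of which are immaterial.
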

\begin{proof}
\smartqed
  If we identify $\cR_{ac}^{-p,q} = H^{p,q}(X)$ and $\cR_{cc}^{p,q} = H^{p,q}(Y)$, then by applying the spectral flow operator $\cU_{\mp1,0}$ we have
  $$
 H^{p,q}(X) = \cR_{ac}^{-p,q} \simeq \cR_{cc}^{n-p,q} = H^{n-p,q}(Y).
 $$
 \qed
\end{proof}
In other words, $(X,Y)$ form a mirror pair of Calabi-Yau $n$-folds.
In particular, the K\"ahler and complex structure deformations of the two spaces are interchanged: $H^{1,1}(X)\simeq H^{n-1,1}(Y)$ and vice-versa.
Recall that these geometric deformations correspond to the (truly) marginal operators $\hat\cO_{(\pm1,+1)}$ in the $N=(2,2)$ theory.
Which of these we choose to associate with K\"ahler deformations, and which we associate with complex structure deformations is therefore a matter of taste, because there is a fundamental ambiguity:
\ban
\cO_{(1,1)} \in \cR_{cc}^{1,1} \simeq H^{1,1}(Y) \simeq H^{n-1,1}(X),\\
\cO_{(-1,1)} \in \cR_{ac}^{-1,1} \simeq H^{1,1}(X) \simeq H^{n-1,1}(Y).
\ean

As a relation between cohomology groups, the statement that for every Calabi-Yau $X$ there exists a mirror $Y$ with $h^{p,q}(X)=h^{n-p,q}(Y)$ is somewhat surprising, but not terribly deep.
The real depth of this relation comes from the fact that this isomorphism extends to the full (quantum) cohomology rings.
Mathematically, this leads to powerful computational techniques in enumerative geometry, among other areas.
For example, the computation of Gromov-Witten invariants, which is related to the  vertical cohomology, $H_{\rm vert}(X) = \oplus_k H^{k,k}(X)$, maps to a computation of periods in the middle dimension (horizontal) cohomology of the mirror, $H_{\rm hor}(Y) = \oplus_k H^{n-k,k}(Y)$.
In particular, the former requires the full {\it quantum} cohomology ring, while the latter can be carried out in using the {\it classical} cohomology. This was one of the earliest applications of mirror symmetry~\cite{Candelas:1990rm}.
Physically, the implications are very far reaching. Essentially the chiral rings control all of the physical observables.
Since a given ring can be associated with two distinct geometries there is no way for a string to distinguish between them.
This is much like the we saw in the case of T-duality, except now $X$ and $Y$ are topologically distinct manifolds.
It cannot be emphasized enough how surprising this relation is geometrically, and serves to underscore how differently strings ``see" the world around them as compared to a point particle.

\subsection{Examples}\label{ss:examples2}
We will now present several examples of $N=(2,2)$ SCFTs to illustrate the concepts that we have developed. Some of these will be formulated explicitly in terms of Calabi-Yau target spaces, making a direct connection to the chiral ring structure.
Other theories will have no a priori geometric interpretation, but one will emerge, through the chiral rings, nonetheless.

\subsubsection*{The Torus: Redux}
For our first example, we revisit our starting point from Sect.~\ref{sec:torus}\ and consider the torus, albeit from a much more formal perspective.
This is really nothing more than a combination of the free field CFTs of Sect.~\ref{ss:examples}, in a manner compatible with $(2,2)$ superconformal invariance.
We begin with two scalars compactified on circles of unit radii: $X^i \simeq X^i +2\pi$, for $i=1,2$.  It will prove useful to combine these into the complex scalar
$$
Z(z,\bar z) =\frac{1}{\sqrt{2}}(X^1 + \I X^2).
$$
$(2,2)$ supersymmetry requires an equal number of fermionic degrees of freedom, so we include a complex fermion in both the holomorphic and anti-holomorphic sectors:
$$
\psi(z)=\frac{1}{\sqrt{2}}(\psi^1+\I \psi^2),\qquad  \tilde\psi(\bar z)=\frac{1}{\sqrt{2}}(\tilde\psi^1+\I \tilde\psi^2).
$$
Each real scalar contributes $c=\tilde c=1$, while each Majorana (\ie\ real) fermion contributes either $c=\hlf$ or $\tilde c=\hlf$, depending on its holomorphicity.
Altogether, this theory has central charge $c=\tilde c =3$, consistent with our expectation for a Calabi-Yau of dimension $n=c/3 =1$.

The action of this  theory is that of a combination of free fields:
$$
S = \frac{1}{2\pi}\int \D^2z\left(\del Z\bar\del Z^* + \psi^*\bar\del\psi + \tilde\psi^*\del\tilde\psi\right),
$$
which we were studied separately in Sect.~\ref{ss:examples}. In terms of complex fields, the relevant OPEs in the holomorphic sector are given by:
\ban
\del Z(z) \del Z^*(w) &\sim -\frac{1}{(z-w)^2},\qquad
\psi^*(z)\psi(w)  \sim \frac{1}{z-w}.
\ean
Using these OPEs, it is a straightforward exercise to verify that the operators
\ban
T(z) &= -\del Z\del Z^* -\hlf \psi^*\del\psi - \hlf\psi\del\psi^*, \\
G^+(z) &= \sqrt{2}\I \, \psi^*\del Z, \\
G^-(z) &= \sqrt{2}\I \, \psi\del Z^*, \\
J(z) &= \psi^*\psi,
\ean
realize the $N=2$ algebra with $c=3$. Notice that $\psi^*(z)$ is a chiral primary operator of charge $q=+1$, while $\psi(z)$ is an anti-chiral primary of charge $q=-1$.
The scalar fields $Z,Z^*$ are neutral. Analogous results hold in the anti-holomorphic sector.
In particular, the (anti-)chiral rings are generated by the fermion fields from both sectors, so we have:
\ban
\cR_{cc} &= \{1,\psi^*,\tilde\psi^*,\psi^*\tilde\psi^*\}\\
\cR_{ac} &= \{1,\psi,\tilde\psi^*,\psi\tilde\psi^*\}
\ean
Note that $\psi^*\tilde\psi^*$ has $(h,\tilde h) = (\hlf,\hlf) = (c/6,\tilde c/6)$, which is maximal, so we have exhausted the possible elements in $\cR_{cc}$, and similarly for $\cR_{ac}$.
If we identify fermionic fields with differential one-forms, which is rather natural since both objects anti-commute, then the relation to the cohomology of the torus is clear:
$$
\cR_{cc}\simeq \cR_{ac} \simeq H^{*,*}(T^2) = \{1,\D Z,\D \bar Z, \D Z \D\bar Z\}.
$$

Let us examine the relation between the deformations of the torus and marginal operators in the field theory.
For this purpose, it helps to write the action as
$$
S = \frac{1}{4\pi} \int \D^2z\left(g_{z\bar z}\del Z\bar\del Z^* + g_{\bar z z}\del Z^* \bar\del Z +\ldots\right),
$$
where our starting point is $g_{z\bar z}=g_{\bar z z} =1$.
In each ring there is a unique operator that gives rise to a (truly) marginal deformation. In $\cR_{ac}$, that operator is
$$
\cO_{(-1,1)}(z,\bar z) = \psi(z)\tilde\psi^*(\bar z),
$$
which we act upon with $G^+(w)\tilde G^-(\bar w)$, and take a double contour integral, to produce the marginal operator
$$
\hat\cO_{(-1,1)}(z,\bar z) \propto \del Z(z)\bar \del Z^*(\bar z).
$$
This deformation has the same form as the (bosonic part of the) original action, and so adding this term to the action is equivalent to rescaling the torus metric $g_{z\bar z}$.
Thus, we have identified the K\"ahler deformation, which changes the area of the torus.
By adding a {\it complex} multiple of $\hat\cO_{(-1,1)}$,
$$
S \rightarrow \frac{1}{4\pi}\int \D^2z\left((1+\la)\del Z\bar\del Z^* + (1+\bar\la)\del Z^*\bar\del Z +\ldots \right),
$$
we end up deforming the complexified K\"ahler class of the torus,
$$
B+\I \omega \in H^1(X,T^*X)\simeq H^{1,1}(X).
$$
This is exactly what we found earlier in Sect.~\ref{ss:Kahler}. In the case of $\cR_{cc}$, we have the chiral primary operator
$$
\cO_{(1,1)}(z,\bar z) = \psi^*(z)\tilde\psi^*(\bar z),
$$
which we act upon by $G^-(w)\tilde G^-(\bar w)$, and take a double contour integral, to produce the marginal deformation
$$
\hat\cO_{(1,1)} \propto \del Z^*(z) \bar\del Z^*(\bar z).
$$
This deforms the metric by adding the non-Hermitian components $\dd g_{\bar z \bar z}$ and its complex conjugate.
However, we can always restore the Hermitian structure of the metric by changing the complex structure:
$$
\D Z \rightarrow \D Z' = \D Z + g^{z\bar z}\dd g_{\bar z\bar z}\D\bar Z.
$$
Thus, the deformation $\hat\cO_{(1,1)}$ induces a change in the complex structure given by
$$
\chi^z_{\bar z} = g^{z\bar z}\dd_{\bar z\bar z} \in H^1(X,TX)\simeq H^{0,1}(X).
$$
As expected, the moduli of the torus precisely match the deformations of the $(2,2)$ SCFT, which in turn are controlled by the chiral rings of the theory.
This simple example illustrates these features very nicely, and more elaborate geometric models will only build upon this basic structure.

\subsubsection*{Nonlinear Sigma Models}
Let us extend the previous example of the torus to more general target manifolds, $X$. Remarkably, the existence of $(2,2)$ supersymmetry imposes the restriction that $X$ must be a K\"ahler manifold~\cite{Zumino:1979et,AlvarezGaume:1981hm}.\footnote{To be precise, $X$ is K\"ahler only when the $B$-field is closed. More generally, $X$ can be bi-Hermitian~\cite{Gates:1984nk}\ or equivalently Generalized K\"ahler~\cite{2004math......1221G}.}
Furthermore, conformal invariance requires that $X$ be Ricci-flat~\cite{PhysRevLett.45.1057}. Together, these restrictions single out Calabi-Yau geometries as viable $(2,2)$ target spaces.
Working in complex basis, $X$ can parameterized by local coordinates $Z^i$ and $Z^\ibar = Z^i{}^*$, and Hermitian metric $g_{i\jbar} = g_{i\jbar}(Z,Z^*).$
For $n$ such complex scalars, together with their fermionic partners $\psi^i,\tilde\psi^i$ and $\psi^\ibar,\tilde\psi^\ibar$, the conformal theory will have central charge $c=\tilde c = 3n$, consistent with our expectation of a Calabi-Yau $n$-fold.
Unlike the case of the (flat) torus, the action for a general target, known as the nonlinear sigma model, will not involve free fields, and is given by:
$$
S = \frac{1}{2\pi}\int \D^2z\left( \hlf g_{i\jbar} \left(\del Z^i\bar\del Z^{\jbar} + \del Z^{\jbar}\bar\del Z^i\right)
+ g_{i\jbar}\psi^{\jbar}\bar D \psi^i + g_{i\jbar}\tilde\psi^{\jbar} D\psi^i +R_{i\jbar k\bar\ell}\psi^i\psi^{\jbar} \tilde\psi^k\tilde\psi^{\bar\ell}\right),
$$
where the covariant derivatives acting on the fermions are given by
\ban
\bar D \psi^i &= \bar\del \psi^i + \G^i_{jk}\bar\del Z^j \psi^k, \qquad
D \tilde \psi^i = \del\tilde\psi^i + \G^i_{jk}\del Z^j \tilde\psi^k,
\ean
and $\G$ and $R$ are the Levi-Civita connection and curvature associated with the metric. Note that $g,\G$ and $R$ are all functions of $Z,Z^*$, and so the action is highly nonlinear.
Of course for a flat metric, the nonlinear sigma model reduces to the free field case of a (complex) $n$-torus.

The $N=2$ algebra, with $c=3n$, is realized by the following combinations of operators:
\ban
T(z) &= -g_{i\jbar}\,\del Z^i\del Z^{\jbar} -\hlf g_{i\jbar} \left(\psi^{\jbar}\del\psi^i + \psi^i\del\psi^{\jbar}\right) +\ldots, \\
G^+(z) &= \sqrt{2}\I  g_{i\jbar}\,\psi^{\jbar}\del Z^i, \\
G^-(z) &= \sqrt{2}\I  g_{i\jbar}\,\psi^i\del Z^{\jbar}, \\
J(z) &= g_{i\jbar}\,\psi^{\jbar}\psi^i,
\ean
where we have suppressed higher order terms in $T(z)$, which will not be needed for our discussion.
This can be verified by employing the OPEs:
\ban
\del Z^i(z) \del Z^{\jbar}(w) &\sim -\frac{g^{i\jbar}}{(z-w)^2}, \qquad \psi^i(z)\psi^{\jbar}(w) \sim \frac{g^{i\jbar}}{z-w}.
\ean
Once again, the ring $\cR_{ac}$ is generated by the fermion fields $\psi^i$ and $\tilde\psi^{\jbar}$. Consider the operator
$$
\cO_{(-p,q)} = \omega_{i_1\ldots i_p\jbar_1\ldots\jbar_q} \psi^{i_1}\ldots\psi^{i_p}\tilde\psi^{\jbar_1}\ldots\tilde\psi^{\jbar_q},
$$
with charges $(-p,q)$, where $\om_{(p,q)}= \om_{(p,q)}(Z, Z^*)$ is an arbitrary coefficient function (which does not affect the weights or charges of this operator).
As before, if we replace the fermions by one-forms, we can think of $\om_{(p,q)}$ as $(p,q)$-form on $X$.
Then, by taking OPEs with $G^+$ and $\tilde G^-$, it follows that $\cO_{(-p,q)}$ is in the $(a,c)$ ring if and only if $\del\om_{(p,q)}=\bar\del\om_{(p,q)}=0$, or in other words:
$$
\cO_{(-p,q)}\in\cR_{ac}^{-p,q}\quad \Leftrightarrow\quad \om_{(p,q)}\in H^q(X,\wedge^p T^*X) \simeq H^{p,q}(X).
$$
In particular, the marginal operators
$$
\hat\cO_{(-1,1)}(z,\bar z) = \om_{i,\jbar}\ \del Z^i(z)\bar\del Z^{\jbar}(\bar z) + (fermions)
$$
contain the K\"ahler metric deformations of $X$, together with the necessary modifications to the fermionic terms in the action.
Similarly, $\cR_{cc}$ is generated by the fermionic fields $\psi_i = g_{i\jbar}\psi^{\jbar}$ and $\tilde\psi^{\jbar}$. A generic operator of charge $(p,q)$ takes the form
$$
\cO_{(p,q)} = \chi^{i_1\ldots i_p}_{\jbar_1\ldots\jbar_q} \psi_{i_1}\ldots\psi_{i_p}\tilde\psi^{\jbar_1}\ldots\tilde\psi^{\jbar_q},
$$
where $\chi^{(p)}_{(q)} = \chi^{(p)}_{(q)}(Z,Z^*)$ is also an arbitrary coefficient function. Then, by the same reasoning as above,
$$
\cO_{(p,q)}\in\cR_{cc}^{p,q}\quad \Leftrightarrow\quad \chi^{(p)}_{(q)}\in H^q(X,\wedge^p TX) \simeq H^{n-p,q}(X).
$$
The bosonic terms in the marginal operator $\hat\cO_{(1,1)}$ deform the complex structure of the metric:
$$
\hat\cO_{(1,1)}(z,\bar z) = \chi_{\jbar}^i\ g_{i\bar k}\del Z^{\bar k}(z)\bar\del Z^{\jbar}(\bar z) + (fermions).
$$
 Of course, spectral flow/mirror symmetry will interchange the two classes of  marginal operators, and so associating one with K\"ahler as opposed to complex structure deformations is purely conventional.

Thus, when a $N=(2,2)$ superconformal theory can be realized geometrically by a nonlinear sigma model, there is a direct connection between the chiral rings of the theory and the cohomology of the (Calabi-Yau) target space.
This relation goes back to the pioneering work of Witten~\cite{Witten:1982df}. There, the Ramond groundstates of $N=(1,1)$ supersymmetric sigma models were first identified with the cohomology of the target spaces.
In $(2,2)$ theories, spectral flow extends this identification to the chiral rings in the NS sectors.

\subsubsection*{Landau-Ginzburg Orbifolds}
In the sigma model examples above a Calabi-Yau manifold is needed as an input to specify the theory, so it comes as no surprise when that Calabi-Yau's cohomology shows up in the theory's chiral rings.
However, from the general discussion of Sect.~\ref{ss:geometry}\ we expect this should happen for any $(2,2)$ SCFT with integral charges, even when there is no obvious geometric interpretation.
Here we would like to present one such class, and examine how the Calabi-Yau emerges from the $N=2$ structure

The theories we will consider are based on so-called {\it Landau-Ginzburg} models. These theories are most succinctly described in superspace, but developing that formalism here in detail would spoil all of its economy.
Interested readers could consult~\cite{Hori:2003ic}\ for a nice introduction to the subject. The basic objects we use are chiral superfields $\Phi^i$, which contain both scalar fields, $\phi^i$, and fermion fields $\psi^i,\tilde \psi^i$ as components.
Landau-Ginzburg models are completely characterized by a certain holomorphic function, called a {\it superpotential}, $W(\Phi)$.
In order for a Landau-Ginzburg model to admit a (super)conformal symmetry, it must be a quasi-homogeneous function:
$$
W(\la^{w_i}\Phi^i) = \la^d W(\Phi^i),
$$
for some integers $w_i,d$, which contain no common factors. In order that this theory be non-degenerate, which is analogous to being having a compact target space,
$W$ must have an isolated minimum at the origin:
$$
\del_i W(\Phi) = 0,~\forall~i\quad \Leftrightarrow \quad \Phi^i=0,~\forall~i.
$$

When formulating a suitable action for these models, the superpotential is integrated against a measure on superspace that carries charges $q=\tilde q = -1$, and weights $h=\tilde h=-\hlf$.
Therefore, superconformal invariance requires that $W(\Phi)$ transforms in the opposite manner, with $q_W=\tilde q_W=+1$ and $h_W=\tilde h_W =+\hlf$.
Given the quasi-homogeneity of $W$, this completely determines the charges and weights of the chiral superfields $\Phi^i$:
$$
q_i =\tilde q_i =\frac{w_i}{d},\qquad h_i=\tilde h_i = \frac{w_i}{2d}.
$$
In particular, the operators $\Phi^i$ are chiral primaries, and generate the chiral ring. However, not all possible combinations of $\Phi^i$ correspond to non-trivial chiral ring elements.
The reason is that any operator of the form $\del_i W(\Phi)$ is (by the classical equations of motion) $G^+_{-1/2}$-exact~\cite{Lerche:1989uy}, and therefore by Thm.~\ref{thm:coh}\ is trivial in the chiral ring.
Thus, the chiral ring of a Landau-Ginzburg model is given by the local ring of $W(\Phi)$:
$$
\cR^0_{cc} = \frac{\CC[\Phi]}{\del_i W(\Phi)},
$$
where the 0 superscript reminds us that this is not the ring we are ultimately interested in, but only a starting point.
It is easy to see that the (unique) state of highest weight, $h=\tilde h =c/6$, in this ring is given by
$$
\cU_{1,1} = \det \left[\del_i\del_j W(\Phi)\right],
$$
which must correspond to the spectral flow operator. A quick count reveals that this highest weight state has weight
$$
\frac{c}{6} = \sum_i\left(\hlf-q_i\right).
$$
Since $q_i=\tilde q_i$ for all $\Phi^i$, and similarly for all chiral primary operators as well, the $(a,c)$ rings must actually trivial in these models, since there are no operators with negative charges.
This does not bode well for uncovering a relation to Calabi-Yau manifolds or mirror symmetry in these models.
However, recall that in order to assign a geometric interpretation to the chiral ring we must have integral charges, which is clearly violated here.
Furthermore, the ``dimension"
$$
n=\frac{c}{3} = \sum_i (1-2q_i)
$$
must be integral, which is also generally false. Clearly, as they stand Landau-Ginzburg models are not quite suitable for our purposes.

Notice that all of the chiral superfields, $\Phi^i$, have charges that are multiples of $1/d$: $q_i=w_i/d$.
So while the Landau-Ginzburg model itself will not contain integral charges, a $\Z_d$-{\it orbifold} of it~\cite{Dixon:1985jw}\ will.
What this means is to take a $\Z_d$ quotient of the theory, projecting the full Hilbert space onto the sub-sector invariant under the discrete subgroup $\Z_d\subset U(1)$.
The only states in $\cR^0_{cc}$ that survive this projection are guaranteed to have integral charges.
However, this quotienting procedure introduces new states, so-called {\it twisted states}, into the theory which are only quasiperiodic under $z\rightarrow \E^{2\pi\I}z$ up to $d$-th roots of unity.
In particular, a state which transforms as
$$
\Phi^i(\E^{2\pi\I}z) = \E^{2\pi\I kq_i}\Phi^i(z),
$$
for $k=1,2,\ldots, d-1$, is said to be in the {\it $k$-th twisted sector}.
When $kq_i\in \Z$, then $\Z_d$ invariant combinations of $\Phi^i$ generate new states in these twisted sectors.
However, for $kq_i\in\!\!\!\!\!\!\!/\,\Z$, then $\Phi^i$ is fixed at the origin and the $k$-th twisted sector only contains its ground state.
This twisting induces a charge for the groundstate in each of these twisted sectors~\cite{Vafa:1989xc}:
\be\label{eqn:charges}
\left(\sum_i\left( [kq_i]-\frac12\right)+\frac c6 \ ,\sum_i\left(- [kq_i]+\frac12\right)+\frac c6\right),
\ee
where $[x]$ denotes the fractional part of $x$.\footnote{The shift by $c/6$ comes about by spectral flow from the Ramond sector.}
When $n=c/3\in\Z$, then these groundstate charges are guaranteed to be integral~\cite{Vafa:1989xc}, as required.
Notice that the groundstate in the $k=d-1$ twisted sector has charges $(c/3,0)$, which corresponds to the holomorphic spectral flow operator $\cU_{1,0}$.
Using (the inverse of) this operator we can generate an isomorphic $\cR_{ac}$ ring by spectral flow, which was certainly not the case before performing the orbifold.
Thus, {\it given a Landau-Ginzburg model with superpotential $W(\Phi)$ of quasi-homogeneous degree $d$ and central charge $c=3n$ (for $n\in\Z$), we can construct a $\Z_d$-orbifold of the theory such that all charges are integral}.
In particular, the chiral rings of such Landau-Ginzburg orbifolds should exhibit the familiar Calabi-Yau structure we have some to expect.

There are many, many possible models that can be constructed in this manner. Let us focus on a particularly simple class to illustrate the essential points.
Assume all of the chiral superfields have the same weighting, so that $w_i=1$. Then, one simple way to satisfy requirements is to take $d$ superfields, $\Phi^i$, with a superpotential
$$
W(\Phi) = \sum_{i=1}^d (\Phi^i)^d.
$$
So all superfields have charge $q=1/d$, and we should expect to uncover the structure of some Calabi-Yau manifold of dimension
$$
n=\frac{c}{3} = \sum_{i=1}^d \left(1-\frac2d\right) = d-2.
$$
The first non-trivial case is $d=3$, where we should hope to discover a torus (once again). Let us see how this comes about in detail.
We begin with $\CC[\Phi^1,\Phi^2,\Phi^3]$, the polynomial ring on $\CC^3$, and quotient by the ideal $\langle(\Phi^1)^2,(\Phi^2)^2,(\Phi^3)^2\rangle$.
This gives the chiral ring
$$
\cR^0_{cc} = \{1,\Phi^1,\Phi^2,\Phi^3,\Phi^1\Phi^2,\Phi^2\Phi^3,\Phi^3\Phi^1,\Phi^1\Phi^3\Phi^3\}
$$
in the un-orbifolded theory. Most of these operators have fractional charges. The only operators that survive the orbifold process are just $\{1,\Phi^1\Phi^2\Phi^3\}$ with charges $(0,0)$ and $(1,1)$, respectively.
Finally, we must include the twisted sectors. Since $q_i=1/3$, only the groundstates contribute in the $k=1,2$ twisted sectors.
Using the formula~\C{eqn:charges}, we see that the $k=1,2$ twisted sectors generate chiral ring elements with charges
$$
(0,1)\quad {\rm and} \quad (1,0),
$$
respectively. Since $c=3$, these are just standard the spectral flow operators $\cU_{0,1}$ and $\cU_{1,0}$.
Altogether, the $\Z_3$-orbifold of the Landau-Ginzburg model with $q_1=q_2=q_3=1/3$ has the chiral ring
$$
\cR_{cc} = \{1, \cU_{1,0},\cU_{0,1},\Phi^1\Phi^2\Phi^3\}.
$$
Comparing charges, we see  that this is isomorphic to what we found above for the torus sigma-model.
Although the Landau-Ginzburg model was formulated without any explicit reference to a target geometry, perhaps the emergence of a torus could have been anticipated.
If we regard the fields $\Phi^i$ as the homogeneous coordinates of a $\P^2$, then the hypersurface
$$
\left\{(\Phi^1)^3 + (\Phi^2)^3 + (\Phi^3)^3 + \la \Phi^1\Phi^2\Phi^3 =0\right\}\subset \P^2
$$
defines a one-parameter family of elliptic curves (\ie\ tori). At $\la=0$ we have our original superpotential $W$, and we see quite clearly that this family of tori are connected by the unique chiral primary with charges $(1,1)$, as expected.

The above construction generalizes readily to more complicated Calabi-Yau manifolds. For example, following the exact same procedure as above but with $d=4$ leads to
the quartic $K3$ surfaces in $\P^3$, while taking $d=5$ generates the quintic three-folds in $\P^4$. By assigning non-uniform weights, $w_i$, to the fields it is possible to generate Calabi-Yau hypersurfaces in {\it weighted} projective space, $\P^{d-1}_{w_1,\ldots,w_d}$.
Allowing fields to carry multiple weights leads to hypersurfaces in products of (weighted) projective spaces, while including multiple superpotentials extends these constructions to complete intersection Calabi-Yaus in general toric varieties.

For years this so-called {\it Landau-Ginzburg/Calabi-Yau correspondence} was rather mysterious. In~\cite{Witten:1993yc}, Witten demonstrated that the Landau-Ginzburg and nonlinear sigma models are really just different limits of the same underlying SCFT.
His approach was to construct a two-dimensional gauge theory that flows to either limit depending on how some parameters are chosen.
In a sense, the Landau-Ginzburg models emerge in the limit of the sigma model when the volume of the Calabi-Yau manifold (formally) becomes negative.
Of course, we should not trust the geometric picture associated with the sigma model once the volume becomes sufficiently small.
A more precise statement would be that the moduli of the conformal field theory, which we associate with K\"ahler deformations at large volumes, become negative as we approach the Landau-Ginzburg point in the moduli space.
Once again, we see that the conformal field theory that lives on the string's worldsheet perceives geometry very differently from the way we might expect, and it able to make sense of seemingly singular spaces.

\subsection{Applications}\label{ss:summary2}
 Having spent all our time explaining how mirror symmetry arises in physics, we have no time left to explain its wealth of applications in detail.
Let us briefly comment on just two of them. Perhaps most famously, the authors of~\cite{Candelas:1990rm}\ used mirror symmetry to predict the number of rational curves of fixed degree in the quintic three-fold.
This problem in enumerative geometry, which depends of the manifold's K\"ahler structure, arises physically as a computation of instanton corrections, and is notoriously difficult.
However, on the mirror manifold this maps to a question involving the variation of Hodge structure, which turns out to be rather straightforward.

Another remarkable application of mirror symmetry is the demonstration that a change in the topology of spacetime, usually considered a rather violet procedure, can proceed smoothly in string theory.
The idea used~\cite{Aspinwall:1993nu}\ (see also~\cite{Witten:1993yc}) is to follow a topological transition from the point of view of the mirror manifold.
In the transition they study, the so-called {\it flop}, a two-cycle in the original Calabi-Yau shrinks to zero size, thereby producing a singular manifold which is then resolved (\ie\ {\it blown-up}) to produce a new, topologically distinct, (smooth) Calabi-Yau.
Such operations abound in the study of birational geometry. What~\cite{Aspinwall:1993nu}\ found is that the mirror of the flop is an innocuous change in the complex structure of the mirror Calabi-Yau.
In particular, the CFT on the mirror side is perfectly well-behaved, and so we must conclude that the same holds for the flop, despite the intermediary singular manifold.
Once again, we are amazed at how differently strings view spacetime as compared to point particles.

Now, at the very end of these notes, we have really only reached the starting point. We have developed the basic tools and language of (super-)conformal field theories, so that we could understand how mirror symmetry arises in that context:
namely, as an ambiguity in assigning a Calabi-Yau geometry to an $N=(2,2)$ SCFT.
We hope to have provided the reader with sufficient background that they may freely study the physics literature, to better understand some of the more recent applications and advances in this fascinating topic.



\end{document}